\newcommand{\norm}[1]{\left\lVert#1\right\rVert}
\pgfplotsset{compat=newest}
\pgfplotsset{plot coordinates/math parser=false}
\pgfplotsset{compat=newest}
\pgfplotsset{plot coordinates/math parser=false}
\newtheorem{lemma}{Lemma}
\newtheorem{theorem}{Theorem}
\newtheorem{assumption}{Assumption}
\newtheorem{remark}{Remark}
\newtheorem{proposition}{Proposition}
\newcommand{\WS}[1][\epsilon^w]{{\mathbb{W}(#1)}}
\newcommand{\A}{A}
\newcommand{\B}{B}
\newcommand{\C}{C}
\newcommand{\D}{D}
\newcommand{\R}{\mathbb{R}}
\newcommand{\Y}{\mathcal{Y}}
\newcommand{\X}{\mathcal{X}}
\newcommand{\1}{\mathbf{1}}
\newcommand{\0}{\mathbf{0}}
\newcommand{\I}{\mathbf{I}}
\definecolor{blue_set}{RGB}{204,229.5,255}
\definecolor{pink_set}{RGB}{255,204,229.5}
\definecolor{green_set}{RGB}{229.5,255,204}
\definecolor{grey_set}{RGB}{102,102,102}
\definecolor{cyan_set}{RGB}{69,243,248}
\definecolor{yellow_set}{RGB}{229.5000,229.5000,25.5000}
\definecolor{green_border}{RGB}{0,255,0}
\definecolor{set_blue}{RGB}{25.5000,153.0000,204.0000}
\definecolor{set_green}{RGB}{25.5000,153.0000,25.5000}
\definecolor{safe_set_color}{RGB}{159,2,81}
\definecolor{obtained_color}{RGB}{77,137,99}
\definecolor{red_google}{RGB}{255,62,48}
\definecolor{blue_google}{RGB}{23,107,239}
\definecolor{gray_google}{RGB}{107,107,107}
\definecolor{green_google}{RGB}{77,137,99}
\definecolor{green}{RGB}{0,122,0}
\definecolor{X2_clr}{RGB}{204,204,204}
\definecolor{Y1_clr}{RGB}{40,39,38}
\definecolor{Y2_clr}{RGB}{255,237,102}
\definecolor{W_clr}{RGB}{246,114,128}
\definecolor{X1_clr}{RGB}{0,116,63}
\definecolor{X2_clr}{RGB}{248,177,149}
\newcommand\scalemath[2]{\scalebox{#1}{\mbox{\ensuremath{\displaystyle #2}}}}
\definecolor{wheat}{rgb}{0.96,0.87,0.70}
\definecolor{mario}{rgb}{0.8,0.8,1}
\definecolor{SamComm}{rgb}{0.9,0.9,0.1}
\definecolor{ao}{rgb}{0.0, 0.5, 0.0}
\definecolor{olivegreen}{rgb}{ 0.2431,0.4941,0.2431}
\newtcbox{\mybox}[1][red]{on line,
	colback=#1, colframe=#1, boxsep=0pt, boxrule=0pt, size=small, arc=1mm}
\pgfplotsset{compat=1.8}
\definecolor{rulecolor}{RGB}{0,71,171}
\definecolor{tableheadcolor}{gray}{0.92}
\newcommand{\topline}{ %
	\arrayrulecolor{rulecolor}\specialrule{0.1em}{\abovetopsep}{0pt}%
	\arrayrulecolor{tableheadcolor}\specialrule{\belowrulesep}{0pt}{0pt}%
	\arrayrulecolor{rulecolor}}
\newcommand{\midtopline}{ %
	\arrayrulecolor{tableheadcolor}\specialrule{\aboverulesep}{0pt}{0pt}%
	\arrayrulecolor{rulecolor}\specialrule{\lightrulewidth}{0pt}{0pt}%
	\arrayrulecolor{white}\specialrule{\belowrulesep}{0pt}{0pt}%
	\arrayrulecolor{rulecolor}}
\newcommand{\bottomline}{ %
	\arrayrulecolor{white}\specialrule{\aboverulesep}{0pt}{0pt}%
	\arrayrulecolor{rulecolor} %
	\specialrule{\heavyrulewidth}{0pt}{\belowbottomsep}}%
\newcommand{\mario}[1]{
	\begin{center}
		\fcolorbox{mario}{mario}{\parbox[t]{0.9\linewidth}{\textbf{Mario:} #1}}
\end{center}}
\title{\LARGE \bf
	Computation of safe disturbance sets using implicit RPI sets
}
\author{Sampath Kumar Mulagaleti, Alberto Bemporad, and Mario Zanon 
	\thanks{The authors are with
		the IMT School for Advanced Studies Lucca, Piazza San Francesco 19,
		55100 Lucca, Italy. 
		Email: 
		{\tt\small s.mulagaleti@imtlucca.it}}
}
\begin{document}

	\maketitle
	\thispagestyle{empty}
	\pagestyle{empty}
	
	
	\begin{abstract}
		Given a stable linear time-invariant (LTI) system subject to output constraints, we present a method to compute a set of disturbances such that the reachable set of outputs 
		matches as closely as possible the output constraint set, while being included in it. This problem finds application in several control design problems, such as the development of hierarchical control loops, decentralized control, supervisory control, robustness-verification, etc. We first characterize the set of disturbance sets satisfying the output constraint inclusion using corresponding minimal robust positive invariant (mRPI) sets, following which we formulate an optimization problem that minimizes the distance between the reachable output set and the output constraint set. We tackle the optimization problem using an implicit RPI set approach that provides a priori approximation error guarantees, and adopt a novel disturbance set parameterization that permits the encoding of the set of feasible disturbance sets as a polyhedron. Through extensive numerical examples, we demonstrate that the proposed approach computes disturbance sets with reduced conservativeness improved computational efficiency than state-of-the-art methods.
	\end{abstract}

		\section{Introduction}
		The theory of set invariance provides key tools for the synthesis and analysis of control schemes for dynamical systems~\cite{Blanchini2015}. These tools can be used to construct robust positive invariant (RPI) sets for uncertain dynamical systems, i.e., regions of the state-space in which the system remains for all future times, for all possible uncertainty realizations. RPI sets find application in the design of control schemes for constrained uncertain dynamical systems, such as Robust model predictive control (RMPC) and reference governor schemes \cite{Rawlings2009,Kouvaritakis2015,Garone2017}, fault-tolerant control \cite{Olaru2008,Olaru2010}. Consequently, the development of methods to construct RPI sets is a very active area of research, see, e.g., \cite{Gilbert1991,Kolmanovsky1998,Pluymers2005,Rakovic2005,Rakovic2006,Rakovic2007,Trodden2016}. 
		
		Most existing RPI-based approaches are developed under the assumption that the set of disturbances characterizing the uncertainty set of the system is known a priori.
		In many practical cases, however, while the set of admissible states can be estimated from sensor measurements or given a priori as a set of constraints to be satisfied, the set of disturbances acting on the system might not be related to process noise stemming from unknown dynamics, but it might be possible to use the available information to suitably design it. A relevant example is given by decentralized MPC (DeMPC) application such as~\cite{Riverso2013,Mulagaleti2021}, where the dynamic coupling between subsystems is modeled as an additive disturbance, such that the disturbance set on a given system represents the state-constraint sets of the neighboring subsystems. Another example is presented in \cite{Flores2008}, in which the feedforward reference signal is a disturbance acting on the system, and one needs to design a set of reference signals that permit safe closed-loop operation. Such disturbance set design problems can be appropriately tackled using RPI sets, and we present one such method in this paper using implicit RPI sets.
		
		Given a disturbance set, there exist many techniques to compute RPI sets. Of particular interest in our case is the minimal RPI (mRPI) set, which is the smallest RPI set included in all RPI sets corresponding to a given disturbance set \cite{Blanchini2015}. Since it is generally impossible to obtain an explicit representation of this set for linear systems affected by additive disturbances except under very restrictive assumptions \cite{Seron2019}, many methods typically aim to compute tight RPI outer approximations of this set \cite{Rakovic2005,Rakovic2006,Rakovic2013,Trodden2016,Martinez2015,Tan2019}. There also exist methods to simultaneously synthesize both volume-minimized RPI sets and the corresponding  feedback controllers \cite{Liu2019,Tahir2015,DeLaPena2008,Mulagaleti2022}.
		
		In this paper, we present a method to compute the largest disturbance set for a constrained linear system that
		ensures that the system constraint set is maximally covered by the system reachable set under persistent disturbances; and provides safety guarantees. We exploit the fact that the mRPI set is also the $0$-reachable set \cite{Rakovic2005},
		such that a large disturbance set is one that minimizes the distance between the mRPI set and the system constraint set.
		%
		We previously tackled this problem setup in~\cite{Mulagaleti2020}, in which we considered a polytopically parametrized disturbance set, along with a polytopically parameterized RPI set to tightly approximate the corresponding mRPI set.  We assumed that the normal vectors defining these sets are fixed a priori. 
		We then formulated an optimization problem involving support functions over polytopes to compute a suitable disturbance set. While the approach could compute disturbance sets of an arbitrary shape, the main limitations were that the set of feasible disturbance sets was nonconvex and nonsmooth, thus requiring the development of a specialized optimization algorithm to compute the disturbance sets~\cite{Mulagaleti2022_SmootheningPDIP},
		and the approximation error with respect to the mRPI set could not be specified a priori. In this paper, we present an alternative approach to tackle the problem.
		{\color{black} The key differentiating aspects of the new approach are: (a) adopting the $\mu$-RPI approximation of the mRPI set proposed in~\cite{Rakovic2006} and permitting the approximation error $\mu$ to be specified a priori, and b) parameterizing the disturbance set as a convex hull of boxes, overcoming the limitation having to define the normal vectors of the set a priori. These two choices together allow the expression of the set of feasible disturbance sets using linear inequalities that are easy to resolve.}
		%
		%
		Finally, the optimization problem that we formulate is smooth and hence can be tackled using off-the-shelf nonlinear programing solvers. Exploiting the structure of the problem, we present an initialization technique based on Linear Programing, and also present several practical heuristics that we demonstrate to be effective through numerical examples.
		In our methods, we do not explicitly compute a representation RPI set, thus referring to the current approach as \textit{implicit}. 
		
		{\color{black} A few approaches have been proposed previously for computing disturbance sets for constrained linear systems. In the method proposed in~\cite{DeSantis1994}, the largest disturbance set that renders a known positive invariant set robust is computed. A similar approach is used in~\cite{Kalabic2012} for synthesizing reduced-order reference governors, where a positive invariant set is first computed and then made RPI with the maximal disturbance set. In contrast, our method co-synthesizes an RPI set (implicitly) and the corresponding disturbance set (explicitly) using the system constraint set.
			An alternative approach, proposed in~\cite{Darup2016,Yang2020}, characterizes the largest scaling factor for a given disturbance set that ensures the existence of a constraint-satisfying RPI set. In contrast, our approach synthesizes arbitrarily-shaped polytopic disturbance sets, resulting in reduced conservativeness.}
		%
		%
		
		The paper is organized as follows. In Section~\ref{sec:prob_def}, we define the problem tackled, and highlight the issues associated with solving it exactly. We then develop a suitable approximation of the problem in Section~\ref{sec:implicit_RPI_set} using implicit RPI sets, and present an optimization algorithm to solve the resulting problem in Section~\ref{sec:optimization_algorithm}. In Section~\ref{sec:approx_soln_methods}, we present heuristic-based LP approximations of the problem formulation.
		{\color{black} In Section~\ref{sec:numerical_examples}, we first present an illustrative example, and then compare our method with the approaches of~\cite{Mulagaleti2020} and~\cite{DeSantis1994} over randomly generated examples. Finally, we apply the method to synthesize constraint sets for reduced-order control schemes.}
		
		\section{Preliminaries} 
		Consider sets $\mathcal{X}, \mathcal{Y} \subset \R^{n}$, and vectors $a \in \R^{n_a}$ and $b \in \R^{n_b}$.
		Given a matrix $L\in \R^{m \times n}$, we denote by $L\mathcal{X}$ the image $\{y\in\R^{m}: y=Lx, x \in \mathcal{X}\}$ of $\mathcal{X}$ under the linear transformation induced by $L$, and $|L|$
		denotes the matrix obtained by element-wise application of the absolute value operator on matrix $L$.
		We denote the $i$-th row of matrix $L$ by $L_i$, an element in row $i$ and column $j$ by $L_{ij}$.
		Given a square matrix $L \in \R^{n \times n}$, $\rho(L)$ denotes its spectral radius. 
		The set $\mathcal{B}_p^n:=\{x:\norm{x}_{p}\leq 1\}$ is the unit $p$-norm ball. 
		Given two matrices $L$, $M\in \R^{n \times m}$, $L \leq M$ denotes element-wise inequality. 
		The symbols $\1$ and $\0$ denote all-ones and all-zeros vectors respectively, and $\I$ denotes the identity matrix with dimensions specified if the context is ambiguous.
		The set of natural numbers between two integers $m$ and $n$, $m\leq n$, is denoted by $\mathbb{I}_m^n:=\{m,\ldots,n\}$.
		The Minkowski set addition is defined as $\mathcal{X} \oplus \mathcal{Y}:=\{x+y:x\in\mathcal{X},y\in\mathcal{Y}\}$, {\color{black} and set subtraction as $\mathcal{X} \ominus \mathcal{Y}:=\{x:\{x\} \oplus \mathcal{Y} \subseteq \mathcal{X}\}$. If $\mathcal{Y} \nsubseteq \mathcal{X}$, then $\mathcal{X} \ominus \mathcal{Y}$ is the empty set.}
		Given a polytope $\mathcal{X}$, we denote its vertices by $\mathrm{vert}(\mathcal{X})$, and given compact sets $\{\mathcal{X}_i, i \in \mathbb{I}_1^N\}$, we denote the convex-hull of these sets by $\mathrm{ConvHull}(\mathcal{X}_i, i \in \mathbb{I}_1^N)$.
	Given compact set $\mathcal{S} \subset \R^{n}$, matrix $\mathbf{T} \in \R^{l \times n}$ and vector  $\mathbf{p} \in \R^{l\times 1}$, the support function is defined as
	\begin{align}
		\label{eq:SF_prop_1}
		h_{\mathbf{T}\mathcal{S}}(\mathbf{p}):=\max_{z \in \mathbf{T}\mathcal{S}} \ \mathbf{p}^{\top} z=\max_{w \in \mathcal{S}} \ \mathbf{p}^{\top} \mathbf{T} w.
	\end{align}
	For compact sets $\mathcal{S}_1,\mathcal{S}_2 \subset \R^n$, support functions satisfy \cite{Kolmanovsky1998}
	\begin{align}
		\label{eq:SF_prop_3}
		\hspace{-6pt}
		h_{\mathcal{S}_1 \oplus \mathcal{S}_2}(\mathbf{p})=h_{\mathcal{S}_1}(\mathbf{p})+h_{\mathcal{S}_2}(\mathbf{p}),
	\end{align}
	and the inclusion $\mathcal{S}_1 \subseteq \mathcal{S}_2$ holds if and only if
	\begin{align}
		\label{eq:set_inclusion_generic_compact}
		h_{\mathcal{S}_1}(\mathbf{p}) \leq h_{\mathcal{S}_2}(\mathbf{p}), && \forall \ \mathbf{p} \in \R^n.
	\end{align}
	Given a matrix $\mathbf{M} \in \R^{q \times n}$, we denote by  $h_{\mathbf{T}\mathcal{S}}(\mathbf{M})$ the $q$-dimensional column vector
	with elements $h_{\mathbf{T}\mathcal{S}}(\mathbf{M}_i^{\top})$, i.e.,
	$$h_{\mathbf{T}\mathcal{S}}(\mathbf{M}):=[h_{\mathbf{T}\mathcal{S}}(\mathbf{M}_1^{\top})  \ \cdots \ h_{\mathbf{T}\mathcal{S}}(\mathbf{M}_q^{\top})]^{\top}.$$
	For polytope $\mathcal{R}:=\{w:\mathbf{M} w \leq \mathbf{t}\}$ and compact set $\mathcal{S}$ in $\R^n$, 
	\begin{align}
		\label{eq:SF_prop_2}
		\mathcal{S} \subseteq \mathcal{R} \iff h_{\mathcal{S}}(\mathbf{M}) \leq h_{\mathcal{R}}(\mathbf{M}) \leq \mathbf{t},
	\end{align}
	Given $\mathcal{S}=\bar{z}\oplus\{z:-\epsilon^z \leq z \leq \epsilon^z\}$ shaped as a box with $\bar{z},\epsilon^z \in \R^{n}$, the following property holds for the support function \cite[Chapter 6]{Blanchini2015}:
	\begin{align}
		\label{eq:box_SF}
		h_{\mathbf{T}\mathcal{S}}(\mathbf{p})= \mathbf{p}^{\top} \mathbf{T}\bar{z} + |\mathbf{p}^{\top} \mathbf{T}|\epsilon^z.
	\end{align}
	We recall some basic properties of set operations.
	\begin{proposition}\cite{Schneider2013}
		\label{prop:basic_properties}
		Let $\mathcal{X},\mathcal{Y},\mathcal{Z} \subset \R^n$ be any compact and convex sets containing the origin, $M \in \R^{m \times n}$ be any matrix of adequate dimension, and $\alpha\geq \beta>0$ be any scalars. 
		\\ ($a$) $\mathcal{X} \subseteq \mathcal{Y} \Rightarrow M\mathcal{X} \subseteq M\mathcal{Y}$; \ \ ($b$) $\mathcal{X} \subseteq \mathcal{Y} \Leftrightarrow \mathcal{X} \oplus \mathcal{Z} \subseteq \mathcal{Y} \oplus \mathcal{Z}$; \\ ($c$) $\alpha \mathcal{X} \oplus \beta \mathcal{X} = (\alpha+\beta) \mathcal{X}$; \ \ ($d$) $\mathcal{X} \subseteq \mathcal{X} \oplus \mathcal{Y}$; \\
		$(e)$ If $\mathcal{X}=\mathrm{ConvHull}(x_{[i]},i \in \mathbb{I}_1^{v})$, then $\mathcal{X} \subseteq \mathcal{Y} \oplus \mathcal{Z}$ holds if and only if for each $i \in \mathbb{I}_1^v$, there exist $y_{[i]} \in \mathcal{Y}$ and $z_{[i]} \in \mathcal{Z}$ such that $x_{[i]}=y_{[i]}+z_{[i]}$.
	\end{proposition}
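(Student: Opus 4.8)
The plan is to dispatch parts (a), (c), (d), and the easy directions of (b) and (e) by direct manipulation of the Minkowski-sum and convex-hull definitions, reserving the support-function machinery recalled in \eqref{eq:SF_prop_3} and \eqref{eq:set_inclusion_generic_compact} for the only genuinely nontrivial claim: the reverse (cancellation) direction of (b).

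For (a), I would take an arbitrary $y \in M\mathcal{X}$, write $y=Mx$ with $x \in \mathcal{X} \subseteq \mathcal{Y}$, and conclude $y \in M\mathcal{Y}$ straight from the definition of the image. For (d), since every set is assumed to contain the origin, any $x \in \mathcal{X}$ satisfies $x = x + \0 \in \mathcal{X}\oplus\mathcal{Y}$. For (c), the inclusion $(\alpha+\beta)\mathcal{X} \subseteq \alpha\mathcal{X}\oplus\beta\mathcal{X}$ is immediate from $(\alpha+\beta)x = \alpha x + \beta x$; the reverse inclusion is where convexity enters, as I would rewrite a generic element $\alpha x_1 + \beta x_2$ with $x_1,x_2 \in \mathcal{X}$ as $(\alpha+\beta)\bigl(\tfrac{\alpha}{\alpha+\beta}x_1 + \tfrac{\beta}{\alpha+\beta}x_2\bigr)$ and note that, since $\alpha,\beta>0$, the inner point is a convex combination of elements of $\mathcal{X}$ and hence lies in $\mathcal{X}$.

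For (b), the forward direction follows element-wise: any $x+z$ with $x \in \mathcal{X}\subseteq\mathcal{Y}$ and $z \in \mathcal{Z}$ lies in $\mathcal{Y}\oplus\mathcal{Z}$. The reverse direction is the cancellation law, and I expect it to be the main obstacle, since one cannot simply ``subtract'' $\mathcal{Z}$ set-theoretically. Here I would pass to support functions: by \eqref{eq:set_inclusion_generic_compact} together with the additivity \eqref{eq:SF_prop_3}, the inclusion $\mathcal{X}\oplus\mathcal{Z} \subseteq \mathcal{Y}\oplus\mathcal{Z}$ is equivalent to $h_{\mathcal{X}}(\mathbf{p}) + h_{\mathcal{Z}}(\mathbf{p}) \le h_{\mathcal{Y}}(\mathbf{p}) + h_{\mathcal{Z}}(\mathbf{p})$ for all $\mathbf{p}$. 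Because $\mathcal{Z}$ is compact, $h_{\mathcal{Z}}(\mathbf{p})$ is finite and can be cancelled, leaving $h_{\mathcal{X}}(\mathbf{p}) \le h_{\mathcal{Y}}(\mathbf{p})$ for all $\mathbf{p}$, which by \eqref{eq:set_inclusion_generic_compact} is exactly $\mathcal{X} \subseteq \mathcal{Y}$. The compactness of $\mathcal{Z}$ is the crucial hypothesis making this cancellation valid.

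Finally, for (e), the forward direction is immediate since each vertex $x_{[i]} \in \mathcal{X} \subseteq \mathcal{Y}\oplus\mathcal{Z}$ admits a decomposition $x_{[i]} = y_{[i]} + z_{[i]}$. For the converse, I would take an arbitrary $x \in \mathcal{X}$, express it as a convex combination $x = \sum_i \lambda_i x_{[i]}$, and substitute the decompositions to get $x = \sum_i \lambda_i y_{[i]} + \sum_i \lambda_i z_{[i]}$; convexity of $\mathcal{Y}$ and of $\mathcal{Z}$ then places the two sums in $\mathcal{Y}$ and $\mathcal{Z}$ respectively, so $x \in \mathcal{Y}\oplus\mathcal{Z}$. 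In summary, every part is routine except the cancellation in (b), which is precisely why the support-function characterization and the compactness assumption are set up beforehand.
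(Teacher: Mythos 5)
Your proof is correct in all five parts. Note, however, that the paper does not prove this proposition at all: it is stated as a known result cited from the reference on convex bodies (Schneider, 2013), so there is no in-paper argument to compare against; your write-up is a valid self-contained substitute. The one genuinely delicate step is exactly the one you isolated, the cancellation direction of ($b$): your support-function argument is the standard R{\aa}dstr\"om cancellation proof, and it is sound here because compactness of $\mathcal{Z}$ makes $h_{\mathcal{Z}}(\mathbf{p})$ finite for every $\mathbf{p}$ (so it can be subtracted), while compactness \emph{and convexity} of $\mathcal{Y}$ are what make the pointwise bound $h_{\mathcal{X}} \leq h_{\mathcal{Y}}$ equivalent to $\mathcal{X} \subseteq \mathcal{Y}$ via~\eqref{eq:set_inclusion_generic_compact} --- the paper states~\eqref{eq:set_inclusion_generic_compact} for compact sets, but the ``if'' direction of that equivalence implicitly needs the outer set to be convex, which the hypotheses of the proposition guarantee. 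Your proofs of ($c$) and ($e$) also correctly pinpoint where convexity is used, and ($a$), ($d$), and the forward directions are routine as you say.
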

	\section{Problem Definition}
	\label{sec:prob_def}
	Consider the linear time-invariant discrete-time system
	\begin{subequations}
		\label{eq:system}
		\begin{align} 
			{\scalemath{1}{x(t+1)}}&={\scalemath{0.9}{\A x(t)+\B w(t),}} \label{eq:system:state}\\
			{\scalemath{1}{y(t)}}&={\scalemath{0.9}{\C x(t)+\D w(t),}} \label{eq:system:output}
		\end{align}
	\end{subequations}
	with state $x\in\R^{n_x}$, output $y\in \R^{n_y}$ and  additive disturbance $w\in\R^{n_w}$.
	We assume that a set of output constraints is given:
	\begin{align}
		\label{eq:Y_orig_defn}
		\Y:=\{y:Gy \leq g\} && \text{with} \ g \in \R^{m_Y}.
	\end{align}
	We define the reachable set of states from the origin, i.e., from $x(0)=\0$, under the action of disturbances $w(t) \in \mathcal{W}$ for all $t \geq 0$ in $t$-time steps as
	\begin{align*}
		\mathcal{X}(t,\mathcal{W}):=\left\{x: x=\sum_{k=0}^{t-1} A^{t-k-1} B w(k), \ \forall \ w(k) \in \mathcal{W}\right\},
	\end{align*}
	and the corresponding set of $t$-step reachable outputs as
	\begin{align*}
		\mathcal{Y}(t,\mathcal{W}):=C\mathcal{X}(t,\mathcal{W}) \oplus D \mathcal{W}.
	\end{align*}
	Observing that the reachable set of states satisfies the inclusion
	\begin{align*}
		\mathcal{X}(t,\mathcal{W}) \subseteq \mathcal{X}(t+1,\mathcal{W}), && \forall \ t\geq 0,
	\end{align*}
	if the disturbance set $\mathcal{W}$ is compact, convex, and contains the origin, we define the limit of reachable set of states as
	\begin{align}
		\label{eq:mRPI_limit_definition}
		\mathcal{X}_{\mathrm{m}}(\mathcal{W}):=\lim_{t \to \infty} \mathcal{X}(t,\mathcal{W}),
	\end{align}
	and the corresponding limit set of reachable outputs as
	\begin{align}
		\label{eq:Ym_mRPI_defn}
		\mathcal{Y}_{\mathrm{m}}(\mathcal{W}):=\C \mathcal{X}_{\mathrm{m}}(\mathcal{W})\oplus \D \mathcal{W}.	
	\end{align}
	Then, our goal is to compute a disturbance set $\mathcal{W}$ that satisfies
	\begin{align}
		\label{eq:ideal_reachability}
		{\color{black}	\mathcal{Y}_{\mathrm{m}}(\mathcal{W})=\mathcal{Y},}
	\end{align}
	i.e., the reachable set of outputs is equal to the assigned set of outputs $\mathcal{Y}$.
	Satisfying the equality in~\eqref{eq:ideal_reachability} exactly may not be feasible, as $\mathcal{Y}$ is user-specified and can have an arbitrary shape. See~\cite{Mulagaleti2020} for further details. Hence, we instead focus on computing a disturbance set $\mathcal{W}$ that satisfies the inclusion.
	\begin{align}
		\label{eq:safety_inclusion}
		\mathcal{Y}_{\mathrm{m}}(\mathcal{W}) \subseteq \mathcal{Y},
	\end{align}
	and minimizes the distance between the reachable set of outputs $	\mathcal{Y}_{\mathrm{m}}(\mathcal{W})$ and the assigned output set $\mathcal{Y}$. To this end, tackle the optimization problem
	\begin{subequations}
		\label{eq:orig_problem_to_solve}
		\begin{align}
			\min_{\mathcal{W}}\hspace{5pt} & \mathrm{d}_{\mathcal{Y}}(\mathcal{Y}_{\mathrm{m}}(\mathcal{W})) 	\label{eq:orig_problem_to_solve:obj}\\ \text{ s.t. } & \mathcal{Y}_{\mathrm{m}}(\mathcal{W}) \subseteq \mathcal{Y}, 	\label{eq:orig_problem_to_solve:con_1}
			\\ & \0 \in \mathcal{W}, \label{eq:orig_problem_to_solve:con_2}
		\end{align}
	\end{subequations}
	where $\mathrm{d}_{\mathcal{Y}}(\mathcal{Y}_{\mathrm{m}}(\mathcal{W}))$ measures the disturbance between the sets $\mathcal{Y}$ and $\mathcal{Y}_{\mathrm{m}}(\mathcal{W})$, and Constraint~\eqref{eq:orig_problem_to_solve:con_1} enforces the desired inclusion in~\eqref{eq:safety_inclusion}.The \textit{outer-inclusion} problem with constraint $\mathcal{Y}_{\mathrm{m}}(\mathcal{W}) \supseteq \mathcal{Y}$ can be solved with minor modifications to the techniques presented in the sequel. However, for simplicity of exposition, we discuss the \textit{inner-inclusion} setting in Problem~\eqref{eq:orig_problem_to_solve}.
	{\color{black} We present effective approaches for tackling Problem~\eqref{eq:orig_problem_to_solve} by focusing on developing suitable parameterizations and approximations of the disturbance set $\mathcal{W}$ and the associated output reachable set $\mathcal{Y}_{\mathrm{m}}(\mathcal{W})$. By doing so, we can reformulate Problem~\eqref{eq:orig_problem_to_solve} into a form that can be tackled using standard numerical optimization solvers.
		\begin{remark}
			Problem~\eqref{eq:orig_problem_to_solve} can handle scenarios with independent process and measurement disturbances by selecting matrices $C$ and $D$ appropriately, and defining $\mathcal{W}$ as the Cartesian product of these disturbance sets. $\hfill\square$
		\end{remark}
	}	
	Regarding the distance function $\mathrm{d}_{\mathcal{Y}}(\mathcal{Y}_{\mathrm{m}}(\mathcal{W}))$, a classical choice is to use the Hausdorff distance between the sets $\mathcal{Y}_{\mathrm{m}}(\mathcal{W})$ and $\mathcal{Y}$~\cite{Mulagaleti2020}. In this paper, we consider a slightly more general formulation with	
	\begin{align}
		\label{eq:objective_exact}
		\mathrm{d}_{\mathcal{Y}}(\mathcal{Y}_{\mathrm{m}}(\mathcal{W})) := \min_{\epsilon} \{\norm{\epsilon}_1 : \mathcal{Y} \subseteq \mathcal{Y}_{\mathrm{m}}(\mathcal{W}) \oplus \mathbb{B}(\epsilon) \}
	\end{align}
	defined using the set $\mathbb{B}(\epsilon):=\{y:Hy \leq \epsilon\}$ {\color{black}with $H \in \R^{n_B \times n_y}$ and $\epsilon \in \R^{n_B}$}, in which normal vectors $\{H_i^{\top}\}$ are specified a priori by the user.
	Since $\mathrm{d}_{\mathcal{Y}}(\cdot)$ is monotonic, i.e, for all compact sets $\mathbf{S}_1,\mathbf{S}_2 \subseteq \mathcal{Y}$,
	\begin{align}
		\label{eq:monotonicty_property}
		\mathbf{S}_1 \subseteq \mathbf{S}_2 \subseteq \mathcal{Y}  \implies \mathrm{d}_{\mathcal{Y}}(\mathbf{S}_1) \geq \mathrm{d}_{\mathcal{Y}}(\mathbf{S}_2),
	\end{align}
	Problem~\eqref{eq:orig_problem_to_solve} computes a disturbance set $\mathcal{W}$ that maximizes the coverage of $\mathcal{Y}$ by the reachable outputs while enforcing inclusion~\eqref{eq:safety_inclusion}.
	Moreover, it prioritizes coverage of $\mathcal{Y}$ by the output reachable set $\mathcal{Y}_{\mathrm{m}}(\mathcal{W})$ in directions indicated by the normal vectors $H_i^{\top}$ to the set $\mathbb{B}(\epsilon)$. {\color{black} The key difference between using the standard Hausdorff distance and this method is that these normal vectors are not restricted to a symmetric set $\mathbb{B}(\epsilon)$.}

	{\color{black} \begin{remark}
			A typical choice for matrix $H$ would be to define it using the normal vectors of some $p$-norm ball.
			In some applications however, only constraint enforcement is desired in some output directions, while maximal reachability in the others. This can be achieved by selecting the normal vectors of $\mathbb{B}(\epsilon)$ appropriately, e.g., $\mathbb{B}(\epsilon)$ can be parameterized as a reduced-dimensional polytope in $\R^{n_y}$. $\hfill\square$
		\end{remark} 
	\begin{remark}
		In Problem~\eqref{eq:orig_problem_to_solve}, alternative objective functions such as the maximization of the volume of $\mathcal{W}$ can also be explored, the encoding of which relies heavily on the disturbance set parametrization. The investigation and development of such approaches, and comparison with the methodology proposed in this paper is a subject of future study.
	$\hfill\square$
		\end{remark}
	}

	Before tackling Problem~\eqref{eq:orig_problem_to_solve}, we observe that the reachable state set $\mathcal{X}_{\mathrm{m}}(\mathcal{W})$ in~\eqref{eq:mRPI_limit_definition} is given by the infinite Minkowski sum
	\begin{align}
		\label{eq:mRPI}
		\X_{\mathrm{m}}(\mathcal{W})= \bigoplus_{t=0}^{\infty} \A^t\B\mathcal{W}.
	\end{align}
	This implies that, except under very restrictive assumptions on $(A,B,\mathcal{W})$~\cite{Seron2019}, computing an exact finite-dimensional representation of the set $\mathcal{X}_{\mathrm{m}}(\mathcal{W})$, and hence $\mathcal{Y}_{\mathrm{m}}(\mathcal{W})$, is in general impossible. Thus, Problem~\eqref{eq:orig_problem_to_solve} is in general impossible to solve exactly. This can, however, be ameliorated by adopting the notion of RPI sets, as we now explain. To this end, we make the following standing assumption of 
	System~\eqref{eq:system}.
	\begin{assumption}
		\label{ass:stable}
		System \eqref{eq:system} is strictly stable, i.e., $\rho(A)<1$. $\hfill\square$
	\end{assumption} 
	
	Under Assumption~\ref{ass:stable}, we know from~\cite{Kolmanovsky1998} that if the disturbance set $\mathcal{W}$ is compact, convex and contains the origin, then $\mathcal{X}_{\mathrm{m}}(\mathcal{W})$ and $\mathcal{Y}_{\mathrm{m}}(\mathcal{W})$ exist, are compact, convex, and contain the origin. Moreover, as shown in~\cite{Kolmanovsky1998}, $\mathcal{X}_{\mathrm{m}}(\mathcal{W})$ is the smallest (in an inclusion sense) RPI set of System~\eqref{eq:system:state}, i.e., if a set $\mathcal{X}_{\mathrm{RPI}}(\mathcal{W})$ satisfies the RPI inclusion
	\begin{align}
		\label{eq:RPI_inclusion}
		A \mathcal{X}_{\mathrm{RPI}}(\mathcal{W}) \oplus B \mathcal{W} \subseteq \mathcal{X}_{\mathrm{RPI}}(\mathcal{W}),
	\end{align}
	then the inclusion $\mathcal{X}_{\mathrm{m}}(\mathcal{W}) \subseteq  \mathcal{X}_{\mathrm{RPI}}(\mathcal{W})$  follows.
	Hence, 	$\mathcal{X}_{\mathrm{m}}(\mathcal{W})$ is also referred to as the minimal RPI (mRPI) set. Then, defining a set of outputs corresponding to some RPI set $\mathcal{X}_{\mathrm{RPI}}(\mathcal{W})$ as $	\mathcal{Y}_{\mathrm{RPI}}(\mathcal{W}):=C\mathcal{X}_{\mathrm{RPI}}(\mathcal{W}) \oplus D\mathcal{W} \supseteq \mathcal{Y}_{\mathrm{m}}(\mathcal{W}),$
	the desired output constraint inclusion in~\eqref{eq:safety_inclusion} formulating Constraint~\eqref{eq:orig_problem_to_solve:con_1} can be enforced through
	\begin{align}
		\label{eq:RPI_output_basic}
		\mathcal{Y}_{\mathrm{RPI}}(\mathcal{W}) \subseteq \mathcal{Y}.
	\end{align}
	Thus, a typical approach to tackle Problem~\eqref{eq:orig_problem_to_solve} involves replacing the output reachable set $\mathcal{Y}_{\mathrm{m}}(\mathcal{W})$ with some outer-approximating set $\mathcal{Y}_{\mathrm{RPI}}(\mathcal{W})$ defined using some suitable finite-dimensional RPI set $\mathcal{X}_{\mathrm{RPI}}(\mathcal{W})$. 
	
	Given a disturbance set $\mathcal{W}$, there exist many techniques, e.g., \cite{Rakovic2005,Rakovic2006,Rakovic2013,Trodden2016}, to compute tight outer RPI approximations of the mRPI set. Thus, our contribution is directed towards embedding such techniques within an optimization problem framework for concurrent disturbance set and RPI set computation. 
	In \cite{Mulagaleti2020}, we adopted a polytopic RPI set parameterization with fixed normal vectors from \cite{Rakovic2013,Trodden2016} to approximate the mRPI set in Problem~\eqref{eq:orig_problem_to_solve}, and also parameterized the disturbance set as a polytope with fixed normal vectors. While this approach permitted the computation of disturbance sets of an arbitrary shape, it did not allow for a priori specification of approximation error bounds and resulted in a nonconvex set of disturbance sets verifying inclusion~\eqref{eq:RPI_output_basic}. In this paper, we propose a new methodology that adopts the RPI set parameterization from \cite{Rakovic2005,Rakovic2006} to approximate the mRPI set, and a novel disturbance set parameterization based on convex hull of boxes. This approach allows us to represent sets of arbitrary complexity without the need for prior specification of normal vectors, and the RPI set parameterization lets us fix the approximation error a priori. The resulting set of disturbance sets verifying inclusion~\eqref{eq:RPI_output_basic} is represented using linear inequalities. However, we make the slightly restrictive assumption that the output constraint set $\mathcal{Y}$ is more restrictive than necessary in~\cite{Mulagaleti2020}.
	\begin{assumption}
		\label{ass:0_in_Y_set}
		The output constraint set $\mathcal{Y}$ contains the origin in its nonempty interior, such that $g>\0$.  $\hfill\square$
	\end{assumption}
	We refer to our approach as using \textit{implicit} RPI sets, since
	unlike in~\cite{Mulagaleti2020}, we do not explicitly compute a representation of the RPI set, but rather use an implicit representation parameterized by the disturbance set.
	\begin{remark}
		The developments can be adapted for output sets $\mathcal{Y}$ of the form $\bar{y}_0 \oplus \bar{\mathcal{Y}}$, with $\bar{\mathcal{Y}}$ being a polytope containing the origin in its nonempty interior and $\bar{y}_0$ such that $\mathcal{Y}$ does not contain the origin, violating Assumption~\ref{ass:0_in_Y_set}. The disturbance set $\mathcal{W}$ can then be represented as $\bar{w}_0 \oplus \bar{\mathcal{W}}$, with $\bar{w}_0$ and some $\bar{x}_0$ satisfying the steady-state condition $\bar{x}_0 = A\bar{x}_0 + B \bar{w}_0$ and $\bar{y}_0=C\bar{x}_0 + D\bar{w}_0$.
		%
		The values of $(\bar{x}_0,\bar{w}_0)$ can be pre-computed by solving these conditions, and $\bar{\mathcal{W}}$ can be computed by Problem~\eqref{eq:orig_problem_to_solve} by replacing $\mathcal{Y}$ with $\bar{\mathcal{Y}}$. Alternatively, $(\bar{x}_0,\bar{w}_0)$ can be computed along with $\bar{\mathcal{W}}$  by introducing the steady-state conditions as constraints into the problem.  $\hfill\square$
	\end{remark}
\section{Implicit RPI set parametrization}
\label{sec:implicit_RPI_set}
%
%
In this section, we approximate Problem~\eqref{eq:orig_problem_to_solve} using arbitrarily tight RPI approximations of the mRPI set $\mathcal{X}_{\mathrm{m}}(\mathcal{W})$. To that end, we will rely on the following definition.

\noindent
{\textbf{\textit{Definition}}}:
Given a disturbance set $\mathcal{W}$, for a given $\mu>0$, an RPI set $\mathcal{X}_{\mathrm{\mu}}(\mathcal{W})$ is a $\mu$-RPI set for System \eqref{eq:system:state} if 
it satisfies the inclusions 
\begin{align}
\hspace{20pt}
\label{eq:muRPI_inclusions}
\mathcal{X}_{\mathrm{m}}(\mathcal{W}) \ \subseteq \  
\mathcal{X}_{\mu}(\mathcal{W}) \ \subseteq \	\mathcal{X}_{\mathrm{m}}(\mathcal{W}) \oplus \mu \mathcal{B}_{\infty}^{n_x}. \qquad  \hfill\square
\end{align}%
To allow for a priori specification of the desired approximation error, we will use property~\eqref{eq:muRPI_inclusions} of $\mu$-RPI sets to approximate the mRPI set $\mathcal{X}_{\mathrm{m}}(\mathcal{W})$ in Problem~\eqref{eq:orig_problem_to_solve} with a specified error bound.
{\color{black} This error bound dictates the level of conservativeness introduced by approximating Problem~\eqref{eq:orig_problem_to_solve} using $\mathcal{X}_{\mu}(\mathcal{W})$. In order to observe this, we define the sets
\begin{subequations}
	\label{eq:S1S2S3_defn}
	\begin{align}
		&\mathcal{S}_1:=\{\0 \in \mathcal{W} :  C\mathcal{X}_{\mathrm{m}}(\mathcal{W})\oplus D\mathcal{W}\subseteq \mathcal{Y} \}, \label{eq:s1_defn}\\
		&\mathcal{S}_2:=\{\0 \in \mathcal{W} : C\mathcal{X}_{\mu}(\mathcal{W}) \oplus D\mathcal{W}\subseteq \mathcal{Y} \}, \label{eq:s2_defn}\\
		&\mathcal{S}_3:=\{\0 \in \mathcal{W} :  C\mathcal{X}_{\mathrm{m}}(\mathcal{W})\oplus D\mathcal{W} \oplus {\mu} C \mathcal{B}_{\infty}^{n_x} \subseteq \mathcal{Y} \}. \label{eq:s3_defn}
	\end{align}
\end{subequations}
	For any user-specified $\mu>0$, these sets satisfy $\mathcal{S}_3 \subseteq \mathcal{S}_2 \subseteq \mathcal{S}_1$ as per the inclusions in~\eqref{eq:muRPI_inclusions}. Defining some distance metric $\mathbf{D}(\cdot,\cdot)$ over these sets, the inequality $\mathbf{D}(\mathcal{S}_1,\mathcal{S}_2) \leq \mathbf{D}(\mathcal{S}_1,\mathcal{S}_3)$ follows. Furthermore, $\mathcal{S}_1$ is the feasible set of Problem~\eqref{eq:orig_problem_to_solve}, and $\mathcal{S}_2$ is the feasible set obtained upon approximating the mRPI set $\mathcal{X}_{\mathrm{m}}(\mathcal{W})$ in Problem~\eqref{eq:orig_problem_to_solve} with the $\mu$-RPI set  $\mathcal{X}_{\mu}(\mathcal{W})$. The conservativeness in this approximation is characterized by $\mathbf{D}(\mathcal{S}_1,\mathcal{S}_2)$, that is upper-bounded by $\mathbf{D}(\mathcal{S}_1,\mathcal{S}_3)$ and dictated by $\mu$. 
	Thus, a small $\mu>0$ reduces $\mathbf{D}(\mathcal{S}_1,\mathcal{S}_3)$, such that $\mathcal{S}_2$ becomes a tighter approximation of $\mathcal{S}_1$.
	
	We will now recall a result from \cite{Rakovic2006} to compute a $\mu$-RPI set that, however, requires knowing the disturbance set $\mathcal{W}$.
	As we do not have the set $\mathcal{W}$ a priori, we will need to develop additional theoretical results to exploit it.
}
\begin{lemma}\cite[Section III-B]{Rakovic2006}
\label{lemma:original_Rakovic} Suppose that Assumption~\ref{ass:stable} holds. For some index $s>0$, scalars $\alpha \in [0,1)$, $\lambda \in [0,1]$, and disturbance set $\mathcal{W}$ such that $\0 \in \mathcal{W}$, if
\begin{align}
\label{eq:RPI_condition_check}
\scalemath{1}{
A^s(B \mathcal{W} \oplus \lambda \mathcal{B}^{n_x}_{\infty})\subseteq \alpha(B\mathcal{W} \oplus \lambda \mathcal{B}_{\infty}^{n_x})}
\end{align}
holds, then the parametrized set
\begin{align}
\label{eq:mu_RPI_set}
\mathcal{R}(s,\alpha,\lambda,\mathcal{W}):=(1-\alpha)^{-1} \bigoplus_{t=0}^{s-1} A^t(B\mathcal{W} \oplus \lambda \mathcal{B}_{\infty}^{n_x})
\end{align}
is RPI for System \eqref{eq:system:state} with persistent disturbances $w \in \mathcal{W}$. Moreover, if for some scalar $\mu>0$, the inclusion
\begin{align}
\label{eq:distance_condition}
(1-\alpha)^{-1}\bigoplus_{t=0}^{s-1}A^t(\alpha B\mathcal{W} \oplus \lambda \mathcal{B}_{\infty}^{n_x})\subseteq \mu \mathcal{B}_{\infty}^{n_x}
\end{align}
holds, then $\mathcal{R}(s,\alpha,\lambda,\mathcal{W})$ is a $\mu$-RPI set.
%
$\hfill\square$
\end{lemma}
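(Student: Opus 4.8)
The plan is to prove the two assertions separately, working throughout with the shorthand $\mathcal{V} := B\mathcal{W} \oplus \lambda \mathcal{B}_\infty^{n_x}$, which is compact, convex, and contains the origin (since $\0 \in \mathcal{W}$ and $\lambda \geq 0$), so that the scaling identity of Proposition~\ref{prop:basic_properties}(c) applies to it. With this notation $\mathcal{R}(s,\alpha,\lambda,\mathcal{W}) = (1-\alpha)^{-1}\bigoplus_{t=0}^{s-1}A^t\mathcal{V}$ as in~\eqref{eq:mu_RPI_set}, and the contraction hypothesis~\eqref{eq:RPI_condition_check} reads simply $A^s\mathcal{V} \subseteq \alpha\mathcal{V}$.

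For the RPI claim I would verify the invariance inclusion~\eqref{eq:RPI_inclusion}, i.e. $A\mathcal{R} \oplus B\mathcal{W} \subseteq \mathcal{R}$, by a telescoping argument. Shifting the index in $A\mathcal{R} = (1-\alpha)^{-1}\bigoplus_{t=1}^{s}A^t\mathcal{V}$ isolates the top term $A^s\mathcal{V}$, which the hypothesis replaces by $\alpha\mathcal{V}$. Using $B\mathcal{W} \subseteq \mathcal{V}$ (Proposition~\ref{prop:basic_properties}(d)) to absorb the added disturbance and then collecting the two resulting copies of $\mathcal{V}$ via Proposition~\ref{prop:basic_properties}(c), the coefficient $(1-\alpha)^{-1}\alpha + 1$ simplifies exactly to $(1-\alpha)^{-1}$, which rebuilds the missing $t=0$ term and yields $A\mathcal{R} \oplus B\mathcal{W} \subseteq (1-\alpha)^{-1}\bigoplus_{t=0}^{s-1}A^t\mathcal{V} = \mathcal{R}$. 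Every scalar-collection step here is legitimate only because $\mathcal{V}$ and $B\mathcal{W}$ are convex.

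For the $\mu$-RPI property, the left inclusion $\mathcal{X}_{\mathrm{m}}(\mathcal{W}) \subseteq \mathcal{R}$ in~\eqref{eq:muRPI_inclusions} is immediate from the minimality of the mRPI set recalled just after~\eqref{eq:RPI_inclusion}, since $\mathcal{R}$ has just been shown to satisfy the RPI inclusion. For the right inclusion I would split the scaling on each summand: distributing $(1-\alpha)^{-1}B\mathcal{W} = B\mathcal{W} \oplus (1-\alpha)^{-1}\alpha B\mathcal{W}$ (again Proposition~\ref{prop:basic_properties}(c)) gives $(1-\alpha)^{-1}\mathcal{V} = B\mathcal{W} \oplus (1-\alpha)^{-1}(\alpha B\mathcal{W} \oplus \lambda\mathcal{B}_\infty^{n_x})$, and summing over $t=0,\dots,s-1$ yields
$$\mathcal{R} = \bigoplus_{t=0}^{s-1}A^t B\mathcal{W} \ \oplus \ (1-\alpha)^{-1}\bigoplus_{t=0}^{s-1}A^t(\alpha B\mathcal{W} \oplus \lambda\mathcal{B}_\infty^{n_x}).$$
The first term lies in $\mathcal{X}_{\mathrm{m}}(\mathcal{W}) = \bigoplus_{t=0}^{\infty}A^t B\mathcal{W}$ from~\eqref{eq:mRPI} (Proposition~\ref{prop:basic_properties}(d)), and the second term is exactly the set on the left of the error inclusion~\eqref{eq:distance_condition}. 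Bounding it by $\mu\mathcal{B}_\infty^{n_x}$ through~\eqref{eq:distance_condition} and using monotonicity of the Minkowski sum (Proposition~\ref{prop:basic_properties}(b)) then delivers $\mathcal{R} \subseteq \mathcal{X}_{\mathrm{m}}(\mathcal{W}) \oplus \mu\mathcal{B}_\infty^{n_x}$, completing~\eqref{eq:muRPI_inclusions}.

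The step I expect to be most delicate is the bookkeeping of the scalar coefficients in the two identities above: the argument hinges on the convexity of $\mathcal{V}$ and $B\mathcal{W}$ so that $a\mathcal{S} \oplus b\mathcal{S} = (a+b)\mathcal{S}$ can be applied repeatedly, and on keeping the $B\mathcal{W}$ part (which reconstructs $\mathcal{X}_{\mathrm{m}}$) cleanly separated from the $\alpha B\mathcal{W} \oplus \lambda\mathcal{B}_\infty^{n_x}$ part (which forms the approximation error). It is worth noting that the contraction hypothesis~\eqref{eq:RPI_condition_check} is used only for the RPI claim, whereas the $\mu$-bound follows from the purely algebraic rearrangement above together with~\eqref{eq:distance_condition}.
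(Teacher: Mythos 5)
Your proof is correct, but it takes a genuinely different route from the paper: the paper never proves Lemma~\ref{lemma:original_Rakovic} directly, it assembles it from the literature. Specifically, the paper's rationale is that the contraction hypothesis~\eqref{eq:RPI_condition_check} makes $\mathcal{R}(s,\alpha,\lambda,\mathcal{W})$ an RPI set for the \emph{modified} system $x(t+1)=Ax(t)+\tilde{w}(t)$ with enlarged disturbance set $\tilde{\mathcal{W}}(\lambda)=B\mathcal{W}\oplus\lambda\mathcal{B}_{\infty}^{n_x}$ by \cite[Theorem~1]{Rakovic2005}, that RPI for the modified system implies RPI for System~\eqref{eq:system:state} by \cite[Lemma~2]{Rakovic2006}, and that the $\mu$-approximation property follows from \cite[Theorem~3]{Rakovic2006}. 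Your telescoping argument inlines the first two citations into a single direct verification of~\eqref{eq:RPI_inclusion}: absorbing $B\mathcal{W}\subseteq B\mathcal{W}\oplus\lambda\mathcal{B}_{\infty}^{n_x}$ and collecting coefficients via $(1-\alpha)^{-1}\alpha+1=(1-\alpha)^{-1}$ is exactly the content of those two results, and your decomposition $(1-\alpha)^{-1}(B\mathcal{W}\oplus\lambda\mathcal{B}_{\infty}^{n_x})=B\mathcal{W}\oplus(1-\alpha)^{-1}(\alpha B\mathcal{W}\oplus\lambda\mathcal{B}_{\infty}^{n_x})$, combined with $\bigoplus_{t=0}^{s-1}A^tB\mathcal{W}\subseteq\mathcal{X}_{\mathrm{m}}(\mathcal{W})$ from~\eqref{eq:mRPI} and the hypothesis~\eqref{eq:distance_condition}, reconstructs the argument behind the third. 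What your version buys is self-containedness and visibility of exactly where each hypothesis enters; what the paper's version buys is brevity and traceability to known results. Two minor points of care: the lemma as stated assumes only $\0\in\mathcal{W}$, so you should state explicitly that convexity (and compactness) of $\mathcal{W}$ --- standing assumptions elsewhere in the paper --- are what license Proposition~\ref{prop:basic_properties}(c) and the minimality of $\mathcal{X}_{\mathrm{m}}(\mathcal{W})$; and your closing remark that~\eqref{eq:RPI_condition_check} is ``used only for the RPI claim'' should be qualified, since the lower inclusion $\mathcal{X}_{\mathrm{m}}(\mathcal{W})\subseteq\mathcal{R}(s,\alpha,\lambda,\mathcal{W})$ in~\eqref{eq:muRPI_inclusions} rests on minimality of the mRPI set among RPI sets, hence on the RPI property and thus on~\eqref{eq:RPI_condition_check}, exactly as your own proof of that inclusion uses it.
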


We briefly recall the rationale behind Lemma~\ref{lemma:original_Rakovic}. 
For any compact and convex disturbance set $\mathcal{W}$ containing the origin and for any index $s>0$, the inclusion
\begin{align*}
\mathcal{X}(s,\mathcal{W}):=\bigoplus_{t=0}^{s-1} A^t B \mathcal{W} \subseteq \bigoplus_{t=0}^{\infty} A^tB \mathcal{W} \overset{\eqref{eq:mRPI}}{=} \mathcal{X}_{\mathrm{m}}(\mathcal{W})
\end{align*}
holds. Then, we know from~\cite[Theorem 1]{Rakovic2005} that if the inclusion  $A^sB\mathcal{W}\subseteq \alpha B\mathcal{W}$ holds for some index $s>0$ and a scalar $\alpha \in [0,1)$, 
the scaled set $(1-\alpha)^{-1} \mathcal{X}(s,\mathcal{W})$ is RPI for System~\eqref{eq:system:state} with disturbance set $\mathcal{W}$.
However, as noted in~\cite{Rakovic2006}, there might not exist some parameters $(s,\alpha)$ satisfying $A^sB\mathcal{W}\subseteq \alpha B\mathcal{W}$ unless the origin is included in the interior of the set $B\mathcal{W}$. This can occur, for example, if the rank of matrix $B$ is smaller than $n_x$, thus posing a structural restriction on the applicability of~\cite[Theorem 1]{Rakovic2005}. 
To overcome this limitation, a modification proposed in \cite{Rakovic2006} involves perturbing the disturbance set with some $\lambda \in [0,1]$ as
$\tilde{\mathcal{W}}(\lambda):=B\mathcal{W}\oplus \lambda\mathcal{B}_{\infty}^{n_x}.$
Then, since the interior of $\tilde{\mathcal{W}}(\lambda)$ is nonempty for every $\lambda \in (0,1]$, there always exists some $(s,\alpha)$ satisfying the inclusion in~\eqref{eq:RPI_condition_check}, i.e., $A^s \tilde{\mathcal{W}}(\lambda) \subseteq \alpha \tilde{\mathcal{W}}(\lambda).$
From~\cite[Theorem 1]{Rakovic2005}, it then follows that the set $\mathcal{R}(s,\alpha,\lambda,\mathcal{W})$ defined in~\eqref{eq:mu_RPI_set} is RPI for the modified system $x(t+1)=Ax(t)+\tilde{w}(t)$
with disturbance set $\tilde{\mathcal{W}}(\lambda)$. Finally, the result in Lemma~\ref{lemma:original_Rakovic} follows from~\cite[Lemma 2]{Rakovic2006}, which states that every RPI set of the modified system with disturbance set $\tilde{\mathcal{W}}(\lambda)$ is an RPI set for System~\eqref{eq:system:state} with disturbance set $\mathcal{W}$. For the reasoning behind inclusion~\eqref{eq:distance_condition} that leads to $\mathcal{R}(s,\alpha,\lambda,\mathcal{W})$  being a $\mu$-RPI set, the reader is referred to~\cite[Theorem 3]{Rakovic2006}.
%

Since our aim is to construct a $\mu$-RPI set approximating the mRPI set $\mathcal{X}_{\mathrm{m}}(\mathcal{W})$, Lemma~\ref{lemma:original_Rakovic} provides us with the important result that the parameterized set $\mathcal{R}(s,\alpha,\lambda,\mathcal{W})$ is $\mu$-RPI. We then use the parameterized set $\mathcal{R}(s,\alpha,\lambda,\mathcal{W})$ to approximate the output reachable set $\mathcal{Y}_{\mathrm{m}}(\mathcal{W})$ as
\begin{align}
\label{eq:O_outside_Y}
\mathcal{O}(s,\alpha,\lambda,\mathcal{W}):=C\mathcal{R}(s,\alpha,\lambda,\mathcal{W}) \oplus D\mathcal{W} \ \supseteq \ \mathcal{Y}_{\mathrm{m}}(\mathcal{W}).
\end{align}
Since $\mathcal{O}(s,\alpha,\lambda,\mathcal{W})$ is an outer approximation of the output reachable set $\mathcal{Y}_{\mathrm{m}}(\mathcal{W})$, the inclusion 
\begin{align}
\label{eq:output_inclusion_O}
\mathcal{O}(s,\alpha,\lambda,\mathcal{W}) \subseteq \mathcal{Y}
\end{align}
entails the desired output inclusion $\mathcal{Y}_{\mathrm{m}}(\mathcal{W})\subseteq \mathcal{Y}$ in~\eqref{eq:orig_problem_to_solve:con_1}.
Hence, we propose to replace the constraint~\eqref{eq:orig_problem_to_solve:con_1} with~\eqref{eq:output_inclusion_O}. 
Regarding the objective $\mathrm{d}_{\mathcal{Y}}(\mathcal{Y}_{\mathrm{m}}(\mathcal{W}))$
{\color{black}however, replacing the output reachable set $\mathcal{Y}_{\mathrm{m}}(\mathcal{W})$ by the approximation ${O}(s,\alpha,\lambda,\mathcal{W})$ would imply minimizing a lower bound to Problem~\eqref{eq:orig_problem_to_solve}. This is because the inequality $$\mathrm{d}_{\mathcal{Y}}(\mathcal{O}(s,\alpha,\lambda,\mathcal{W})) \leq \mathrm{d}_{\mathcal{Y}}(\mathcal{Y}_{\mathrm{m}}(\mathcal{W}))$$ follows from monotonicity of $\mathrm{d}_{\mathcal{Y}}(\cdot)$ in~\eqref{eq:monotonicty_property} since the inclusion  $\mathcal{Y}_{\mathrm{m}}(\mathcal{W}) \subseteq \mathcal{O}(s,\alpha,\lambda,\mathcal{W})$ holds.
Hence, we propose to instead consider the 
the $l$-step output reachable set defined using some user-specified $l>0$,
\begin{align}
\label{eq:l_step_reachable}
\mathcal{S}(l,\mathcal{W}):=\bigoplus_{t=0}^{l-1} CA^tB\mathcal{W} \oplus D\mathcal{W},
\end{align}
and approximate the objective as $\mathrm{d}_{\mathcal{Y}}(\mathcal{S}(l,\mathcal{W}))$. 
}Since the inclusion $\mathcal{S}(l,\mathcal{W}) \subseteq \mathcal{Y}_{\mathrm{m}}(\mathcal{W})$ holds for any user-specified index $l>0$, the inclusion  $\mathcal{Y}_{\mathrm{m}}(\mathcal{W}) \subseteq \mathcal{Y}$ enforced through~\eqref{eq:output_inclusion_O} implies by monotonicity of $\mathrm{d}_{\mathcal{Y}}(\cdot)$ in~\eqref{eq:monotonicty_property} that the inequality	$\mathrm{d}_{\mathcal{Y}}(\mathcal{Y}_{\mathrm{m}}(\mathcal{W})) \leq \mathrm{d}_{\mathcal{Y}}(\mathcal{S}(l,\mathcal{W}))$
holds. Hence, $\mathrm{d}_{\mathcal{Y}}(\mathcal{S}(l,\mathcal{W}))$ upper-bounds $\mathrm{d}_{\mathcal{Y}}(\mathcal{Y}_{\mathrm{m}}(\mathcal{W}))$. 
Thus, we propose to approximate Problem~\eqref{eq:orig_problem_to_solve} to minimize an upper-bound as
\begin{subequations}
\label{eq:approx_problem_to_solve_ideal}
\begin{align}
\min_{\mathcal{W}}\hspace{5pt} & \mathrm{d}_{\mathcal{Y}}(\mathcal{S}(l,\mathcal{W})) 	\label{eq:approx_problem_to_solve_ideal:obj}\\ \text{ s.t. } & \mathcal{O}(s,\alpha,\lambda,\mathcal{W}) \subseteq \mathcal{Y}, 	\label{eq:approx_problem_to_solve_ideal:con_1}
\\ & \0 \in \mathcal{W}. \label{eq:approx_problem_to_solve_ideal:con_2}
\end{align}
\end{subequations}
{\color{black} 
In this problem, the user selects the parameters $H \in \R^{m_B \times n_y}$, $l>0$ and $\mu>0$. We present next a method to compute the parameters $(s,\alpha,\lambda)$ that characterize the RPI set, given some $\mu$. The choice of $l$ is independent of the parameter $s$ that characterizes the $\mu$-RPI set, and a larger value of $l$ can reduce conservativeness of the bound $\mathrm{d}_{\mathcal{Y}}(\mathcal{S}(l,\mathcal{W}))$ to $\mathrm{d}_{\mathcal{Y}}(\mathcal{Y}_{\mathrm{m}}(\mathcal{W}))$.
\begin{remark}
	If the objective function $\mathrm{d}_{\mathcal{Y}}(\cdot)$ is defined as a standard Hausdorff distance induced by some $p$-norm ball $\mathcal{B}_p^{n_y}$ for any $\mathbf{S} \subseteq \mathcal{Y}$ as $	\mathrm{d}_{\mathcal{Y}}(\mathbf{S}):=\min\{ \ \epsilon \ : \ \mathcal{Y} \subseteq \mathbf{S} \oplus \epsilon \mathcal{B}_p^{n_y}\}$,
	then it satisfies the triangle inequality $$\mathrm{d}_{\mathcal{Y}}(\mathcal{Y}_{\mathrm{m}}(\mathcal{W})) \leq \mathrm{d}_{\mathcal{O}(s,\alpha,\lambda,\mathcal{W})}(\mathcal{Y}_{\mathrm{m}}(\mathcal{W}))+\mathrm{d}_{\mathcal{Y}}(\mathcal{O}(s,\alpha,\lambda,\mathcal{W})).$$ Then, we can replace $\mathrm{d}_{\mathcal{Y}}(\mathcal{Y}_{\mathrm{m}}(\mathcal{W}))$ with $\mathrm{d}_{\mathcal{Y}}(\mathcal{O}(s,\alpha,\lambda,\mathcal{W}))$ to minimize an upper-bound. The first term in this bound is characterized by the RPI set approximation error $\mu$ in~\eqref{eq:muRPI_inclusions} as
	\begin{align*}
		\mathrm{d}_{\mathcal{O}(s,\alpha,\lambda,\mathcal{W})}(\mathcal{Y}_{\mathrm{m}}(\mathcal{W})) \leq \min\{ \ \epsilon \ : \ \mu C \mathcal{B}_{{\infty}}^{n_x} \subseteq \epsilon \mathcal{B}_p^{n_y}\}.
	\end{align*} 
	In this paper however, we do not rely on the triangle inequality property of $\mathrm{d}_{\mathcal{Y}}(\cdot)$, thus affording flexibility in its definition. Instead, we use $\mathrm{d}_{\mathcal{Y}}(\mathcal{S}(l,\mathcal{W}))$ that only requires monotonicity of $\mathrm{d}_{\mathcal{Y}}(\cdot)$ to be an upper bound to $\mathrm{d}_{\mathcal{Y}}(\mathcal{Y}_{\mathrm{m}}(\mathcal{W}))$. 
	This choice allows the user to select an appropriate parameter $l>0$ that trades-off between conservativeness and computational complexity. Moreover, it decouples conservativeness in the objective from conservativeness in the feasible region that is characterized by the RPI approximation error $\mu$.  
 	 $\hfill\square$
\end{remark}
}

To ensure that $\mathcal{R}(s,\alpha,\lambda,\mathcal{W})$ is $\mu$-RPI in formulating $\mathcal{O}(s,\alpha,\lambda,\mathcal{W})$ in \eqref{eq:O_outside_Y}, we need to satisfy the inclusions \eqref{eq:RPI_condition_check} and \eqref{eq:distance_condition} at the optimal solution with the parameters $(s,\alpha,\lambda)$. To guarantee this, we can consider the following approaches.
%
%
\begin{enumerate}
\item[{(1)}] \textit{Iterative verification:} Problem~\eqref{eq:approx_problem_to_solve_ideal} is solved with some arbitrary $(s,\alpha,\lambda)$. At the optimizer, inclusions~\eqref{eq:RPI_condition_check} and \eqref{eq:distance_condition} are checked with the chosen $(s,\alpha,\lambda)$. If verified, then $\mathcal{R}(s,\alpha,\lambda,\mathcal{W})$ is a $\mu$-RPI set and the procedure terminates. Else, $(s,\alpha,\lambda)$ are updated at the optimizer to verify~\eqref{eq:RPI_condition_check} and \eqref{eq:distance_condition}, and the procedure is repeated.
\item[{(2)}] \textit{Optimizing over $(s,\alpha,\lambda)$:} The parameters $(s,\alpha,\lambda)$ are appended as optimization variables, and inclusions~\eqref{eq:RPI_condition_check} and \eqref{eq:distance_condition} are appended as constraints in Problem~\eqref{eq:approx_problem_to_solve_ideal}. 
\item[{(3)}] \textit{Preselecting $(s,\alpha,\lambda)$:} Problem~\eqref{eq:approx_problem_to_solve_ideal} is solved with $(s,\alpha,\lambda)$ chosen a priori such that inclusions~\eqref{eq:RPI_condition_check} and \eqref{eq:distance_condition} hold for all feasible disturbance sets $\mathcal{W}$.
\end{enumerate}
Approach~$(1)$ requires an update strategy to update $(s,\alpha,\lambda)$ for convergence guarantees.
Approach~$(2)$, while presenting an attractive alternative since inclusions~\eqref{eq:RPI_condition_check} and \eqref{eq:distance_condition} verify by construction, results in a computationally intractable optimization problem. Further study of these approaches is subject of future research. In this paper, we adopt Approach~$(3)$, i.e., we select parameters $(s,\alpha,\lambda)$ such that inclusions~\eqref{eq:RPI_condition_check} and \eqref{eq:distance_condition} are verified by {\color{black}a set of disturbance sets $\mathcal{W}$ that are feasible for Problem~\eqref{eq:approx_problem_to_solve_ideal}.}
In the next subsection, we present this approach.
\subsection{Preselecting RPI set parameters $(s,\alpha,\lambda)$}
By definition, the requirement 
on parameters $(s,\alpha,\lambda)$ to adopt Approach ($3$) is that inclusions~\eqref{eq:RPI_condition_check} and \eqref{eq:distance_condition} must be verified for all feasible disturbance sets $\mathcal{W}$ of Problem~\eqref{eq:approx_problem_to_solve_ideal}. In this paper, we relax this requirement to the following.
\begin{itemize}
\item
\textit{\textbf{Requirement ($a$)}: For some user-specified $\gamma>0$, parameters $(s,\alpha,\lambda)$ are such that inclusions~\eqref{eq:RPI_condition_check} and \eqref{eq:distance_condition} hold for all disturbance sets
\begin{align}
\label{eq:Lset}
\mathcal{W} \in \bm{\mathcal{L}}_{\gamma}:=\{\mathcal{W} \ : \ \0 \in \mathcal{W}, \ B\mathcal{W} \subseteq \gamma \mathcal{B}_{\infty}^{n_x}\}.
\end{align}}
\end{itemize}
As we will show in the sequel, the restriction $B\mathcal{W} \subseteq \gamma \mathcal{B}_{\infty}^{n_x}$ over the disturbance sets permits us to characterize a set of parameters $(s,\alpha,\lambda)$ satisfying the requirement. The formulation of Requirement ($a$) motivates us to impose the additional constraint $B\mathcal{W} \subseteq \gamma \mathcal{B}_{\infty}^{n_x}$ in Problem~\eqref{eq:approx_problem_to_solve_ideal} to obtain
\begin{subequations}
\label{eq:approx_problem_to_solve_ideal_2}
\begin{align}
\min_{\mathcal{W}}\hspace{5pt} & \mathrm{d}_{\mathcal{Y}}(\mathcal{S}(l,\mathcal{W})) 	\label{eq:approx_problem_to_solve_ideal_2:obj}\\ \text{ s.t. } & \mathcal{O}(s,\alpha,\lambda,\mathcal{W}) \subseteq \mathcal{Y}, 	\label{eq:approx_problem_to_solve_ideal_2:con_1}
\\ & B\mathcal{W} \subseteq \gamma \mathcal{B}_{\infty}^{n_x}, \label{eq:approx_problem_to_solve_ideal_2:con_2}
\\ & \0 \in \mathcal{W}, \label{eq:approx_problem_to_solve_ideal_2:con_3}
\end{align}
\end{subequations}
such that if parameters $(s,\alpha,\lambda)$ satisfy Requirement ($a$), the set $\mathcal{R}(s,\alpha,\lambda,\mathcal{W})$ formulating $\mathcal{O}(s,\alpha,\lambda,\mathcal{W})$ is $\mu$-RPI for every feasible disturbance set $\mathcal{W}$.
%

We impose the following additional requirement on the parameters $(s,\alpha,\lambda)$ to guarantee feasibility of Problem~\eqref{eq:approx_problem_to_solve_ideal_2}.
\begin{itemize}
\item \textit{\textbf{Requirement ($b$)}: Parameters $(s,\alpha,\lambda)$ are such that there exists a disturbance set $\mathcal{W} \in \bm{\mathcal{L}}_{\gamma}$ satisfying constraint~\eqref{eq:approx_problem_to_solve_ideal_2:con_1}.}
\end{itemize}
In the following result, we translate Requirements ($a$) and ($b$) into sufficient conditions on the parameters $(s,\alpha,\lambda)$.
\begin{lemma}
\label{lem:sufficient_parameters}
If the parameters $s>0$, $\alpha \in [0,1)$ and $\lambda \in [0,1]$ verify the inclusion 
\begin{align}
\lambda(1-\alpha)^{-1}\bigoplus_{t=0}^{s-1}CA^t\mathcal{B}_{\infty}^{n_x} \subseteq \mathcal{Y}, \label{eq:inclusion_2_Y_typ}   
\end{align}
then there exists some disturbance set $\mathcal{W} \in \bm{\mathcal{L}}_{\gamma}$ for any $\gamma>0$ such that the inclusion $\mathcal{O}(s,\alpha,\lambda,\mathcal{W}) \subseteq \mathcal{Y}$ holds. Moreover, if for some user-specified scalars $\mu,\gamma>0$, the
inclusions
\begin{align} A^s(\gamma+\lambda)\mathcal{B}_{\infty}^{n_x}&\subseteq \alpha \lambda \mathcal{B}_{\infty}^{n_x}, \label{eq:inclusion_1_typ}  \\ 
(1-\alpha)^{-1}
\bigoplus_{t=0}^{s-1}A^t(\alpha \gamma+\lambda)\mathcal{B}_{\infty}^{n_x} &\subseteq \mu \mathcal{B}_{\infty}^{n_x}, \label{eq:inclusion_2_typ}
\end{align}
hold, then $\mathcal{R}(s,\alpha,\lambda,\mathcal{W})$ is a $\mu$-RPI set corresponding to any disturbance set $\mathcal{W} \in \bm{\mathcal{L}}_{\gamma}$.  	$\hfill\square$
\end{lemma}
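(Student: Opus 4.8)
The plan is to prove the two claims separately, each by reducing to a result already available. The first claim (feasibility, i.e., Requirement~($b$)) asks only for the \emph{existence} of one disturbance set in $\bm{\mathcal{L}}_{\gamma}$ verifying $\mathcal{O}(s,\alpha,\lambda,\mathcal{W}) \subseteq \mathcal{Y}$; the natural witness is the trivial set $\mathcal{W}=\{\0\}$. The second claim (Requirement~($a$)) asks that $\mathcal{R}(s,\alpha,\lambda,\mathcal{W})$ be $\mu$-RPI for \emph{every} $\mathcal{W} \in \bm{\mathcal{L}}_{\gamma}$; here the strategy is to show that the two hypotheses~\eqref{eq:inclusion_1_typ}--\eqref{eq:inclusion_2_typ}, together with the defining property $B\mathcal{W} \subseteq \gamma\mathcal{B}_{\infty}^{n_x}$ of $\bm{\mathcal{L}}_{\gamma}$, imply the two premises~\eqref{eq:RPI_condition_check} and~\eqref{eq:distance_condition} of Lemma~\ref{lemma:original_Rakovic}, after which the conclusion follows immediately from that lemma.

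For the first claim, I would substitute $\mathcal{W}=\{\0\}$ into the definitions~\eqref{eq:mu_RPI_set} and~\eqref{eq:O_outside_Y}. Since $B\{\0\}=\{\0\}$ and $D\{\0\}=\{\0\}$, the set $\mathcal{R}(s,\alpha,\lambda,\{\0\})$ collapses to $\lambda(1-\alpha)^{-1}\bigoplus_{t=0}^{s-1}A^t\mathcal{B}_{\infty}^{n_x}$ and the Minkowski term $D\mathcal{W}$ in~\eqref{eq:O_outside_Y} vanishes; because the linear map $C$ distributes over the Minkowski sum, $\mathcal{O}(s,\alpha,\lambda,\{\0\})$ then equals precisely the left-hand side of~\eqref{eq:inclusion_2_Y_typ}. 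Thus the hypothesis~\eqref{eq:inclusion_2_Y_typ} is exactly the statement $\mathcal{O}(s,\alpha,\lambda,\{\0\}) \subseteq \mathcal{Y}$. It remains only to note that $\{\0\} \in \bm{\mathcal{L}}_{\gamma}$ for every $\gamma>0$, which is immediate from~\eqref{eq:Lset} since $\0 \in \{\0\}$ and $B\{\0\}=\{\0\} \subseteq \gamma\mathcal{B}_{\infty}^{n_x}$.

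For the second claim, I would fix an arbitrary $\mathcal{W} \in \bm{\mathcal{L}}_{\gamma}$ and verify the two premises of Lemma~\ref{lemma:original_Rakovic} by chaining the basic set operations of Proposition~\ref{prop:basic_properties}. To obtain~\eqref{eq:RPI_condition_check}, I would first bound the argument from above: using $B\mathcal{W} \subseteq \gamma\mathcal{B}_{\infty}^{n_x}$ together with properties~($a$)--($c$), $A^s(B\mathcal{W}\oplus\lambda\mathcal{B}_{\infty}^{n_x}) \subseteq A^s(\gamma+\lambda)\mathcal{B}_{\infty}^{n_x}$, which~\eqref{eq:inclusion_1_typ} bounds by $\alpha\lambda\mathcal{B}_{\infty}^{n_x}$; then I would relate this back to the desired right-hand side using $\0 \in B\mathcal{W}$ and property~($d$), which give $\alpha\lambda\mathcal{B}_{\infty}^{n_x} \subseteq \alpha(B\mathcal{W}\oplus\lambda\mathcal{B}_{\infty}^{n_x})$, closing the chain. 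To obtain~\eqref{eq:distance_condition}, I would analogously bound $\alpha B\mathcal{W}\oplus\lambda\mathcal{B}_{\infty}^{n_x} \subseteq (\alpha\gamma+\lambda)\mathcal{B}_{\infty}^{n_x}$, propagate this through each $A^t$, through the Minkowski sum, and through the positive scaling by $(1-\alpha)^{-1}$ (monotonicity of these operations following from properties~($a$)--($b$)), and finish with hypothesis~\eqref{eq:inclusion_2_typ}. With both premises verified, Lemma~\ref{lemma:original_Rakovic} yields that $\mathcal{R}(s,\alpha,\lambda,\mathcal{W})$ is $\mu$-RPI.

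The argument is essentially careful bookkeeping of set inclusions rather than a deep estimate, so the only real subtlety I anticipate lies in the right-hand side of the RPI condition~\eqref{eq:RPI_condition_check}: unlike the other steps, it requires bounding a ball \emph{from below} by $\alpha(B\mathcal{W}\oplus\lambda\mathcal{B}_{\infty}^{n_x})$, and this direction relies crucially on $\0\in\mathcal{W}$ (hence $\0\in B\mathcal{W}$) through property~($d$). The design of $\bm{\mathcal{L}}_{\gamma}$ in~\eqref{eq:Lset}, which bakes in both $\0\in\mathcal{W}$ and $B\mathcal{W}\subseteq\gamma\mathcal{B}_{\infty}^{n_x}$, is exactly what makes both directions of the chain go through uniformly over all feasible $\mathcal{W}$.
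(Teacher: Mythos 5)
Your proposal is correct and follows essentially the same route as the paper's proof: the first claim is established with the witness $\mathcal{W}=\{\0\}$, for which $\mathcal{O}(s,\alpha,\lambda,\{\0\})$ reduces exactly to the left-hand side of~\eqref{eq:inclusion_2_Y_typ}, and the second claim is proved by the identical two chains of inclusions (bounding $B\mathcal{W}$ by $\gamma\mathcal{B}_{\infty}^{n_x}$, invoking~\eqref{eq:inclusion_1_typ} and~\eqref{eq:inclusion_2_typ}, and using $\0\in B\mathcal{W}$ for the lower bound) before concluding via Lemma~\ref{lemma:original_Rakovic}. Your remark on the one asymmetric step---that $\alpha\lambda\mathcal{B}_{\infty}^{n_x}\subseteq\alpha(B\mathcal{W}\oplus\lambda\mathcal{B}_{\infty}^{n_x})$ needs $\0\in\mathcal{W}$---is precisely the justification the paper gives for that step.
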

\begin{proof}
Note that, by~\eqref{eq:O_outside_Y}, we have
\begin{align*}
\mathcal{O}(s,\alpha,\lambda,\mathcal{W})=(1-\alpha)^{-1} \bigoplus_{t=0}^{s-1}CA^t(B\mathcal{W} \oplus \lambda \mathcal{B}_{\infty}^{n_x}) \oplus D\mathcal{W}.
\end{align*}
Consequently, if inclusion~\eqref{eq:inclusion_2_Y_typ} holds, then the inclusion $\mathcal{O}(s,\alpha,\lambda,\mathcal{W}) \subseteq \mathcal{Y}$ is satisfied by $\mathcal{W}=\{\0\} \in \bm{\mathcal{L}}_{\gamma}$, concluding the proof of the first claim. 
%
%
For the second claim, if inclusion \eqref{eq:inclusion_1_typ} holds, then for any $\mathcal{W} \in \bm{\mathcal{L}}_{\gamma}$, the inclusions
\begin{align*}
A^s(B\mathcal{W} \oplus \lambda \mathcal{B}_{\infty}^{n_x}) &\subseteq A^s(\gamma + \lambda)\mathcal{B}_{\infty}^{n_x}  & \small{\text{$\rightarrow$ Since $B\mathcal{W} \subseteq \gamma \mathcal{B}_{\infty}^{n_x}$}}
\\ & \subseteq \alpha \lambda \mathcal{B}_{\infty}^{n_x}  & \small{\text{$\rightarrow$ Directly from \eqref{eq:inclusion_1_typ}}}
\\ & \subseteq \alpha(B\mathcal{W} \oplus \lambda \mathcal{B}_{\infty}^{n_x})  & \small{\text{$\rightarrow$ Since $\0 \in \alpha B\mathcal{W}$}}
\end{align*} 
%
follow from basic properties of set operations in Proposition~\ref{prop:basic_properties},
such that inclusion \eqref{eq:RPI_condition_check} holds. Similarly, if inclusion~\eqref{eq:inclusion_2_typ} holds, then for every disturbance set $\mathcal{W} \in \bm{\mathcal{L}}_{\gamma}$, the inclusions
\begin{align*}
&(1-\alpha)^{-1}\bigoplus_{t=0}^{s-1}A^t(\alpha B\mathcal{W} \oplus \lambda \mathcal{B}_{\infty}^{n_x})  \\ & \subseteq (1-\alpha)^{-1}\bigoplus_{t=0}^{s-1}A^t(\alpha \gamma+\lambda)\mathcal{B}_{\infty}^{n_x} & \small{\text{$\rightarrow$ Since $B\mathcal{W} \subseteq \gamma \mathcal{B}_{\infty}^{n_x}$}} \nonumber \\
& \subseteq \mu \mathcal{B}_{\infty}^{n_x}  & \small{\text{$\rightarrow$ Directly from \eqref{eq:inclusion_2_typ}}} \nonumber
\end{align*}
follow from basic properties of set operations in Proposition~\ref{prop:basic_properties}, such that inclusion~\eqref{eq:distance_condition} holds.
Consequently, by Lemma~\ref{lemma:original_Rakovic}, 
$\mathcal{R}(s,\alpha,\lambda,\mathcal{W})$ is a $\mu$-RPI set.
\end{proof}

As per Lemma~\ref{lem:sufficient_parameters}, parameters $(s,\alpha,\lambda)$ that verify inclusions~\eqref{eq:inclusion_1_typ}-\eqref{eq:inclusion_2_typ} satisfy Requirement ($a$), and those that verify inclusion~\eqref{eq:inclusion_2_Y_typ} satisfy of Requirement ($b$). Hence, if parameters $(s,\alpha,\lambda)$ verifying inclusions~\eqref{eq:inclusion_2_Y_typ}-\eqref{eq:inclusion_2_typ} are used to formulate Problem~\eqref{eq:approx_problem_to_solve_ideal_2}, then the problem is guaranteed to be feasible, and for every feasible disturbance set $\mathcal{W}$, the set $\mathcal{R}(s,\alpha,\lambda,\mathcal{W})$ formulating $\mathcal{O}(s,\alpha,\lambda,\mathcal{W})$ is $\mu$-RPI. 
%
%
%
%
%
%
%

\subsection{Computing RPI set parameters $(s,\alpha,\lambda)$}
We will now present an algorithm to compute parameters $(s,\alpha,\lambda)$ verifying inclusions~\eqref{eq:inclusion_2_Y_typ}-\eqref{eq:inclusion_2_typ}. To this end, we define the set of all parameters $(s,\alpha,\lambda)$ satisfying these inclusions for some user-specified scalars $\gamma,\mu>0$ as
\begin{align*}
{\Lambda}_{\gamma,\mu}:=\left\{(s,\alpha,\lambda) : s>0, \alpha \in [0,1),\lambda \in [0,1], \ \eqref{eq:inclusion_2_Y_typ}-\eqref{eq:inclusion_2_typ}\right\}.
\end{align*}
Then, suitable parameters $(s,\alpha,\lambda)$ can be selected by solving 
\begin{align}
\label{eq:sal_selection_problem}
\min_{s,\alpha,\lambda} \quad s \quad \text{s.t.} \quad (s,\alpha,\lambda) \in {\Lambda}_{\gamma,\mu}.
\end{align}
In Problem~\eqref{eq:sal_selection_problem}, we compute the smallest index $s$ satisfying inclusions~\eqref{eq:inclusion_2_Y_typ}-\eqref{eq:inclusion_2_typ}, since the index $s$ characterizes the number of Minkowski sums defining the set $\mathcal{O}(s,\alpha,\lambda,\mathcal{W})$ in~\eqref{eq:approx_problem_to_solve_ideal_2:con_1}, and a smaller number is desirable for reduced computational complexity.
Unfortunately, solving  Problem~\eqref{eq:sal_selection_problem} directly is not viable since $s$ is a discrete variable. Thus, we propose to solve Problem~\eqref{eq:sal_selection_problem} iteratively by incrementing $s$ and searching for feasible parameters $(\alpha,\lambda)$.

For some $s>0$, we define the set of parameters $(\alpha,\lambda)$ as
\begin{align}
\label{eq:L_HF_set}
\bm{\mathcal{L}}_{\gamma,\mu}(s):=\{(\alpha,\lambda) \ : \ (s,\alpha,\lambda) \in {\Lambda}_{\gamma,\mu}\},
\end{align}
based on which we define the optimization problem
\begin{align}
\mathbb{H}(s) \ 
\begin{cases} \quad
\underset{\alpha,\lambda}{\max} \ \ \alpha+\lambda 
\ \ \text{s.t.} \ \  (\alpha,\lambda) \in \bm{\mathcal{L}}_{\gamma,\mu}(s).
\end{cases}
\end{align}

Using this problem, we define the iterative procedure to solve Problem~\eqref{eq:sal_selection_problem} in Algorithm~\ref{alg:SAL_solving_procedure}. 
\begin{algorithm}[t]
\caption{Solving Problem~\eqref{eq:sal_selection_problem}}
\begin{algorithmic}[1]
\State \textbf{Require} Matrices $A,C,G$, vector $g$, scalars $\gamma,\mu >0$
\State \textbf{Initialize:} $s=1, \ conv=0$
\While{$conv=0$}
\State Solve $\mathbb{H}(s)$ for $(\alpha,\lambda)$;
\If{$\mathbb{H}(s)$ is infeasible}
\State $s \leftarrow s+1$ 
\Else
\State $conv \leftarrow 1$
\EndIf
\EndWhile
\State \textbf{Return:} $(s,\alpha,\lambda)$
\end{algorithmic}
\label{alg:SAL_solving_procedure}
\end{algorithm}
In the formulation of $\mathbb{H}(s)$, we maximize $\alpha+\lambda$ for numerical stability. However, since we aim for feasibility, any objective can be used. We now show that $\mathbb{H}(s)$ can be implemented as a Second-Order Cone Program (SOCP)~\cite[Sec 4.4.2]{boyd2004convex}. Note that since $\alpha \in [0,1)$ and $\lambda \in [0,1]$, $\mathbb{H}(s)$ is bounded if feasible. 

\subsubsection{Implementation of $\mathbb{H}(s)$}
To implement $\mathbb{H}(s)$, we encode inclusions~\eqref{eq:inclusion_2_Y_typ}-\eqref{eq:inclusion_2_typ} using support functions.
Recalling that the set $\mathcal{Y}=\{y:Gy \leq g\}$ with $G \in \R^{m_{\mathcal{Y}} \times n_y}$ from~\eqref{eq:Y_orig_defn}, and denoting  $\tilde{\I}_{n_x}:=[\I_{n_x} \ \ -\I_{n_x}]^{\top},$
we note from~\eqref{eq:SF_prop_1}-\eqref{eq:SF_prop_2} that
\begin{subequations}
\begin{align}
\eqref{eq:inclusion_2_Y_typ} &\iff&	\frac{\lambda}{1-\alpha} \sum_{t=0}^{s-1} h_{CA^t \mathcal{B}_{\infty}^{n_x}}(G) &\leq g, \vspace{3pt} \label{eq:SF_2_Y_typ}\\
\eqref{eq:inclusion_1_typ} &\iff&	(\gamma+\lambda) h_{A^s \mathcal{B}_{\infty}^{n_x}}(\tilde{\I}_{n_x}) &\leq \alpha \lambda \1, \vspace{3pt} \label{eq:SF_1_typ} \\
\eqref{eq:inclusion_2_typ} &\iff&	\frac{\alpha \gamma + \lambda}{1-\alpha} \sum_{t=0}^{s-1} h_{A^t \mathcal{B}^{n_x}_{\infty}}(\tilde{\I}_{n_x}) &\leq \mu \1.  \label{eq:SF_2_typ}
\end{align}
\end{subequations}
Exploiting Equation~\eqref{eq:box_SF}, we then define the constants
\begin{subequations}
\label{eq:MLs_defn}
\begin{align}
L^{[s]}&:=\sum_{t=0}^{s-1}h_{CA^t\mathcal{B}_{\infty}^{n_x}}(G)=\sum_{t=0}^{s-1}|GCA^t|\1, \label{eq:Ls_definition} \\
\theta^{[s]}&:=\min_{i \in \mathbb{I}_1^{m_{\mathcal{Y}}}} \left\{g_i/L^{[s]}_i\right\},  \label{eq:thetas_definition} \\
M^{[s]}&:=\norm{\sum_{t=0}^{s-1}h_{A^t\mathcal{B}_{\infty}^{n_x}}(\tilde{\I}_{n_x})}_{\infty}=\norm{\sum_{t=0}^{s-1}|\tilde{\I}_{n_x}A^t|\1}_{\infty}, \label{eq:Ms_definition}
\end{align}	
\end{subequations}
and observe that the support function $$h_{A^s \mathcal{B}^{n_x}_{\infty}}(\tilde{\I}_{n_x})=|\tilde{\I}_{n_x} A^s|\1 \leq \norm{A^s}_{\infty}\1,$$ following the usual definition of $\infty$-norm for matrices.
Note that $L^{[s]} \in \R^{m_{\mathcal{Y}}}$ following from dimensions of matrix $G$.
Then, the support function inequalities~\eqref{eq:SF_2_Y_typ}-\eqref{eq:SF_2_typ} can be written after simple algebraic manipulations as
\begin{subequations}
\begin{align}
\eqref{eq:SF_2_Y_typ} &&\iff&& \lambda \leq (1-\alpha) \theta^{[s]}, \label{eq:ineq_2_Y_typ} \\
\eqref{eq:SF_1_typ} &&\iff&& (\gamma+\lambda) \norm{A^s}_{\infty} \leq \alpha \lambda, \label{eq:ineq_1_typ}\\
\eqref{eq:SF_2_typ} &&\iff&& (\alpha \gamma + \lambda) M^{[s]} \leq (1-\alpha) \mu. \label{eq:ineq_2_typ}
\end{align}
\end{subequations} 
Hence, for a given $s>0$, $\mathbb{H}(s)$ can be written as
\begin{subequations}
\label{eq:Hs_prop_1}
\begin{align}
\max_{\alpha,\lambda} & \hspace{5pt} \alpha+\lambda \\
\text{s.t.} & \hspace{5pt} \eqref{eq:ineq_2_Y_typ}, \  \eqref{eq:ineq_1_typ}, \ \eqref{eq:ineq_2_typ}, \\
& \hspace{5pt} \alpha \in [0,1), \ \lambda \in [0,1].
\end{align}
\end{subequations}
While Constraints~\eqref{eq:SF_2_Y_typ} and~\eqref{eq:SF_2_typ} in Problem~\eqref{eq:Hs_prop_1} are linear in $(\alpha,\lambda)$, Constraint~\eqref{eq:ineq_1_typ} is nonlinear. However, it can be written as a second-order cone (SOC) by exploiting the fact that $\alpha,\lambda \geq 0$  in the feasible domain as follows. In these arguments, we denote $\zeta:=\norm{A^s}_{\infty}$  for notational convenience.
\vspace{-15pt}
{\color{black}
\begin{subequations}
\begin{align}
&(\gamma+\lambda) \zeta \leq \alpha \lambda \quad
\Leftrightarrow  \quad \gamma\zeta \leq (\alpha-\zeta) \lambda \label{eq:SOC_constraint_pos}\\
&\Leftrightarrow  4\gamma\zeta+(\alpha -\zeta-\lambda)^2 \leq (\alpha -\zeta+\lambda)^2 \label{eq:SOC_constraint_sqrt_bef}\\
&\Leftrightarrow   \norm{ \begin{bmatrix} 2 \sqrt{\gamma\zeta} \\ \alpha -\zeta-\lambda \end{bmatrix} }_2 \leq \alpha -\zeta+\lambda. \label{eq:SOC_constraint}
\end{align}
\end{subequations}}
%
Taking the square-root on both sides of the inequality in~\eqref{eq:SOC_constraint_sqrt_bef} is valid since $\alpha-\zeta \geq 0$ from~\eqref{eq:SOC_constraint_pos} such that $\alpha-\zeta+\lambda \geq 0$, and the left-hand-side of~\eqref{eq:SOC_constraint_sqrt_bef} is nonnegative. 
The inequality in~\eqref{eq:SOC_constraint} is an SOC, such that Problem~\eqref{eq:Hs_prop_1} is an SOCP in two-dimensions that can be solved very efficiently.

\subsubsection{Termination of Algorithm~\ref{alg:SAL_solving_procedure}}
Algorithm~\ref{alg:SAL_solving_procedure} terminates in finite time 
if and only if, for the user-specified $\gamma,\mu>0$, there exists some index $s>0$ such that $\mathbb{H}(s)$ is feasible. This is equivalent to the existence of some $s>0$ such that the domain $\bm{\mathcal{L}}_{\gamma,\mu}(s)$ of $\mathbb{H}(s)$ defined in~\eqref{eq:L_HF_set} is nonempty, that is in turn equivalent to non-emptiness of the set $\Lambda_{\gamma,\mu}$. 
{\color{black} In the following result, we show that indeed $\Lambda_{\gamma,\mu}$ is nonempty by following an approach similar to that in~\cite{Rakovic2006}. 
 However, our method differs from this previous work due to the additional requirement of satisfying inclusions~\eqref{eq:inclusion_2_Y_typ} and~\eqref{eq:inclusion_1_typ}.}
%
%
%
\begin{theorem}
\label{thm:formulation_lem}
Suppose that
Assumptions~\ref{ass:0_in_Y_set} and \ref{ass:stable} hold. Then the set $\Lambda_{\gamma,\mu}$ of parameters $(s,\alpha,\lambda)$ verifying~\eqref{eq:inclusion_2_Y_typ}-\eqref{eq:inclusion_2_typ}  is nonempty for any user-specified $\gamma>0$ and $\mu>0$. $\hfill\square$
\end{theorem}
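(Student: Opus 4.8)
The plan is to bypass the set inclusions entirely and argue directly on the equivalent scalar inequalities \eqref{eq:ineq_2_Y_typ}, \eqref{eq:ineq_1_typ} and \eqref{eq:ineq_2_typ}, which by construction characterize membership in $\Lambda_{\gamma,\mu}$. First I would record the asymptotic facts that follow from Assumption~\ref{ass:stable}. Since $\rho(A)<1$, the nonnegative partial sums $L^{[s]}$ and $M^{[s]}$ in \eqref{eq:MLs_defn} are nondecreasing in $s$ and converge to finite limits $L^{[\infty]}$ and $M^{[\infty]}$, while $\zeta:=\norm{A^s}_{\infty}\to 0$ as $s\to\infty$. Because $g>\0$ by Assumption~\ref{ass:0_in_Y_set} and $L^{[s]}\le L^{[\infty]}$ componentwise, the quantity $\theta^{[s]}=\min_i\{g_i/L_i^{[s]}\}$ is bounded below by $\theta^{[\infty]}:=\min_i\{g_i/L_i^{[\infty]}\}>0$ uniformly in $s$.

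The core idea is that the three inequalities pull on $\lambda$ in opposite directions: \eqref{eq:ineq_1_typ} forces $\lambda$ to be at least as large as a quantity proportional to $\zeta$, whereas \eqref{eq:ineq_2_Y_typ} and \eqref{eq:ineq_2_typ} cap $\lambda$ from above. Since the lower bound vanishes as $s\to\infty$ while the upper bounds stay bounded away from zero, a feasible $\lambda$ must exist for $s$ large, provided $\alpha$ is chosen correctly. Concretely, I would first commit to an $\alpha\in(0,1)$ small enough that $\alpha\gamma M^{[\infty]}<(1-\alpha)\mu$; this is possible because the left-hand side tends to $0$ as $\alpha\to 0^{+}$ while the right-hand side tends to $\mu>0$. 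With $\alpha$ fixed, for all $s$ large enough to ensure $\zeta<\alpha$, inequality \eqref{eq:ineq_1_typ} is equivalent to $\lambda\ge\lambda^{\min}_s:=\gamma\zeta/(\alpha-\zeta)$, and $\lambda^{\min}_s\to 0$ as $s\to\infty$.

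I would then set $\lambda:=\lambda^{\min}_s$ (the tightest choice for \eqref{eq:ineq_1_typ}) and verify the remaining requirements for $s$ sufficiently large. Constraint \eqref{eq:ineq_2_Y_typ} holds because $\lambda^{\min}_s\to 0$ while $(1-\alpha)\theta^{[s]}\ge(1-\alpha)\theta^{[\infty]}>0$; the box membership $\lambda\in[0,1]$ holds since $0\le\lambda^{\min}_s\to 0$; and \eqref{eq:ineq_2_typ} holds since, using $M^{[s]}\le M^{[\infty]}$, we get $(\alpha\gamma+\lambda^{\min}_s)M^{[s]}\le(\alpha\gamma+\lambda^{\min}_s)M^{[\infty]}\to\alpha\gamma M^{[\infty]}<(1-\alpha)\mu$. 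Together with $\alpha\in[0,1)$, this exhibits a triple $(s,\alpha,\lambda)$ satisfying \eqref{eq:inclusion_2_Y_typ}-\eqref{eq:inclusion_2_typ}, proving $\Lambda_{\gamma,\mu}\neq\emptyset$.

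The main obstacle is the coupling in \eqref{eq:ineq_2_typ}: the term $\alpha\gamma M^{[s]}$ persists even as $\lambda\to 0$, so one cannot simply drive $\lambda$ down with $\alpha$ left arbitrary. The order of selection is therefore essential: $\alpha$ must be fixed first, based on the limiting inequality $\alpha\gamma M^{[\infty]}<(1-\alpha)\mu$, and only afterwards is $s$ sent large so that $\zeta\to 0$ collapses both the lower bound $\lambda^{\min}_s$ and the residual gaps in \eqref{eq:ineq_2_Y_typ} and \eqref{eq:ineq_2_typ}. This is precisely the extra difficulty relative to \cite{Rakovic2006}, where the feasibility requirement \eqref{eq:ineq_2_Y_typ} tying $\lambda$ to $\mathcal{Y}$ is absent.
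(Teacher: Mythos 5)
Your proof is correct, and it organizes the existence argument differently from the paper's. Both arguments work on the same scalar reformulations \eqref{eq:ineq_2_Y_typ}--\eqref{eq:ineq_2_typ} and rest on the same three asymptotic facts: $M^{[s]}\le\hat M<\infty$, $\theta^{[s]}\ge\hat\theta>0$ (your $M^{[\infty]}$, $\theta^{[\infty]}$), and $\norm{A^s}_{\infty}\to 0$; the difference is the order in which the parameters are pinned down. The paper fixes $\lambda$ first, setting $\lambda=\delta\hat q$ with $\hat q:=\min\{\min\{\mu,\hat M\}/\hat M,\,\hat\theta\}$, then shows through a two-case analysis that $\hat r(\delta):=\min\{(\mu-\delta\hat q\hat M)/(\mu+\gamma\hat M),\,1-\delta\hat q/\hat\theta\}$ lies in $(0,1)$, so that every $\alpha\in(0,\hat r(\delta)]$ satisfies \eqref{eq:inclusion_2_Y_typ} and \eqref{eq:inclusion_2_typ} \emph{for all} $s>0$; only at the end does it invoke $\norm{A^s}_{\infty}\to 0$ to pick $s=\hat s(\delta)$ so that the smallest $\alpha$ admissible for \eqref{eq:inclusion_1_typ}, namely $\alpha_{[s]}(\lambda)=(1+\gamma/\lambda)\norm{A^s}_{\infty}$, drops below $\hat r(\delta)$. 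You instead fix $\alpha$ first via the single strict inequality $\alpha\gamma\hat M<(1-\alpha)\mu$, slave $\lambda=\gamma\zeta/(\alpha-\zeta)$ to $s$ as the tightest choice for \eqref{eq:ineq_1_typ}, and let $s\to\infty$ collapse the remaining two constraints. Your ordering is leaner: it needs no auxiliary quantities $\delta$, $\hat q$, $\hat r(\delta)$ and no case analysis. The paper's ordering yields a pair $(\alpha,\lambda)$ that is feasible for the two $s$-uniform constraints at every $s$, which maps slightly more directly onto Algorithm~\ref{alg:SAL_solving_procedure}, where $s$ is incremented and $(\alpha,\lambda)$ re-solved at each step. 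One caveat: your closing claim that ``$\alpha$ must be fixed first'' is true within your construction but not intrinsic to the problem --- the paper fixes $\lambda$ first; what is essential, and you do identify it, is that whichever parameter is chosen independently of $s$ must be compatible with the limiting ($s\to\infty$) inequality before $s$ is sent large.
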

\begin{proof}
For any given $\gamma,\mu>0$ and some $s>0$, we recall from~\eqref{eq:ineq_2_typ} that inclusion~\eqref{eq:inclusion_2_typ} is verified if and only if
\begin{align}
\label{eq:to_check_in_Algo_1}
(1-\alpha)^{-1}(\alpha \gamma+\lambda)M^{[s]} \leq \mu.
\end{align}
By rearranging this inequality, we obtain
\begin{align}
\label{eq:to_check_in_Algo_1:1}
\alpha \leq  (\mu - \lambda M^{[s]})/(\mu + \gamma M^{[s]}).
\end{align}
Recalling the definition of $M^{[s]}$ from~\eqref{eq:Ms_definition}, we define its limit $\hat{M}:=\underset{s \to \infty}{\lim} M^{[s]}$,
and observe that since $\bigoplus_{t=0}^{\infty} A^t \mathcal{B}^{n_x}_{\infty}$ is compact under Assumption~\ref{ass:stable} \cite{Kolmanovsky1998} and the inclusion 
\begin{align}
\label{eq:truncated_sum}
\bigoplus_{t=0}^{s-1} A^t \mathcal{B}^{n_x}_{\infty} \subseteq \bigoplus_{t=0}^{\infty} A^t \mathcal{B}^{n_x}_{\infty}
\end{align}
holds for any $s > 0$, the inequalities $M^{[s]} \leq \hat{M} < \infty$ hold because of Equations~\eqref{eq:SF_prop_3} and \eqref{eq:SF_prop_2}. This implies that 
$$(\mu - \lambda \hat{M})/(\mu + \gamma \hat{M}) \leq  (\mu - \lambda M^{[s]})/(\mu + \gamma M^{[s]})$$
holds for any $s>0$. Hence, for some user-specified $\gamma,\mu>0$, if we select parameters $\alpha,\lambda \in (0,1)$ verifying the inequality
\begin{align}
\label{eq:cond_1_to_satisfy}
\alpha \leq (\mu - \lambda \hat{M})/(\mu + \gamma \hat{M}),
\end{align}
then inclusion~\eqref{eq:inclusion_2_typ} will be verified for any $s>0$.  

Regarding inclusion~\eqref{eq:inclusion_2_Y_typ}, we recall from~\eqref{eq:ineq_2_Y_typ} that it holds if and only if
\begin{align}
(1-\alpha)^{-1}{\lambda} \leq \theta^{[s]} \iff \alpha \leq 1-\lambda/\theta^{[s]}. \label{eq:to_check_in_Algo_2:2}
\end{align}
Recalling the definition of $\theta^{[s]}$ from~\eqref{eq:thetas_definition}, we define its limit
\begin{align}
\label{eq:thetahat_defns}
\hat{\theta}:=\lim_{s \to \infty} \theta^{[s]}.
\end{align}
From the definition of $L^{[s]}$ in~\eqref{eq:Ls_definition}, we observe that
$L^{[s]}_i$ is monotonically nondecreasing in $s$ for each component $i \in \mathbb{I}_1^{m_{\mathcal{Y}}}$ and $g$ is a constant, $\theta^{[s]}$ is monotonically nonincreasing in $s$. We then define $\hat{L}_i:=\underset{s \to \infty}{\lim} L_i^{[s]}$ such that $\hat{\theta}=\underset{{i \in \mathbb{I}_1^{m_{\mathcal{Y}}}}}{\min} \{g_i/\hat{L}_i\}$.
Since $\hat{L}_i<\infty$ under Assumption~\ref{ass:stable} and $g_i>0$ under Assumption~\ref{ass:0_in_Y_set} for each component
$i \in \mathbb{I}_1^{m_{\mathcal{Y}}}$, the inequalities $\theta^{[s]} \geq \hat{\theta}> 0$ hold. This implies that
$1- \lambda/\hat{\theta} \leq 1-\lambda/\theta^{[s]},$
such that that if we select some $\alpha,\lambda \in (0,1)$ satisfying
\begin{align}
\label{eq:cond_2_to_satisfy}
\alpha \leq 1-\lambda/\hat{\theta},
\end{align}
then inclusion~\eqref{eq:inclusion_2_Y_typ} will be verified for any $s>0$. 

Thus, if there exist parameters $\alpha,\lambda \in (0,1)$ verifying inequalities~\eqref{eq:cond_1_to_satisfy} and \eqref{eq:cond_2_to_satisfy}, then these parameters verify inclusions~\eqref{eq:inclusion_2_typ} and~\eqref{eq:inclusion_2_Y_typ} for all $s>0$. We now demonstrate the existence of such parameters. 
To this end, we define
\begin{align}
\label{eq:qhat_defn}
\hat{q}:=\min\left\{\mathrm{min}\{\mu,\hat{M}\}/\hat{M} \ , \quad \hat{\theta}\right\},
\end{align}
and select $\lambda=\delta \hat{q}$ for some $\delta \in (0,1)$.
For any user-specified $\mu>0$, we have $\hat{q} \in (0,1)$ since $\hat{\theta},\hat{M}>0$. This implies that $\lambda=\delta \hat{q} \in (0,1)$ for any $\delta \in (0,1)$. We also define
\begin{align}
\label{eq:rhat_defn}
\hat{r}(\delta):=\min\left\{(\mu - \delta \hat{q}\hat{M})/(\mu + \gamma \hat{M}) \ , \quad 1-\delta \hat{q}/\hat{\theta}\right\}.
\end{align}
Then, substituting $\lambda=\delta \hat{q}$ in inequalities~\eqref{eq:cond_1_to_satisfy} and \eqref{eq:cond_2_to_satisfy}, we observe that if $\hat{r}(\delta) \in (0,1)$, then any $\alpha \in(0,\hat{r}(\delta)]$	verifies these inequalities. Hence, it remains to show that $\hat{r}(\delta) \in (0,1)$. 

To that end, we note that since the inequalities  $\delta\hat{q}/\hat{\theta}>0$ and $\mu - \delta \hat{q}\hat{M} < \mu + \gamma \hat{M}$
hold for every $\delta \in (0,1)$, we always have $\hat{r}(\delta)<1$. In order to show that $\hat{r}(\delta)>0$, we consider the following two cases, which are based on Equation~\eqref{eq:qhat_defn}. 

\underline{Case 1:} if $\hat{q}=\mu/\hat{M} \leq \hat{\theta},$ for any $\delta \in (0,1)$,
we have either $\hat{r}(\delta)=\mu(1-\delta)/(\mu + \gamma \hat{M})$ or $\hat{r}(\delta)=1-\delta\mu/(\hat{M}\hat{\theta})$ .
Since $\mu,\gamma,\hat{M}>0$, the first option satisfies $\hat{r}(\delta)>0$. Regarding the second option, the inequality $\mu/\hat{M} \leq \hat{\theta}$ implies
\begin{align*}
\delta\mu<\mu \leq \hat{M}\hat{\theta} && \Rightarrow && 1-\delta\mu/(\hat{M}\hat{\theta})=\hat{r}(\delta)>0.
\end{align*}

%

\underline{Case 2:} if $\hat{q}=\hat{\theta} \leq \mu/\hat{M},$  for any $\delta \in (0,1)$ we have
either $\hat{r}(\delta)=(\mu-\delta \hat{\theta}\hat{M})/(\mu + \gamma \hat{M})$ or $\hat{r}(\delta)=1-\delta$.
Since $\delta \in (0,1)$, the second option always satisfies $\hat{r}(\delta)>0$. Regarding the first option, the inequality $\hat{\theta} \leq \mu/\hat{M}$ implies
\begin{align*}
\delta \hat{M}\hat{\theta} < \hat{M}\hat{\theta} \leq \mu && \Rightarrow && \mu-\delta \hat{M}\hat{\theta}>0,
\end{align*}
such that $\hat{r}(\delta)>0$ since $\mu,\gamma,\hat{M}>0$. Thus, for any $\delta \in (0,1)$, we have $\hat{r}(\delta) \in (0,1)$, such that 
the parameters $\lambda=\delta \hat{q}$ and $\alpha \in (0,\hat{r}(\delta)]$
verify inclusions~\eqref{eq:inclusion_2_typ} and~\eqref{eq:inclusion_2_Y_typ} for all $s>0$.

\noindent
Now, we show that there exists some $\alpha \in (0,\hat{r}(\delta)]$ and $s>0$ also satisfying inclusion \eqref{eq:inclusion_1_typ} with $\lambda=\delta \hat{q}$. To this end, we recall from~\eqref{eq:ineq_1_typ} that inclusion~\eqref{eq:inclusion_1_typ} is verified if and only if
\begin{align}
\label{eq:ineq_to_hold_3}
(\gamma+\lambda) \norm{A^s}_{\infty} \leq \alpha \lambda
\end{align}
holds. Then, defining
\begin{align}
\label{eq:alpha_s_definition}
\alpha_{[s]}(\lambda):=\left(1+\gamma/\lambda\right)\norm{A^s}_{\infty},
\end{align}
we observe that for a given $\lambda \in (0,1)$, $\gamma>0$ and $s>0$, $\alpha_{[s]}(\lambda)$ is the smallest value of $\alpha$ verifying inequality~\eqref{eq:ineq_to_hold_3}. This implies that if $\lambda=\delta \hat{q}$, then $\alpha_{[s]}(\delta \hat{q})$ verifies~\eqref{eq:ineq_to_hold_3}.
Then, since Assumption~\ref{ass:stable} entails that~\cite{Hogb06} 
$\underset{s \to \infty}{\lim} \norm{A^s}_{\infty} = 0,$ 
there always exists some $s=\hat{s}(\delta)$ such that $\alpha_{[\hat{s}(\delta)]}(\delta \hat{q})\leq \hat{r}(\delta)$. 
Since such a triplet of parameters 
\begin{align}
\label{eq:parameters_feasible}
(s=\hat{s}(\delta),\alpha\in (0,\hat{r}(\delta)],\lambda=\delta \hat{q}), && \forall \ \delta \in (0,1),
\end{align}
also satisfies \eqref{eq:inclusion_2_typ} and \eqref{eq:inclusion_2_Y_typ}, the set $\Lambda_{\gamma,\mu}$ is nonempty.  
\end{proof}

Thus, using Algorithm~\ref{alg:SAL_solving_procedure}, feasible parameters $(s,\alpha,\lambda)$ can be computed, using which Problem~\eqref{eq:approx_problem_to_solve_ideal_2} can be formulated with the guarantee that the set $\mathcal{R}(s,\alpha,\lambda,\mathcal{W})$ is $\mu$-RPI for every feasible disturbance set $\mathcal{W}$. We now recall again the formulation of this problem, in which we explicitly incorporate the objective $\mathrm{d}_{\mathcal{Y}}(\cdot)$ defined in Equation~\eqref{eq:objective_exact}:
\begin{subequations}
\label{eq:final_formulation_to_solve}
\begin{align}
\hspace{-5pt}
&\min_{\epsilon,\mathcal{W}} \quad \norm{\epsilon}_1 \\ & \quad \text{s.t.}  \quad
\mathcal{O}(s,\alpha,\lambda,\mathcal{W})\subseteq \mathcal{Y}, \label{eq:final_formulation_to_solve:con1} \\
&\qquad \quad \	B\mathcal{W} \subseteq \gamma \mathcal{B}^{n_x}_{\infty}, \label{eq:final_formulation_to_solve:con2} \\
&\qquad \quad \	\0 \in \mathcal{W}, \label{eq:final_formulation_to_solve:con3}  \\
&\qquad \quad \	\mathcal{Y} \subseteq \mathcal{S}(l,\mathcal{W}) \oplus \mathbb{B}(\epsilon). \label{eq:final_formulation_to_solve:con4} 
\end{align}
\end{subequations}
In the sequel, we present an efficient encoding of Problem~\eqref{eq:final_formulation_to_solve} followed by an optimization algorithm.

\begin{remark}
\label{remark:gamma_condition}
The conservativeness introduced due to constraint~\eqref{eq:final_formulation_to_solve:con2}, i.e., $B\mathcal{W} \subseteq \gamma \mathcal{B}_{\infty}^{n_x}$, in Problem~\eqref{eq:final_formulation_to_solve} with respect to Problem~\eqref{eq:approx_problem_to_solve_ideal}, can be eliminated by selecting a $\gamma>0$ large enough such that every disturbance set $\mathcal{W}$ feasible for constraint~\eqref{eq:final_formulation_to_solve:con1} satisfies~\eqref{eq:final_formulation_to_solve:con2}, thus rendering~\eqref{eq:final_formulation_to_solve:con2} inactive and making Problem~\eqref{eq:final_formulation_to_solve} fully equivalent to Problem~\eqref{eq:approx_problem_to_solve_ideal}. This, however, increases the complexity of Problem~\eqref{eq:final_formulation_to_solve}. As $\gamma>0$ increases for some fixed $\lambda \in (0,1)$ and $s>0$, the smallest value of $\alpha=\alpha_{[s]}(\lambda)$ verifying~\eqref{eq:ineq_to_hold_3} increases, as observed from the definition of~$\alpha_{[s]}(\lambda)$ in Equation~\eqref{eq:alpha_s_definition}. Then, the value $s>0$ required to verify~\eqref{eq:to_check_in_Algo_1:1} and~\eqref{eq:to_check_in_Algo_2:2} with $\alpha=\alpha_{[s]}(\lambda)$ increases. This implies that the set $\mathcal{O}(s,\alpha,\lambda,\mathcal{W})$ in~\eqref{eq:O_outside_Y} and formulating~\eqref{eq:final_formulation_to_solve:con1} requires a larger number of Minkowski sums for its characterization, thus increasing complexity.
$\hfill\square$
\end{remark}
{\color{black}
\begin{remark}
\label{remark:slow_system}
The level of conservativeness introduced by the parameter $\mu > 0$ is dependent on the scale of the set $\mathcal{Y}$ and singular values of matrix $C$, as observed from~\eqref{eq:S1S2S3_defn}. Clearly, smaller values of $\mu$ lead to reduced conservativeness. However, for systems with $\rho(A)\approx 1$, small values of $\mu$ require large values of parameter $s$ to verify~\eqref{eq:to_check_in_Algo_1:1} and~\eqref{eq:to_check_in_Algo_2:2}. This, in turn, increases the complexity of Problem~\eqref{eq:final_formulation_to_solve}, as more Minkowski sums are needed to characterize $\mathcal{O}(s,\alpha,\lambda,\mathcal{W})$. Thus, $\mu$ must be selected by considering these three aspects.
%
	$\hfill\square$
\end{remark}
}
\section{Solving Problem \eqref{eq:final_formulation_to_solve}}
\label{sec:optimization_algorithm}
We now present a tractable encoding of the constraints of Problem \eqref{eq:final_formulation_to_solve}, followed by an optimization algorithm.
\subsection{Inclusions \eqref{eq:final_formulation_to_solve:con1} and \eqref{eq:final_formulation_to_solve:con2}} 
To encode inclusion~\eqref{eq:final_formulation_to_solve:con1}, i.e., $\mathcal{O}(s,\alpha,\lambda,\mathcal{W})\subseteq \mathcal{Y}$, 
we recall that the constraint set is given by $\mathcal{Y}=\{y:Gy \leq g\},$ and the set $\mathcal{O}(s,\alpha,\lambda,\mathcal{W})$ is defined in Equation~\eqref{eq:O_outside_Y} as
\begin{align*}
\mathcal{O}(s,\alpha,\lambda,\mathcal{W})=(1-\alpha)^{-1}\bigoplus_{t=0}^{s-1}CA^t(B\mathcal{W} \oplus \lambda \mathcal{B}_{\infty}^{n_x}) \oplus D\mathcal{W}.
\end{align*}
Defining the matrices
\begin{align}
\label{eq:Gt_definitions}
\bar{G}_{[t]}:=(1-\alpha)^{-1}GCA^t, && \forall \ t \in \mathbb{I}_0^{s-1},
\end{align}
we observe that the inclusion is verified according to Equations~\eqref{eq:SF_prop_1}-\eqref{eq:SF_prop_2} if and only if the support function inequality
\begin{align}
\label{eq:con_1_SF}
\sum_{t=0}^{s-1}h_{B\mathcal{W}}(\bar{G}_{[t]})+h_{D\mathcal{W}}(G)\leq g-\lambda\sum_{t=0}^{s-1}h_{ \mathcal{B}_{\infty}^{n_x}}(\bar{G}_{[t]})
\end{align}
%
holds .
Similarly, inclusion \eqref{eq:final_formulation_to_solve:con2}, i.e., $B\mathcal{W} \subseteq \gamma \mathcal{B}^{n_x}_{\infty}$, is verified if and only if the support function inequality
\begin{align}
h_{B\mathcal{W}}(\tilde{\I}_{n_x}) &\leq \gamma\1, \label{eq:con_2_SF}
\end{align}
holds according to Equation~\eqref{eq:SF_prop_2}.
In order to encode the support function inequalities in~\eqref{eq:con_1_SF} and \eqref{eq:con_2_SF} efficiently, we propose to use the disturbance set parameterization 	
\begin{subequations}
\label{eq:convex_hull_of_boxes}
\begin{align}
&\hspace{12pt} \mathcal{W}=\mathrm{ConvHull}\left(\mathbb{W}(\bar{w}_{[j]},\epsilon^w_{[j]}), \ j \in \mathbb{I}_1^N\right), \\
&\mathbb{W}(\bar{w}_{[j]},\epsilon^w_{[j]}):=\bar{w}_{[j]} \oplus \{w: -\epsilon^w_{[j]} \leq w \leq \epsilon^w_{[j]}\}, 
\end{align}
\end{subequations}
i.e., as a convex hull of boxes $\{\mathbb{W}(\bar{w}_{[j]},\epsilon^w_{[j]}), j \in \mathbb{I}_1^N\}$ where $N \geq 1$ is a user-specified amount of boxes, such that $\mathcal{W}$ is characterized by parameters $\{\bar{w}_{[j]},\epsilon^w_{[j]} \in \R^{n_w}, j \in \mathbb{I}_1^N\}$. This parameterization presents a representational advantage over popular polytopic parameterizations such as zonotopes \cite{Sadraddini2019} that are constrained to be symmetric and hence can be conservative, and a computational advantage over parameterizations such as halfspace-representations~\cite{Mulagaleti2020}, zonotopic intersections~\cite{Althoff2011}, constrained zonotopes~\cite{Scott2016}, etc. that are not immediately amenable to a simple encodings of the support function inequalities~\eqref{eq:con_1_SF}-\eqref{eq:con_2_SF}. {\color{black} The number of variables required to represent this parameterization is $N \times 2n_w$. We note that a pure vertex parameterization of $\mathcal{W}$ is a special case of the parameterization in~\eqref{eq:convex_hull_of_boxes} with the additional constraint $\epsilon^w_{[j]}=0$ for all $j \in \mathbb{I}_1^N$. Hence, the techniques presented in the sequel apply to the vertex parameterization with minor modifications. 
\begin{remark}
In order to represent a polytope $\tilde{\mathcal{W}}$ characterized by $M$ vertices, a purely vertex parameterization of $\mathcal{W}$ requires $L_{\mathrm{v}}=M \times n_w$ number of variables, while the parameterization in~\eqref{eq:convex_hull_of_boxes} requires $L_{\mathrm{b}} \in \left[\mathrm{ceil}\left({M}/{2^{n_w}}\right) , M \right]\times 2n_w$ number of variables. The lower bound on $L_{\mathrm{b}}$ follows since each box can describe at most $2^{n_w}$ vertices of $\tilde{\mathcal{W}}$, and the upper bound follows if each $\epsilon^w_{[j]}=\0$. The exact value of $L_{\mathrm{b}}$ depends on the geometry of  $\tilde{\mathcal{W}}$, and the development of results relating them is a subject of future study. In our experiments on high-dimensional systems, we frequently encountered $L_{\mathrm{b}} << L_{\mathrm{v}}$, emperically demonstrating that the parameterization in~\eqref{eq:convex_hull_of_boxes} would be preferable over a pure vertex parameterization of $\mathcal{W}$. This is expected, since it is known that the number of vertices $M$ can increase exponentially with the dimension $n_w$.  $\hfill\square$
\end{remark}	} 
%
%
%

To encode~\eqref{eq:con_1_SF} and \eqref{eq:con_2_SF} for the disturbance set parameterization in~\eqref{eq:convex_hull_of_boxes}, we rely on the following general result regarding support functions over convex hulls of polytopes.
\begin{proposition}
\label{prop:conv_hull}
Given any polytopes $\{\mathcal{Q}_j \subset \R^n, \ j \in \mathbb{I}_1^q\}$, and denoting the convex hull $\hat{\mathcal{Q}}:=\mathrm{ConvHull}(\mathcal{Q}_j, \ j \in \mathbb{I}_1^q)$, then for any matrix $\mathbf{T} \in \R^{l \times n}$ and vector $\mathbf{p} \in \R^{l}$, the support function for the set $\mathbf{T}\hat{\mathcal{Q}}$ at $\mathbf{p}$ is given by \begin{align}
\label{eq:convhull_SF_toshow}
\hspace{60pt}
h_{\mathbf{T}\hat{\mathcal{Q}}}(\mathbf{p})=\max_{j \in \mathbb{I}_1^q} \ h_{\mathbf{T}\mathcal{Q}_j}(\mathbf{p}). && \qquad \quad   \hfill\square
\\ \nonumber
\end{align}
\end{proposition}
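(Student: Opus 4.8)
The plan is to prove the identity by sandwiching $h_{\mathbf{T}\hat{\mathcal{Q}}}(\mathbf{p})$ between the two sides with a pair of inequalities. First I would dispatch the ``$\geq$'' direction, which is immediate from monotonicity: since each $\mathcal{Q}_j \subseteq \hat{\mathcal{Q}}$ by the definition of the convex hull, Proposition~\ref{prop:basic_properties}($a$) gives $\mathbf{T}\mathcal{Q}_j \subseteq \mathbf{T}\hat{\mathcal{Q}}$, and the support function in~\eqref{eq:SF_prop_1} is a maximum of a linear functional over a set, hence monotone under inclusion. Therefore $h_{\mathbf{T}\mathcal{Q}_j}(\mathbf{p}) \leq h_{\mathbf{T}\hat{\mathcal{Q}}}(\mathbf{p})$ for every $j \in \mathbb{I}_1^q$, and taking the maximum over $j$ yields $\max_{j} h_{\mathbf{T}\mathcal{Q}_j}(\mathbf{p}) \leq h_{\mathbf{T}\hat{\mathcal{Q}}}(\mathbf{p})$.

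For the reverse ``$\leq$'' direction I would exploit the structure of points in the convex hull. The key fact is that every $w \in \hat{\mathcal{Q}}$ admits a representation $w = \sum_{j=1}^q \theta_j w_j$ with $w_j \in \mathcal{Q}_j$, $\theta_j \geq 0$, and $\sum_{j} \theta_j = 1$. This holds because any point of $\mathrm{ConvHull}(\bigcup_j \mathcal{Q}_j)$ is a finite convex combination of points drawn from the union; grouping the terms that belong to a common set $\mathcal{Q}_j$ and invoking the convexity of $\mathcal{Q}_j$ lets one collapse each group into a single representative $w_j \in \mathcal{Q}_j$ carrying the aggregated weight $\theta_j$. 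Then, using the linearity of $w \mapsto \mathbf{p}^\top \mathbf{T} w$ together with the definition~\eqref{eq:SF_prop_1},
$$\mathbf{p}^\top \mathbf{T} w = \sum_{j=1}^q \theta_j \, \mathbf{p}^\top \mathbf{T} w_j \leq \sum_{j=1}^q \theta_j \, h_{\mathbf{T}\mathcal{Q}_j}(\mathbf{p}) \leq \max_{k \in \mathbb{I}_1^q} h_{\mathbf{T}\mathcal{Q}_k}(\mathbf{p}),$$
where the final step uses $\theta_j \geq 0$ and $\sum_j \theta_j = 1$. Since the compactness of the polytopes guarantees the relevant maxima are attained, taking the supremum of the left-hand side over $w \in \hat{\mathcal{Q}}$ gives $h_{\mathbf{T}\hat{\mathcal{Q}}}(\mathbf{p}) \leq \max_k h_{\mathbf{T}\mathcal{Q}_k}(\mathbf{p})$.

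Combining the two inequalities establishes the claimed equality. The only nontrivial ingredient — and thus the main obstacle — is justifying the ``one representative per set'' convex-combination representation of an arbitrary point of $\hat{\mathcal{Q}}$; this is where the convexity of each individual $\mathcal{Q}_j$ is essential, as it permits merging all contributions from the same set into a single point. Once that representation is in hand, the remainder follows purely from linearity and the monotonicity of the support function, with no further computation required.
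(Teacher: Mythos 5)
Your proof is correct, but it follows a genuinely different route from the paper's. The paper argues by contradiction using the polytope structure: it records that $\mathrm{vert}(\hat{\mathcal{Q}}) \subseteq \cup_{j=1}^q \mathrm{vert}(\mathcal{Q}_j)$ and that no point of $\cup_{j=1}^q \mathcal{Q}_j$ can lie outside $\hat{\mathcal{Q}}$, then shows that either strict inequality in~\eqref{eq:convhull_SF_toshow} would force the existence of a maximizing vertex (respectively, a point) violating one of these two facts. Your argument instead sandwiches $h_{\mathbf{T}\hat{\mathcal{Q}}}(\mathbf{p})$ directly: the ``$\geq$'' half from monotonicity of support functions under inclusion, and the ``$\leq$'' half from the decomposition $w = \sum_j \theta_j w_j$ with $w_j \in \mathcal{Q}_j$, followed by linearity. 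Two remarks on the comparison. First, your key ingredient --- the ``one representative per set'' decomposition --- is precisely the content of the paper's own Proposition~\ref{prop:convhull_iff_prop}, which the paper proves separately and later; so your route effectively derives Proposition~\ref{prop:conv_hull} from (the argument behind) Proposition~\ref{prop:convhull_iff_prop}, whereas the paper keeps the two independent. Second, your proof is the more general one: it never uses vertices, so it applies verbatim when the $\mathcal{Q}_j$ are arbitrary compact convex sets rather than polytopes, and in fact the grouping step you flag as the main obstacle can even be skipped --- writing $w$ as any finite convex combination $\sum_i \lambda_i z_i$ with $z_i \in \mathcal{Q}_{j(i)}$ and bounding $\mathbf{p}^\top \mathbf{T} z_i \leq h_{\mathbf{T}\mathcal{Q}_{j(i)}}(\mathbf{p})$ term by term already closes the argument, without needing convexity of the individual $\mathcal{Q}_j$ at all. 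The paper's vertex-based contradiction is shorter to state given that all sets in its setting are polytopes, but buys nothing beyond that setting.
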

\begin{proof}
Since the set $\hat{\mathcal{Q}}$ is a polytope that is the convex hull of polytopes $\{\mathcal{Q}_j, \ j \in \mathbb{I}_1^q\}$, we know that 
\begin{subequations}
\begin{align}
\mathrm{vert}(\hat{\mathcal{Q}}) \ &\subseteq \ \cup_{j=1}^q \mathrm{vert}(\mathcal{Q}_j), \label{eq:contradiction_1} \\
\nexists \ \tilde{z} \ : \ \tilde{z} &\in \cup_{j=1}^q  \mathcal{Q}_j, \  \tilde{z} \notin \hat{\mathcal{Q}}. \label{eq:contradiction_2}
\end{align}
\end{subequations}
Defining $\mathbf{r}:=\mathbf{T}^{\top} \mathbf{p}$, we know from the support function definition in \eqref{eq:SF_prop_1} that \eqref{eq:convhull_SF_toshow} holds if and only if
\begin{align}
\label{eq:convhull_LP_toshow}
\max_{z \in \hat{\mathcal{Q}}} \ \mathbf{r}^{\top} z = \max_{j \in \mathbb{I}_1^q} \ \left\{\max_{z_{[j]} \in \mathcal{Q}_j} \ \mathbf{r}^{\top} z_{[j]} \right\}.
\end{align}
We now prove \eqref{eq:convhull_LP_toshow} by contradiction.
Suppose that
\begin{align}
\label{eq:convhull_LP_toshow_contra1}
\max_{z \in \hat{\mathcal{Q}}} \ \mathbf{r}^{\top} z \ > \ \max_{j \in \mathbb{I}_1^q} \ \left\{\max_{z_{[j]} \in \mathcal{Q}_j} \ \mathbf{r}^{\top} z_{[j]} \right\},
\end{align}
which holds if and only if there exists a vertex ${z^* \in \mathrm{vert}(\hat{\mathcal{Q}})}$ such that $\mathbf{r}^{\top} z^*=\underset{z \in \hat{\mathcal{Q}}}{\max} \ \mathbf{r}^{\top} z \ \ \text{and} \ \ z^* \notin\cup_{j=1}^q \mathrm{vert}(\mathcal{Q}_j).$ Since this contradicts \eqref{eq:contradiction_1}, \eqref{eq:convhull_LP_toshow_contra1} cannot hold.
Now, suppose
\begin{align}
\label{eq:convhull_LP_toshow_contra2}
\max_{z \in \hat{\mathcal{Q}}} \ \mathbf{r}^{\top} z \ < \ \max_{j \in \mathbb{I}_1^q} \ \left\{\max_{z_{[j]} \in \mathcal{Q}_j} \ \mathbf{r}^{\top} z_{[j]} \right\}, 
\end{align}
which holds if and only if there exists some $\tilde{z}^*$ such that ${\tilde{z}^* \in\cup_{j=1}^q  \mathcal{Q}_j} \ \ \text{and}  \ \ \tilde{z}^* \notin \hat{\mathcal{Q}}.$ Since this contradicts \eqref{eq:contradiction_2}, \eqref{eq:convhull_LP_toshow_contra2} cannot hold and the proof is complete.
\end{proof}

Using Proposition~\ref{prop:conv_hull} with $\mathcal{Q}_j = \mathbb{W}(\bar{w}_{[j]},\epsilon^w_{[j]})$, $q=N$ and $\hat{\mathcal{Q}}=\mathcal{W}$ as per the definition of the disturbance set in~\eqref{eq:convex_hull_of_boxes}, and recalling the expression for support functions of over boxes from~\eqref{eq:box_SF}, the support function over $\mathcal{W}$ for any given matrix $\mathbf{T} \in \R^{l \times n_w}$ and vector $\mathbf{p} \in \R^{l}$ is obtained as
\begin{align}
\label{eq:box_SF_hull}
h_{\mathbf{T}\mathcal{W}}(\mathbf{p})=\max_{j \in \mathbb{I}_1^N} \ \{ \mathbf{p}^{\top} \mathbf{T}\bar{w}_{[j]} + |\mathbf{p}^{\top} \mathbf{T}|\epsilon^w_{[j]} \}.
\end{align}
We now exploit \eqref{eq:box_SF_hull} to encode \eqref{eq:con_1_SF}-\eqref{eq:con_2_SF} as linear inequalities. 
Firstly, inequality~\eqref{eq:con_1_SF} holds by Equation~\eqref{eq:box_SF_hull} if and only if 
\begin{align}
&\hspace{-15pt}\sum_{t=0}^{s-1} \left( \max_{j \in \mathbb{I}_1^N}
\left\{\bar{G}_{[t]}B \bar{w}_{[j]} + |\bar{G}_{[t]}B| \epsilon^w_{[j]}\right\}
\right) + \label{eq:full_SF_for_Y}\nonumber\\
&\hspace{50pt}\left( 
\max_{j \in \mathbb{I}_1^N}
\left\{GD \bar{w}_{[j]} + |GD| \epsilon^w_{[j]}\right\}
\right) \nonumber \\
&\hspace{120pt}\leq g - \lambda\sum_{t=0}^{s-1} |\bar{G}_{[t]}|\1. 
\end{align}
To encode~\eqref{eq:full_SF_for_Y}, we introduce $\mathbf{Q}:=\{\mathbf{Q}_{[t]} \in \R^{m_{\mathcal{Y}}}, \ t \in \mathbb{I}_0^{s-1}\}$ and $\mathbf{r} \in \R^{m_{\mathcal{Y}}}$, along with the inequalities
\begin{align}
\label{eq:Qr_conditions}
\begin{matrix*}[l]\forall \ j \in \mathbb{I}_1^N, \vspace{3pt} \\ \forall \ t \in \mathbb{I}_0^{s-1} \end{matrix*}
\begin{cases}
\bar{G}_{[t]} B \bar{w}_{[j]}+|\bar{G}_{[t]}B|\epsilon^w_{[j]} \leq \mathbf{Q}_{[t]}, \vspace{5pt} \\
GD\bar{w}_{[j]}+|GD|\epsilon^w_{[j]} \leq \mathbf{r}.
\end{cases}
\end{align}
%
%
Then, the inequality in~\eqref{eq:full_SF_for_Y} holds if and only if there exists some $(\mathbf{Q},\mathbf{r})$ satisfying the inequalities in~\eqref{eq:Qr_conditions} along with
\begin{align}
\label{eq:Qr_conditions_2}
\sum_{t=0}^{s-1} \mathbf{Q}_{[t]i} + \mathbf{r}_{i} \leq g_i - \lambda\sum_{t=0}^{s-1}|\bar{G}_{[t]i}| \1, && \forall \ i \in \mathbb{I}_1^{m_{\mathcal{Y}}}.
\end{align}
Thus, we encode the support function inequality in~\eqref{eq:con_1_SF} for the disturbance set parameterization in~\eqref{eq:convex_hull_of_boxes} as the linear inequalities~\eqref{eq:Qr_conditions} and~\eqref{eq:Qr_conditions_2}.
%
%
%
%
%
%
Similarly, support function inequality~\eqref{eq:con_2_SF} holds according to Equation~\eqref{eq:box_SF_hull}
if and only if
\begin{align}
\label{eq:inclusion_in_BW}
\tilde{\I}_{n_x} B \bar{w}_{[j]}+|\tilde{\I}_{n_x} B|\epsilon^w_{[j]} \leq \gamma \1, && \forall \ j \in \mathbb{I}_1^N.
\end{align}
\subsection{Inclusions \eqref{eq:final_formulation_to_solve:con3} and \eqref{eq:final_formulation_to_solve:con4}}	
We encode inclusion~\eqref{eq:final_formulation_to_solve:con3}, i.e., $\0 \in \mathcal{W}$, by enforcing the inclusion $\0 \in \mathbb{W}(\bar{w}_{[1]},\epsilon^w_{[1]})$ for simplicity through
\begin{align}
\label{eq:0_in_hull}
\begin{bmatrix} \I_{n_w} \\ -\I_{n_w} \end{bmatrix} \0 \leq \begin{bmatrix} \epsilon^w_{[1]} \\\epsilon^w_{[1]} \end{bmatrix} + \begin{bmatrix} \I_{n_w} \\ -\I_{n_w} \end{bmatrix} \bar{w}_{[1]} 
\end{align}
%
%
since $\0 \in \mathbb{W}(\bar{w}_{[1]},\epsilon^w_{[1]})$ implies $\0  \in \mathcal{W}$ from \eqref{eq:convex_hull_of_boxes}.

\vspace{5pt}
In order to encode inclusion~\eqref{eq:final_formulation_to_solve:con4}, i.e., $\mathcal{Y} \subseteq \mathcal{S}(l,\mathcal{W}) \oplus \mathbb{B}(\epsilon),$ we recall from~\eqref{eq:l_step_reachable} that
$
\mathcal{S}(l,\mathcal{W})=\bigoplus_{t=0}^{l-1} CA^t B \mathcal{W}\oplus D \mathcal{W}
$
is the $l$-step reachable set.
Note that the inclusion holds according to~\eqref{eq:SF_prop_3}-\eqref{eq:set_inclusion_generic_compact} if and only if the support function inequality
\begin{align}
\label{eq:impossible_inclusion}
h_{\mathcal{Y}}(\mathbf{p})  \leq \sum_{t=0}^{l-1}h_{CA^t B\mathcal{W}}(\mathbf{p}) +h_{D\mathcal{W}}(\mathbf{p})+h_{\mathbb{B}(\epsilon)}(\mathbf{p})
\end{align}
is verified for all $\mathbf{p}\in \R^{n_y}$. If the hyperplane notation of $\mathcal{W}$ is known, then \cite[Theorem 1]{Sadraddini2019} can be used to to derive sufficient linear conditions for~\eqref{eq:impossible_inclusion}. Unfortunately, since the hyperplane notation is unknown a priori, we rely on Proposition~\ref{prop:basic_properties}($e$) to encode inclusion~\eqref{eq:final_formulation_to_solve:con4}. Denoting
$\{\mathrm{y}_{[i]}, \ i \in \mathbb{I}_1^{v_{\mathcal{Y}}}\} := \mathrm{vert}(\mathcal{Y}),$
and recalling the definition of $\mathcal{S}(l,\mathcal{W})$ from~\eqref{eq:l_step_reachable}, we know from Proposition~\ref{prop:basic_properties}($e$) that inclusion~\eqref{eq:final_formulation_to_solve:con4} holds if and only if
\begin{subequations}
\label{eq:inclusion_conditions_outer_Y}
\begin{align}
&\forall \ i \in \mathbb{I}_1^{v_{\mathcal{Y}}}, \ \scalemath{1.5}{\exists} \begin{cases}  
& \hspace{-5pt} \left\{\{\mathrm{w}^1_{[it]}, \ t \in \mathbb{I}_0^{l-1}\}, \ \mathrm{w}^2_{[i]}\right\} \in \mathcal{W},\\
& \hspace{-5pt} \ \mathrm{b}_{[i]} \in \mathbb{B}(\epsilon),
\end{cases}
\label{eq:inclusion_conditions_outer_Y:2} \\
& \text{such that }\mathrm{y}_{[i]}=\sum_{t=0}^{l-1} CA^{l-1-t}B\mathrm{w}_{[it]}^1+D\mathrm{w}_{[i]}^2+\mathrm{b}_{[i]}, \label{eq:inclusion_conditions_outer_Y:4}
\end{align}
\end{subequations}
%
where, for each $i \in \mathbb{I}_1^{v_{\mathcal{Y}}}$, feasible disturbance sequences in~\eqref{eq:inclusion_conditions_outer_Y:2} drive the output of System~\eqref{eq:system} in $l$-steps
to some
$$\bm{\mathrm{y}}_{[i]}(l):=\sum_{t=0}^{l-1} CA^{l-1-t}B\mathrm{w}_{[it]}^1+D\mathrm{w}_{[i]}^2,$$
that belongs in the vicinity of vertex $\mathrm{y}_{[i]}$ as $\mathrm{y}_{[i]} - \bm{\mathrm{y}}_{[i]}(l) \in \mathbb{B}(\epsilon).$
We will show in the sequel that the conditions in~\eqref{eq:inclusion_conditions_outer_Y} verifying inclusion~\eqref{eq:final_formulation_to_solve:con4} can be tractably encoded using necessary and sufficient conditions for the disturbance set parameterization in~\eqref{eq:convex_hull_of_boxes}. To that end, we make the following assumption.
%
%
%
%
%
%
\begin{assumption}
\label{ass:Y_vertices_known}
Vertices $\{\mathrm{y}_{[i]}, i \in \mathbb{I}_1^{v_{\mathcal{Y}}}\}$ of $\mathcal{Y}$ are given.
$\hfill\square$
\end{assumption}

In order to encode the conditions in~\eqref{eq:inclusion_conditions_outer_Y}, we need to enforce the point-wise constraints $\mathrm{w}^1_{[it]} \in \mathcal{W}$ and $\mathrm{w}^2_{[i]}\in \mathcal{W}$ in~\eqref{eq:inclusion_conditions_outer_Y:2}. According to the parameterization of $\mathcal{W}$ in~\eqref{eq:convex_hull_of_boxes}, these constraints can be enforced by guaranteeing that points $\mathrm{w}^1_{[it]}$ and $\mathrm{w}^2_{[i]}$ belong to the convex hull of boxes $\mathbb{W}(\bar{w}_{[j]},\epsilon^w_{[j]})$.
In the following result, we show that this condition is equivalent to enforcing $\mathrm{w}^1_{[it]}$ and $\mathrm{w}^2_{[i]}$ to belong to convex hulls of points, with each point belonging to a box $\mathbb{W}(\bar{w}_{[j]},\epsilon^w_{[j]})$.

\begin{proposition}
\label{prop:convhull_iff_prop}
Given the disturbance set parametrization in Equation~\eqref{eq:convex_hull_of_boxes}, there exists some $\mathfrak{w} \in \mathcal{W}$ if and only if there exist some $\mathfrak{w}_{[j]} \in \mathbb{W}(\bar{w}_{[j]},\epsilon^w_{[j]})$ for each $j \in \mathbb{I}_1^N$ such that $\mathfrak{w} \in \mathrm{ConvHull}(\mathfrak{w}_{[j]},j \in \mathbb{I}_1^N)$.  $\hfill\square$
\end{proposition}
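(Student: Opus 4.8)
The plan is to prove the two implications separately, with the forward direction (that $\mathfrak{w} \in \mathcal{W}$ forces the existence of the per-box representatives) being the substantive one.

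For the reverse implication, suppose that for each $j \in \mathbb{I}_1^N$ there exists $\mathfrak{w}_{[j]} \in \mathbb{W}(\bar{w}_{[j]},\epsilon^w_{[j]})$ with $\mathfrak{w} \in \mathrm{ConvHull}(\mathfrak{w}_{[j]},j \in \mathbb{I}_1^N)$. Since by the definition of $\mathcal{W}$ in~\eqref{eq:convex_hull_of_boxes} each box satisfies $\mathbb{W}(\bar{w}_{[j]},\epsilon^w_{[j]}) \subseteq \mathcal{W}$, every $\mathfrak{w}_{[j]}$ lies in $\mathcal{W}$; convexity of $\mathcal{W}$ then immediately gives $\mathfrak{w} \in \mathcal{W}$. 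This direction requires no further work.

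For the forward implication, I would invoke the characterization of the convex hull of a union of convex sets. Since $\mathfrak{w} \in \mathcal{W} = \mathrm{ConvHull}(\cup_{j=1}^N \mathbb{W}(\bar{w}_{[j]},\epsilon^w_{[j]}))$, by the definition of the convex hull $\mathfrak{w}$ is a finite convex combination $\mathfrak{w} = \sum_{k=1}^m \mu_k p_k$ of points $p_k$, each belonging to some box $\mathbb{W}(\bar{w}_{[j(k)]},\epsilon^w_{[j(k)]})$, with $\mu_k \geq 0$ and $\sum_{k=1}^m \mu_k = 1$. The key step is to aggregate, for each box index $j$, the contributions of all points assigned to it: I set $\theta_j := \sum_{k : j(k)=j} \mu_k$, and whenever $\theta_j > 0$ I define $\mathfrak{w}_{[j]} := \theta_j^{-1} \sum_{k : j(k)=j} \mu_k p_k$. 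Because this $\mathfrak{w}_{[j]}$ is a convex combination of points all lying in the convex set $\mathbb{W}(\bar{w}_{[j]},\epsilon^w_{[j]})$, it again lies in that box. For the remaining indices with $\theta_j = 0$, which receive no contribution, I simply pick the representative $\mathfrak{w}_{[j]} := \bar{w}_{[j]} \in \mathbb{W}(\bar{w}_{[j]},\epsilon^w_{[j]})$. Collecting terms yields $\mathfrak{w} = \sum_{j=1}^N \theta_j \mathfrak{w}_{[j]}$ with $\theta_j \geq 0$ and $\sum_{j=1}^N \theta_j = 1$, i.e., $\mathfrak{w} \in \mathrm{ConvHull}(\mathfrak{w}_{[j]}, j \in \mathbb{I}_1^N)$, as required.

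The only delicate point is the bookkeeping in the forward direction: one must produce a valid representative $\mathfrak{w}_{[j]}$ for \emph{every} box, including those that contribute nothing to the convex combination expressing $\mathfrak{w}$. This is handled by choosing the box center $\bar{w}_{[j]}$ (or any point of the nonempty box) for such indices, with the weight $\theta_j = 0$ guaranteeing that this choice leaves the combination unchanged. The use of convexity of the individual boxes to collapse the several points drawn from one box into a single representative is exactly what makes the per-box aggregation valid, and it is the crux of the argument.
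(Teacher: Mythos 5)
Your proposal is correct and follows essentially the same argument as the paper: the easy direction uses that each box lies in $\mathcal{W}$ together with convexity, and the substantive direction aggregates the per-box contributions of a convex combination, normalizes when the aggregated weight is positive, and assigns an arbitrary point of the box (the paper says ``any point,'' you choose the center $\bar{w}_{[j]}$) when the weight is zero. The only cosmetic difference is that the paper writes the initial convex combination in terms of the box vertices, whereas you use arbitrary points of the boxes; the aggregation and zero-weight bookkeeping, which you correctly identify as the crux, are identical.
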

\begin{proof}
Sufficiency follows by observing that $$\mathfrak{w} \in \mathrm{ConvHull}(\mathfrak{w}_{[j]} \in \mathbb{W}(\bar{w}_{[j]},\epsilon^w_{[j]}) ,j \in \mathbb{I}_1^N) \Rightarrow \mathfrak{w} \in \mathcal{W}$$ from Equation~\eqref{eq:convex_hull_of_boxes}.
For the necessary condition, we define $\{\mathfrak{v}_{[ji]},i \in \mathbb{I}_1^{2^{n_w}}\}:=\mathrm{vert}(\mathbb{W}(\bar{w}_{[j]},\epsilon^w_{[j]})),$
and note that if the point $\mathfrak{w} \in \mathcal{W}$, then there exist $\mathfrak{p}_{[ji]} \geq 0$ satisfying
\begin{align}
\label{eq:condition_sufficiency}
\mathfrak{w}=\sum_{j=1}^N \left(\sum_{i=1}^{2^{n_w}} \mathfrak{p}_{[ji]}\mathfrak{v}_{[ji]}\right), \qquad \sum_{j=1}^N \sum_{i=1}^{2^{n_w}} \mathfrak{p}_{[ji]}=1,
\end{align}
by convexity of $\mathcal{W}$. 
Then, we define $\hat{\mathfrak{p}}_{[j]}:=\sum_{i=1}^{2^{n_w}} \mathfrak{p}_{[ji]},$
and consider the two following cases:
\begin{align*}
&1) \text{ If } \hat{\mathfrak{p}}_{[j]}>0: \text{ Set } \hat{\mathfrak{v}}_{[j]}:=\left(\sum_{i=1}^{2^{n_w}} \mathfrak{p}_{[ji]}\mathfrak{v}_{[ji]}\right)/\hat{\mathfrak{p}}_{[j]}; \\
&2) \text{ If } \hat{\mathfrak{p}}_{[j]}=0: \text{ Select any } \hat{\mathfrak{v}}_{[j]} \in \mathbb{W}(\bar{w}_{[j]},\epsilon^w_{[j]}).
\end{align*}
We note that $\hat{\mathfrak{v}}_{[j]} \in \mathbb{W}(\bar{w}_{[j]},\epsilon^w_{[j]})$ in Case $1$, and Case $2$ $\Leftrightarrow \mathfrak{p}_{[ji]}=0, \ \forall \ i \in \mathbb{I}_1^{2^{n_w}}$. Finally, observe that~\eqref{eq:condition_sufficiency} can be rearranged as $\mathfrak{w}=\sum_{j=1}^N \hat{\mathfrak{p}}_{[j]} \hat{\mathfrak{v}}_{[j]}$ and $\sum_{j=1}^N \hat{\mathfrak{p}}_{[j]}=1$.
The proof concludes by setting $\mathfrak{w}_{[j]}=\hat{\mathfrak{v}}_{[j]} \in \mathbb{W}(\bar{w}_{[j]},\epsilon^w_{[j]})$.
\end{proof}
Using this result, we replace inclusions $\mathrm{w}^1_{[it]} ,\mathrm{w}^2_{[i]}\in \mathcal{W}$ in~\eqref{eq:inclusion_conditions_outer_Y:2} with the equivalent inclusions
%
\begin{align}
\label{eq:inner_polytope_approx}
\hspace{-10pt}
\mathrm{w}^1_{[it]} \in \mathcal{W}^1_{[it]}&:=\mathrm{ConvHull}(\bar{\mathrm{w}}^1_{[itj]} \in \mathbb{W}(\bar{w}_{[j]},\epsilon^w_{[j]}), j \in \mathbb{I}_1^N), \nonumber \vspace{2pt}\\
\ \mathrm{w}^2_{[i]} \in \mathcal{W}^2_{[i]}&:=\mathrm{ConvHull}(\bar{\mathrm{w}}^2_{[ij]} \in \mathbb{W}(\bar{w}_{[j]},\epsilon^w_{[j]}), j \in \mathbb{I}_1^N), 
\end{align}
by introducing variables $\bar{\mathrm{w}}^1_{[itj]},\bar{\mathrm{w}}^2_{[ij]}\in \mathbb{W}(\bar{w}_{[j]},\epsilon^w_{[j]})$.
Then, we write the conditions in~\eqref{eq:inclusion_conditions_outer_Y} equivalently as
\begin{subequations} %
\label{eq:inclusion_outside_Y}
\begin{align}
\hspace{-17pt}
\forall \ j \in \mathbb{I}_1^N, \	\forall \ t \in \mathbb{I}_0^{l-1}, \ 	\forall \ i \in \mathbb{I}_1^{v_{\mathcal{Y}}},\label{eq:inclusion_outside_Y:1} \\
\mathrm{y}_{[i]}=\sum_{t=0}^{l-1} CA^{l-1-t}B\mathrm{w}_{[it]}^1+D\mathrm{w}_{[i]}^2+\mathrm{b}_{[i]}, \label{eq:inclusion_outside_Y:2} \\
\mathrm{w}^1_{[it]}=\sum_{j=1}^N \beta^1_{[itj]}\bar{\mathrm{w}}^1_{[itj]}, \ 	\mathrm{w}^2_{[i]}=\sum_{j=1}^N \beta^2_{[ij]}\bar{\mathrm{w}}^2_{[ij]}, \label{eq:inclusion_outside_Y:3}\\ 	\bar{\mathrm{w}}^1_{[itj]} \in \mathbb{W}(\bar{w}_{[j]},\epsilon^w_{[j]}),  \ \bar{\mathrm{w}}^2_{[ij]} \in \mathbb{W}(\bar{w}_{[j]},\epsilon^w_{[j]}),
\label{eq:inclusion_outside_Y:4}\\
\sum_{j=1}^N \beta^1_{[itj]}=1, \beta^1_{[itj]} \geq 0, \ 	\sum_{j=1}^N \beta^2_{[ij]}=1, \beta^2_{[ij]} \geq 0, \label{eq:inclusion_outside_Y:5}\\
\mathrm{b}_{[i]} \in \mathbb{B}(\epsilon), \label{eq:inclusion_outside_Y:6}
\end{align}
\end{subequations}
in which the variables $\beta^1_{[itj]}$ and $\beta^2_{[ij]}$ are introduced to encode the convex-hull inclusions in \eqref{eq:inner_polytope_approx} through~\eqref{eq:inclusion_outside_Y:3} and \eqref{eq:inclusion_outside_Y:5}.	
\begin{table}
\centering
\resizebox{0.88\columnwidth}{!}{
\begin{tabular}{|c|c|c|c|}
\hline
Variable  & Dimension & Variable  & Dimension \\ \hline
$\bm{\mathrm{x}}$&  $2Nn_w+(s+1)m_{\mathcal{Y}}$ & $\bm{\beta}$&   $v_{\mathcal{Y}} \times N (l+1)$ \\ \hline
$\bm{\mathrm{w}}$ &  $v_{\mathcal{Y}} \times (l+1) n_w$ & $\bm{\mathrm{z}}$& $n_B+ v_{\mathcal{Y}} \times n_y$ \\ \hline
$\bar{\bm{\mathrm{w}}}$ & $v_{\mathcal{Y}} \times N(l+1) n_w$ & & \\ \hline 
\end{tabular}
}
\caption{Dimensions of variables defined in~\eqref{eq:variable_definitions}.}
\label{table:var_dim}
\end{table}
%
\begin{table}
\centering
\resizebox{0.92\columnwidth}{!}{
\begin{tabular}{|c|c|}
\hline
Constraint  & $\#$  \\ \hline
\eqref{eq:Qr_conditions}-\eqref{eq:inclusion_in_BW}, \eqref{eq:0_in_hull}&  $(s+1)m_{\mathcal{Y}}+2(n_x+n_w)$ Lin. ineq.  \\ \hline
\eqref{eq:inclusion_outside_Y:2} &  $v_{\mathcal{Y}} \times n_y$ Lin. eq. \\ \hline
\eqref{eq:inclusion_outside_Y:3} & $v_{\mathcal{Y}} \times (l+1) n_w$ Bilin. eq.\\ \hline
\eqref{eq:inclusion_outside_Y:4}&   $v_{\mathcal{Y}} \times 2 N (l+1) n_w$ Lin. ineq. \\ \hline
\eqref{eq:inclusion_outside_Y:5}& $v_{\mathcal{Y}} \times N (l+1)$ Lin. ineq., \ $v_{\mathcal{Y}} \times (l+1)$ Lin. eq \\ \hline
\eqref{eq:inclusion_outside_Y:6}& $v_{\mathcal{Y}} \times n_B$ Linear ineq.\\ \hline
\end{tabular}
}
\caption{Number of constraints.}
\label{table:con_dim}
\end{table}

Thus, we encode constraints~\eqref{eq:final_formulation_to_solve:con1}-\eqref{eq:final_formulation_to_solve:con4} as \eqref{eq:Qr_conditions}-\eqref{eq:inclusion_in_BW}, \eqref{eq:0_in_hull} and  \eqref{eq:inclusion_outside_Y} respectively. For simplicity of notation, we define
\begin{subequations}
\label{eq:variable_definitions}
\begin{align}
\bm{\mathrm{x}}&:=\{\{\bar{w}_{[j]},\epsilon^w_{[j]},j \in \mathbb{I}_1^N\}, \mathbf{Q}, \mathbf{r}\}, \label{eq:x_defn}\\
{\bm{\mathrm{w}}}&:=\{{\mathrm{w}}^1_{[it]},{\mathrm{w}}^2_{[i]}, \ i \in \mathbb{I}_1^{v_{\mathcal{Y}}}, t \in \mathbb{I}_0^{l-1}\}, \label{eq:w_defn} \\
\bar{\bm{\mathrm{w}}}&:=\{\bar{\mathrm{w}}^1_{[itj]},\bar{\mathrm{w}}^2_{[ij]}, \ i \in \mathbb{I}_1^{v_{\mathcal{Y}}}, t \in \mathbb{I}_0^{l-1},j \in \mathbb{I}_1^N\}, \label{eq:bar_w_defn} \\
\bm{\beta}&:=\{\beta^1_{[itj]},\beta^2_{[ij]}, \ i \in \mathbb{I}_1^{v_{\mathcal{Y}}}, t \in \mathbb{I}_0^{l-1},j \in \mathbb{I}_1^N\}, \label{eq:beta_defn}\\
\bm{\mathrm{z}}&:=[\epsilon^{\top} \ \mathrm{b}_{[1]}^{\top} \ \cdots \ \mathrm{b}_{v_{\mathcal{Y}}}^{\top}]^{\top}, \label{eq:z_defn}
\end{align}
\end{subequations}
and denote $\bm{\mathrm{v}}:=\{\bm{\mathrm{x}},{\bm{\mathrm{w}}},\bar{\bm{\mathrm{w}}},\bm{\beta},	\bm{\mathrm{z}}\}$. 
Over these variables, we denote the constraints in \eqref{eq:Qr_conditions}-\eqref{eq:inclusion_in_BW}, \eqref{eq:0_in_hull} and  \eqref{eq:inclusion_outside_Y} as
\begin{subequations}
\label{eq:constraint_definitions}
\begin{align}
&\bm{\mathrm{x}} \text{ satisfies~\eqref{eq:Qr_conditions}-\eqref{eq:inclusion_in_BW}, \eqref{eq:0_in_hull} } &\Leftrightarrow& \quad  \bm{\mathrm{A}} \bm{\mathrm{x}} \leq \bm{\mathrm{b}}, \\
&\bm{\mathrm{w}},\bm{\mathrm{{z}}} \text{ satisfy~\eqref{eq:inclusion_outside_Y:2}} &\Leftrightarrow& \quad \bm{\mathrm{C}}_{\bm{\mathrm{w}}} {\bm{\mathrm{w}}} +\bm{\mathrm{C}}_{{\bm{\mathrm{z}}}}{{\bm{\mathrm{z}}}} = \bm{\mathrm{h}}, \\
&\bm{\mathrm{w}},\bar{\bm{\mathrm{{w}}}},\bm{\beta} \text{ satisfy~\eqref{eq:inclusion_outside_Y:3}} &\Leftrightarrow& \quad \bm{\mathrm{g}}(\bar{\bm{\mathrm{{w}}}},\bm{\beta})=\bm{\mathrm{w}}, \\
&\bm{\mathrm{x}},\bar{\bm{\mathrm{{w}}}} \text{ satisfy~\eqref{eq:inclusion_outside_Y:4}} &\Leftrightarrow& \quad \bm{\mathrm{D}}_{\bm{\mathrm{x}}} {\bm{\mathrm{x}}} +\bm{\mathrm{D}}_{\bar{\bm{\mathrm{w}}}}{\bar{\bm{\mathrm{w}}}}\leq \0, \\
&\bm{\beta} \text{ satisfies~\eqref{eq:inclusion_outside_Y:5}} &\Leftrightarrow& \quad \bm{\beta}\geq \0, \ \ \bm{\mathrm{T}}_{\bm{\beta}} \bm{\beta}=\1, \\
&\bm{\mathrm{z}} \text{ satisfies~\eqref{eq:inclusion_outside_Y:6}} &\Leftrightarrow& \quad \bm{\mathrm{E}}_{{\bm{\mathrm{z}}}}{{\bm{\mathrm{z}}}} \leq \0.
\end{align}
\end{subequations}
Finally, we define the cost vector $\bm{\mathrm{c}}:=[\1^{\top}_{n_{B}} \quad \0^{\top}_{n_y v_{\mathcal{Y}}}]^{\top}$, such that $\bm{\mathrm{c}}^{\top} \bm{\mathrm{z}}=\norm{\epsilon}_1.$
Then, we write Problem~\eqref{eq:final_formulation_to_solve} as
\begin{subequations}
\label{eq:equivalent_SF_form_nonlinear}
\begin{align}
\hspace{-5pt}
&\min_{\bm{\mathrm{v}}=\{\bm{\mathrm{x}},\bm{\mathrm{w}},\bar{\bm{\mathrm{w}}},\bm{\beta},\bm{\mathrm{z}}\}} \quad \bm{\mathrm{c}}^{\top} \bm{\mathrm{z}} \\ & \qquad \ \ \text{s.t.}  \qquad \
\bm{\mathrm{A}} \bm{\mathrm{x}} \leq \bm{\mathrm{b}}, \label{eq:equivalent_SF_form_nonlinear:1} \\
&\qquad \qquad \ \qquad \bm{\mathrm{D}}_{\bm{\mathrm{x}}} {\bm{\mathrm{x}}} +\bm{\mathrm{D}}_{\bar{\bm{\mathrm{w}}}}{\bar{\bm{\mathrm{w}}}} \leq \0,  \label{eq:equivalent_SF_form_nonlinear:2}\\
&\qquad \qquad \ \qquad	\bm{\mathrm{C}}_{\bm{\mathrm{w}}} {\bm{\mathrm{w}}} +\bm{\mathrm{C}}_{{\bm{\mathrm{z}}}}{{\bm{\mathrm{z}}}} = \bm{\mathrm{h}},  \label{eq:equivalent_SF_form_nonlinear:3}\\
&\qquad \qquad \ \qquad \bm{\mathrm{E}}_{{\bm{\mathrm{z}}}}{{\bm{\mathrm{z}}}} \leq \0, \label{eq:equivalent_SF_form_nonlinear:4}\\
&\qquad \qquad \ \qquad	\bm{\beta}\geq \0, \ \ \bm{\mathrm{T}}_{\bm{\beta}} \bm{\beta}=\1, \label{eq:equivalent_SF_form_nonlinear:5}\\
&\qquad \qquad \  \qquad \bm{\mathrm{g}}(\bar{\bm{\mathrm{{w}}}},\bm{\beta})=\bm{\mathrm{w}}. \label{eq:equivalent_SF_form_nonlinear:6}
\end{align}
\end{subequations}
The number of variables and constraints defining Problem~\eqref{eq:equivalent_SF_form_nonlinear} are shown in Tables~\ref{table:var_dim} and \ref{table:con_dim} respectively, in which we observe that the number of variables and constraints scale linearly with the number of vertices $v_{\mathcal{Y}}$ of the output constraint set $\mathcal{Y}$. This problem is composed of a linear objective and polyhedral constraints, along with bilinear equality constraints in \eqref{eq:equivalent_SF_form_nonlinear:6} resulting from~\eqref{eq:inclusion_outside_Y:3}. 
Since this problem is smooth, it can be solved to local optimality using any off-the-shelf nonlinear programing (NLP) solver~\cite{NoceWrig06}. 

{\color{black}
\section{Approximate solutions of Problem~\eqref{eq:final_formulation_to_solve}}
\label{sec:approx_soln_methods}
While an NLP approach can be used to solve Problem~\eqref{eq:equivalent_SF_form_nonlinear}, the implementation of NLP solvers can be cumbersome in practice. Moreover, the quality of solutions computed by an NLP solver on a non-convex problem depends on the initial point. Hence, we now present a simple LP-based algorithm to approximately solve Problem~\eqref{eq:final_formulation_to_solve}. The output of this algorithm can be used to initialize an NLP solver to solve Problem~\eqref{eq:equivalent_SF_form_nonlinear}.}

The algorithm is based on the observation that the bilinear equality~\eqref{eq:equivalent_SF_form_nonlinear:6} can be reduced to a linear equality by fixing the value of $\bm{\beta}$. This reduction of the bilinear equality leads to a simplification of Problem~\eqref{eq:equivalent_SF_form_nonlinear} to the LP
%
\begin{align*}
\hspace{-0pt}
\mathbb{P}(\bm{\beta}_{\mathrm{f}})
\begin{cases}
\left\{\bm{\mathrm{x}}_*,\bm{\mathrm{w}}(\bm{\beta}_{\mathrm{f}}),\bar{\bm{\mathrm{w}}}_*,\bm{\mathrm{z}}(\bm{\beta}_{\mathrm{f}})\right\}&\hspace{-5pt}:= \underset{{\bm{\mathrm{x}},\bm{\mathrm{w}},\bar{\bm{\mathrm{w}}},\bm{\mathrm{z}}}}{\arg\min} \quad \bm{\mathrm{c}}^{\top} \bm{\mathrm{z}} \\ & \hspace{-5pt}\qquad \text{s.t.}  \quad
\eqref{eq:equivalent_SF_form_nonlinear:1}-\eqref{eq:equivalent_SF_form_nonlinear:4}, \\ &\qquad \qquad \ \hspace{-5pt} \bm{\mathrm{g}}(\bar{\bm{\mathrm{{w}}}},\bm{\beta}_{\mathrm{f}})=\bm{\mathrm{w}},
\end{cases}
\end{align*}
where $\bm{\beta}_{\mathrm{f}}$ is some value of $\bm{\beta}$ that satisfies constraint~\eqref{eq:equivalent_SF_form_nonlinear:5}

A special case of arises when the number of boxes parameterizing the disturbance set is equal to the number of vertices of the output constraint set $\mathcal{Y}$, i.e., $N=v_{\mathcal{Y}}$. Then, problem $\mathbb{P}(\bm{\beta}_{\mathrm{f}})$ can be solved with the components of $\bm{\beta}_{\mathrm{f}}$ selected as
\begin{align}
\hspace{-5pt}
\label{eq:special_case_heuristic}
\forall \ i \in \mathbb{I}_1^{v_{\mathcal{Y}}}, \forall \  t \in \mathbb{I}_0^{s-1}, \	\beta^1_{[itj]},\ \beta^2_{[ij]} = \begin{cases} 1, \text{  if  } i=j, \\
0, \text{  otherwise}.
\end{cases}
\end{align}
This is equivalent to enforcing the disturbance sequences $\{\{\mathrm{w}^1_{[it]}, \ t \in \mathbb{I}_0^{l-1}\}, \ \mathrm{w}^2_{[i]}\}$ corresponding to vertex $\mathrm{y}_{[i]}$ of the disturbance set $\mathcal{Y}$ inside the box $\mathbb{W}(\bar{w}_{[i]},\epsilon^w_{[i]})$, instead of in the set $\mathcal{W}$ as done in~\eqref{eq:inclusion_conditions_outer_Y:2}. 

While $\mathbb{P}(\bm{\beta}_{\mathrm{f}})$ provides an efficient way to approximate Problem~\eqref{eq:equivalent_SF_form_nonlinear}, conservativeness can be reduced further by also optimizing over $\bm{\beta}$. To this end, we observe that Problem~\eqref{eq:equivalent_SF_form_nonlinear} reduces to an LP also for a fixed value of $\bar{\bm{\mathrm{{w}}}}$. Based on this observation, we propose to solve the LP
\begin{align*}
\hspace{-0pt}
\mathbb{Q}(\bar{\bm{\mathrm{w}}}_*)
\begin{cases}
\left\{\bm{\mathrm{w}}_*,\bm{\mathrm{z}}_*,\bm{\beta}_*\right\}:= & \ \underset{{\bm{\mathrm{w}},\bm{\mathrm{z}}},\bm{\beta}}{\arg\min} \quad \bm{\mathrm{c}}^{\top} \bm{\mathrm{z}} \\ & \qquad \text{s.t.}  \quad
\eqref{eq:equivalent_SF_form_nonlinear:3}-\eqref{eq:equivalent_SF_form_nonlinear:5}, \\ &\qquad \qquad \  \bm{\mathrm{g}}(\bar{\bm{\mathrm{{w}}}}_*,\bm{\beta})=\bm{\mathrm{w}}, 
\end{cases}
\end{align*}
where $\bar{\bm{\mathrm{w}}}_*$ is an optimizer of problem $\mathbb{P}(\bm{\beta}_{\mathrm{f}})$. Using LPs $\mathbb{P}(\bm{\beta}_{\mathrm{f}})$ and $\mathbb{Q}(\bar{\bm{\mathrm{w}}}_*)$, we define an alternating-minimization procedure in Algorithm~\ref{alg:alg2} to approximately solve Problem~\eqref{eq:equivalent_SF_form_nonlinear}, in which we select $\bm{\beta}_{\mathrm{f}}=\bm{\beta}_*$ and repeat the steps. We use the superscript $^{[\iota]}$ to denote the iteration index.
This procedure can be interpreted as follows. Using $\mathbb{P}(\bm{\beta}_{\mathrm{f}})$, a disturbance set $\mathcal{W}$ characterized by variables $\bm{\mathrm{x}}$ is computed by selecting the disturbance sequences $\bm{\mathrm{w}}(\bm{\beta}_{\mathrm{f}})$. Then using $\mathbb{Q}(\bar{\bm{\mathrm{w}}}_*)$, these sequences are updated to $\bm{\mathrm{w}}_*$ by optimizing over $\bm{\beta}$, while the sets $\mathcal{W}^1_{[it]}$ and $\mathcal{W}^2_{[i]}$ characterized by $\bar{\bm{\mathrm{w}}}_*$ are kept fixed. 
\begin{algorithm}[t]
\caption{Alternating-minimization for Problem \eqref{eq:equivalent_SF_form_nonlinear} }
\begin{algorithmic}[1]
\State \textbf{Input:} Initial $\bm{\beta}=\bm{\beta}^{[0]}_*$ satisfying \eqref{eq:equivalent_SF_form_nonlinear:5};
\State \textbf{Set} $\zeta >0$, $conv=0$, $\iota=1$;
\While{$conv=0$}
\State Solve $\mathbb{P}(\bm{\beta}_*^{[\iota-1]})$ for $\left\{\bm{\mathrm{x}}^{[\iota]}_*,\bm{\mathrm{w}}(\bm{\beta}_*^{[\iota-1]}),\bar{\bm{\mathrm{w}}}^{[\iota]}_*,\bm{\mathrm{z}}(\bm{\beta}_*^{[\iota-1]})\right\}$; 
\State Solve $\mathbb{Q}(\bar{\bm{\mathrm{w}}}^{[\iota]}_*)$ for $\left\{\bm{\mathrm{w}}^{[\iota]}_*,\bm{\mathrm{z}}^{[\iota]}_*,\bm{\beta}^{[\iota]}_*\right\}$;  
\If{$\iota > 1$ and $\bm{\mathrm{c}}^{\top} \bm{\mathrm{z}}_*^{[\iota]} \geq \bm{\mathrm{c}}^{\top}\bm{\mathrm{z}}_*^{[\iota-1]} - \zeta$,} 
\State $conv\leftarrow 1$;
\Else
\State $\bm{\mathrm{v}}^{[\iota]}_*\leftarrow \left\{\bm{\mathrm{x}}^{[\iota]}_*,\bm{\mathrm{w}}_*^{[\iota]},\bar{\bm{\mathrm{w}}}^{[\iota]}_*,\bm{\beta}^{[\iota]}_*,\bm{\mathrm{z}}_*^{[\iota]}\right\}$, $\iota \leftarrow \iota+1$
;
\EndIf
\EndWhile
\State \textbf{Output:} $\bm{\mathrm{v}}^{[\iota]}_*$
\end{algorithmic}
\label{alg:alg2}
\end{algorithm}
\begin{proposition}
Algorithm~\ref{alg:alg2} terminates in finite time for any $\zeta>0$ with feasible iterates. $\hfill\square$
\end{proposition}
\begin{proof}
For any $\iota>0$, 
since $(\bm{\mathrm{w}}(\bm{\beta}^{[\iota-1]}_*),\bm{\mathrm{z}}(\bm{\beta}^{[\iota-1]}_*))$ computed by $\mathbb{P}(\bm{\beta}^{[\iota-1]}_*)$ are feasible for $\mathbb{Q}(\bar{\bm{\mathrm{w}}}^{[\iota]}_*)$, whose optimizers $(\bm{\mathrm{w}}^{[\iota]}_*,\bm{\mathrm{z}}^{[\iota]}_*)$ are in turn feasible for $\mathbb{P}(\bm{\beta}_*^{[\iota]})$, the inequalities
\begin{align}
\label{eq:alg_inequalities}
\bm{\mathrm{c}}^{\top} \bm{\mathrm{z}}(\bm{\beta}^{[\iota-1]}_*) \geq \bm{\mathrm{c}}^{\top} \bm{\mathrm{z}}_*^{[\iota]} \geq \bm{\mathrm{c}}^{\top} \bm{\mathrm{z}}(\bm{\beta}^{[\iota]}_*) \geq \bm{\mathrm{c}}^{\top} \bm{\mathrm{z}}_*^{[\iota+1]}
\end{align}
hold, such that $\bm{\mathrm{c}}^{\top} \bm{\mathrm{z}}(\bm{\beta}^{[\iota]}_*)$ and $\bm{\mathrm{c}}^{\top} \bm{\mathrm{z}}_*^{[\iota]}$ are nonincreasing in $[\iota]$.  By construction of Problem~\eqref{eq:equivalent_SF_form_nonlinear}, we know that $\bm{\mathrm{c}}^{\top} \bm{\mathrm{z}} \geq 0$ for all feasible $\bm{\mathrm{z}}$. Thus,  $\bm{\mathrm{c}}^{\top} \bm{\mathrm{z}}(\bm{\beta}^{[\iota]}_*)$ and $\bm{\mathrm{c}}^{\top} \bm{\mathrm{z}}_*^{[\iota]}$ are bounded below, such that for every $\zeta>0$, there exists some $\iota<\infty$ such that $\bm{\mathrm{c}}^{\top} \bm{\mathrm{z}}_*^{[\iota]} \geq \bm{\mathrm{c}}^{\top}\bm{\mathrm{z}}_*^{[\iota-1]} - \zeta$
holds, concluding the proof of finite termination. Feasibility of iterates $\bm{\mathrm{v}}^{[\iota]}_*$ follows since $\mathbb{P}(\bm{\beta}_*^{[\iota-1]})$ and $\mathbb{Q}(\bar{\bm{w}}^{[\iota]}_*)$ enforce the same constraints 
as Problem~\eqref{eq:equivalent_SF_form_nonlinear}.
\end{proof}
{\color{black}
\begin{remark}
The use of set parameterizations that have a closed-form linear expression of support functions is a viable alternative to the disturbance set parameterization in~\eqref{eq:convex_hull_of_boxes}. Affine transformations of $p$-norm balls, such as zonotopes, are a specific class of set parameterizations that admit such expressions. Additionally, by leveraging Proposition~\ref{prop:conv_hull}, we can represent the disturbance set as a convex hull of affinely transformed $p$-norm balls while still maintaining a linear inclusion encoding. We refer the interested reader to~\cite[Chapter 5]{MulagaletiThesis}, in which the disturbance set is parameterized as a convex hull of fixed-orientation zonotopes and exploits results from~\cite{Guibas2003} to formulate Problem~\eqref{eq:final_formulation_to_solve} as an LP.
$\hfill\square$
\end{remark}
}
\section{Numerical examples}
%
\label{sec:numerical_examples}
We now present numerical examples to illustrate the main ideas of our approach, compare performance with the state of the art, and demonstrate an application of the methods for control design. For further comparisons, we refer the reader to~\cite[Chapter 5]{MulagaletiThesis}.
Within these examples, we perform SOCP computations in Algorithm~\ref{alg:SAL_solving_procedure}  using MOSEK~\cite{mosek}, and LP computations in Algorithm~\ref{alg:alg2} using Gurobi~\cite{gurobi}.
We use the output of Algorithm~\ref{alg:alg2} to initialize the Primal-Dual Interior Point solver IPOPT \cite{Wchter2005} to solve Problem~\eqref{eq:equivalent_SF_form_nonlinear}, and use the MPT-toolbox~\cite{MPT3} for plotting the sets.
The computations were performed on a laptop with an Intel i7-7500U processor and $16$GB of RAM running MATLAB R2017b on Ubuntu 16.04. 
\subsection{Simple Illustrative Example}
\label{example:illustrative_example}
We consider the randomly generated system
\begin{align*}
&\scalemath{0.9}{
A=\begin{bmatrix}  -0.5844 &  -0.2378 &  -0.2015 \\
-0.2378  &  0.0368   & 0.6915 \\ 
-0.2015 &   0.6915&   -0.0162\end{bmatrix}, \ } 
\scalemath{0.9}{B = \begin{bmatrix}     0   &  0.8974 \\
0  & -1.8597 \\
0.8903  &   0.9479 \end{bmatrix}}, \\
&\scalemath{0.9}{
C=\begin{bmatrix}       0  &  2.0091 &  -0.1402 \\
-0.9894    &     0  &  1.1447 \end{bmatrix}, \ } 
\quad \scalemath{0.9}{
D=\begin{bmatrix}    -0.8078 &        0 \\
0.9676 &   0.6751 \end{bmatrix},} 	
\end{align*}
with output constraint set $\mathcal{Y}=\{y:Gy \leq \1\}$ defined with
\begin{align*}
\scalemath{0.95}{
G =\begin{bmatrix} -0.4489 &  -1.9691 &   1.0364 &   1.4018 &  -0.9868\\ 
2.1848  &  1.2596 &   0.8726  & -0.3397  & -2.0995 \end{bmatrix}^{\top}.
}
\end{align*}
%
%
%
%
%
%
We select $\mu=10^{-3}$ and $\gamma=0.2$ in Algorithm~\ref{alg:SAL_solving_procedure}. 
{\color{black} The choices of $\gamma$ and $\mu$ are motivated by the observations in Remarks~\ref{remark:gamma_condition}-\ref{remark:slow_system}, i.e., based on $\mathcal{Y}$, $C$ and $\rho(A)$.}
Using the MOSEK~\cite{mosek} SOCP solver, we converge in $0.19$s with $s=59$, $\alpha= 6.789 \times 10^{-4}$, $\lambda=6.784 \times 10^{-5}$. 
%
%
%
Then, we select $N=4$ boxes in $\R^2$ to parametrize $\mathcal{W}$ in \eqref{eq:convex_hull_of_boxes}, and use $l=s=59$ to formulate \eqref{eq:inclusion_outside_Y}. Finally, we choose a uniform $6$-sided polytope in $\R^2$ to define $\mathbb{B}(\epsilon)$, using which we formulate Problem~\eqref{eq:equivalent_SF_form_nonlinear}. {\color{black} This choice allows us to maximize coverage of $\mathcal{Y}$ in all directions.}

To address Problem~\eqref{eq:equivalent_SF_form_nonlinear}, we adopted Algorithm~\ref{alg:alg2} with an initialization of $\bm{\beta}$ as $\beta^1_{[itj]},\beta^2_{[ij]}=1/N$. The algorithm converged after $10$ iterations with $\norm{\epsilon}_1=1.0962$, with average iteration time of $0.3792$s. The resulting disturbance set and the corresponding output reachable set obtained at termination of  are plotted in Figure~\ref{figure:Illustrative} using thick blue lines, along with the progression of $\norm{\epsilon}_1$ over the iterations.
%
We initialize IPOPT with the Algorithm~\ref{alg:alg2}'s output to solve Problem~\eqref{eq:equivalent_SF_form_nonlinear} and obtain $\norm{\epsilon}_1=1.055$ at termination. The disturbance set and the corresponding output reachable set at termination are shown in Figure~\ref{figure:Illustrative}.  we
observe that the only contribution to the Lagrangian Hessian stems from the bilinear constraints, which are nonzero only on off-diagonal blocks. Because IPOPT computes a positive-definite Hessian approximation by simply adding a positive diagonal matrix, we observed that it is beneficial to neglect the contribution to the Lagrangian Hessian of these bilinear terms. We therefore pass a zero Hessian approximation to IPOPT. In this case, IPOPT converges in $281$ iterations in $14.7236$s. Passing the exact Lagrangian Hessian results in convergence in $994$ iterations, taking $69.1807$s and producing $\norm{\epsilon}_1=1.0826$.

We compare our approach to the explicit RPI (ERPI) set method proposed in~\cite{Mulagaleti2020} for solving Problem~\eqref{eq:orig_problem_to_solve}. ERPI uses fixed normal vectors to parameterize an RPI set and a disturbance set, resulting in $\norm{\epsilon}_1=1.2039$ when using $163$ and $16$ hyperplanes, respectively. Our implicit RPI set approach achieves reduced conservativeness, as shown in Figure~\ref{figure:Illustrative}. While the solution of the ERPI approach can be refined by increasing the number of hyperplanes parameterizing the sets, the resulting computational complexity would greatly increase due to the quadratic increase in variables and constraints.

\begin{figure}[t]
\centering
\hspace*{-0.4cm}
{
\resizebox{0.45\textwidth}{!}
{
\begin{tikzpicture}
\begin{scope}[xshift=0cm]
\node[draw=none,fill=none](tanks_fig) {\includegraphics[trim=3 0 0 0,clip,scale=2]{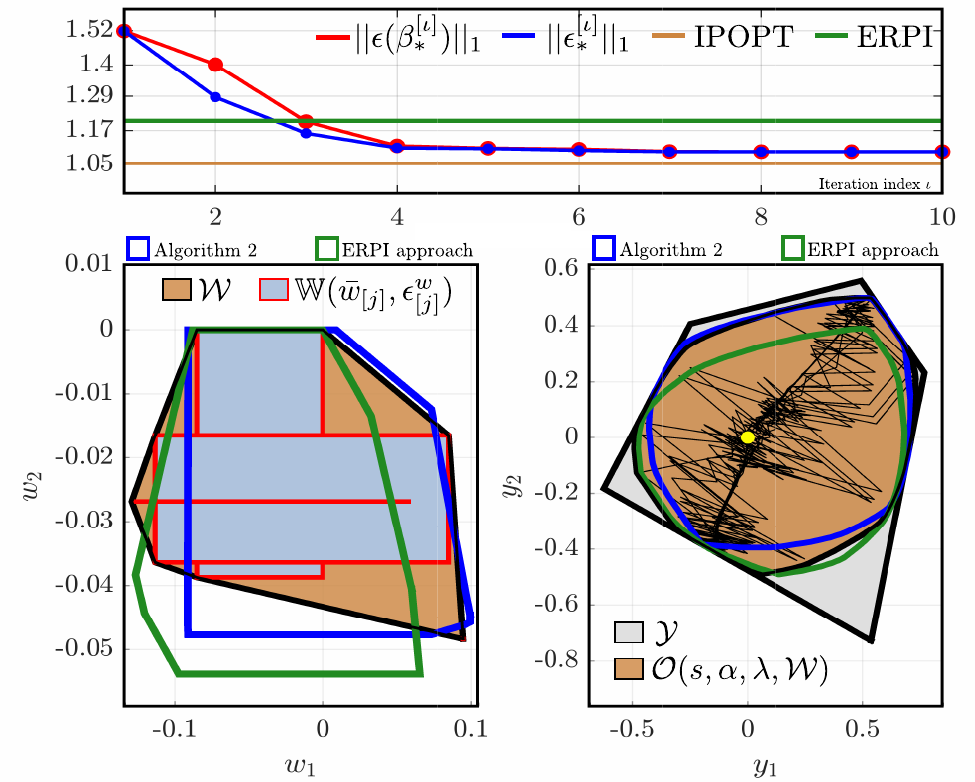}};
\end{scope}
\end{tikzpicture}}
}
\captionsetup{width=1\linewidth}
\caption{
(Top)  Convergence of Algorithm~\ref{alg:alg2} at $\iota=10$ with $\norm{\epsilon}_1=1.0962$;
(Bottom-Left) Disturbance set $\mathcal{W}$ with boxes $\WS[\bar{w}_{[j]},\epsilon^w_{[j]}]$ in \eqref{eq:convex_hull_of_boxes}. {\color{black} The bottom right vertex corresponds to box $j=2$ with $\epsilon^w_{[j]}=\0$ and $\bar{w}^{\top}_{[j]}=[0.0951 \ -0.0481]$}; (Bottom-Right) Output constraint set $\mathcal{Y}$ and the output reachable set $\mathcal{O}(s,\alpha,\lambda,\mathcal{W})$. Black lines are the output trajectories $y(t)$ of the system when initialized with $x(0)=\0$ (Yellow dot) and subject to random inputs $w \in \mathcal{W}$. We observe that $y \in \mathcal{O}(s,\alpha,\lambda,\mathcal{W}) \subset \mathcal{Y}$ holds. {\color{black} The thick blue lines indicate the solution at the termination of Algorithm~\ref{alg:alg2}, which is the initial guess to IPOPT for solving Problem~\eqref{eq:equivalent_SF_form_nonlinear}. The thick green lines indicate the sets computed using the ERPI approach in~\cite{Mulagaleti2020}, with the disturbance and RPI sets parameterized with $16$ and $163$ hyperplanes respectively.}}
\label{figure:Illustrative}
\end{figure}
%

We analyze the impact of the number of boxes $N$ parameterizing $\mathcal{W}$ in~\eqref{eq:convex_hull_of_boxes} when solving Problem~\eqref{eq:equivalent_SF_form_nonlinear}. We vary $N$ from $1$ to $10$, and for each $N$, we run Algorithm~\ref{alg:alg2} with $\beta^1_{[itj]},\beta^2_{[ij]}=1/N$, the output of which we use to initialize IPOPT. Figure~\ref{figure:effect_of_N} shows the resulting $\norm{\epsilon}_1$ values and CPU times. The trend of reducing $\norm{\epsilon}_1$ as $N$ increases is generally observed but not strictly monotonic due to the nonlinear optimization procedure's nature. Algorithm~\ref{alg:alg2} takes less than $10$s in the majority of cases, while IPOPT requires around $30$s for larger $N$ plus the initialization time using Algorithm~\ref{alg:alg2}. Using Algorithm~\ref{alg:alg2} output as IPOPT's initial guess can produce an acceptable solution for Problem~\eqref{eq:equivalent_SF_form_nonlinear}. The values of $\norm{\epsilon}_1$ are higher when IPOPT is initialized with an all-zero vector, demonstrating the importance of a good initial guess. {\color{black} For example, in the case of $N=4$ and all variables initialized to zero, IPOPT converges with $\norm{\epsilon}_1=1.0927$ in $328$ iterations instead of $\norm{\epsilon}_1=1.055$ in $281$ iterations.}

\begin{figure}[t]
\centering
{
\resizebox{0.48\textwidth}{!}
{
\begin{tikzpicture}
\begin{scope}[xshift=0cm]
\node[draw=none,fill=none](tanks_fig) {\includegraphics[trim=5 0 0 0,clip,scale=3.5]{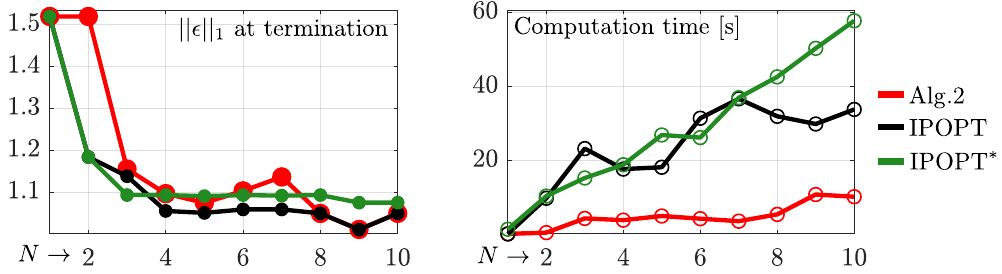}};
\end{scope}
\end{tikzpicture}}
}
\captionsetup{width=1\linewidth}
\caption{Effect of number of boxes $N$ parameterizing the disturbance set $\mathcal{W}$ on Problem~\eqref{eq:equivalent_SF_form_nonlinear}. The nonmonotonicity of $\norm{\epsilon}_1$ is because of the nonlinear nature of the problem. However, as expected in general smaller values of $\norm{\epsilon}_1$ are obtained as $N$ increases. Red lines indicate solution of IPOPT when initialized with output of Algorithm~\ref{alg:alg2}, and green lines indicate solution with initial guess all-zero. Observe that smaller values of $\norm{\epsilon}_1$ are computed if IPOPT is initialized with the solution of Algorithm~\ref{alg:alg2}.}
\label{figure:effect_of_N}
\end{figure}

\subsection{Comparison with the approach of~\cite{Mulagaleti2020}}
\label{sec:state_of_the_art}
We compare the performance of our implicit RPI (IRPI) approach with the ERPI approach proposed in~\cite{Mulagaleti2020} to solve Problem~\eqref{eq:orig_problem_to_solve} for $24$ randomly generated systems of dimensions listed in Table~\ref{table:comparison_examples}. Figure~\ref{figure:comparison_ERPI}-(Top) illustrates the values of $\rho(A)$ for each system. We set $\mathcal{Y}=\mathcal{B}_{\infty}^{n_y}$ and use the matrix $H=[\I \ -\I]^{\top}$ to define $\mathbb{B}(\epsilon)$ for all examples.
We employ Algorithm~\ref{alg:SAL_solving_procedure} to compute the RPI set parameters $(s,\alpha,\lambda)$ with $\mu=10^{-2}$ and $\gamma=1$. The parameter $s$ values are listed in Table~\ref{table:comparison_examples}, and $(\alpha,\lambda)$ values are plotted in Figure~\ref{figure:comparison_ERPI}-(Top). The average runtime of Algorithm~\ref{alg:SAL_solving_procedure} across all examples is $0.3411$s. We set $N=5$ to parameterize the disturbance set, and choose $l=s$ to parameterize $\mathcal{S}(l,\mathcal{W})$. We use Algorithm~\ref{alg:alg2} to compute an initial point for IPOPT to solve Problem~\eqref{eq:equivalent_SF_form_nonlinear}. We label the total runtime of Algorithm~\ref{alg:alg2} and IPOPT to solve problem~\eqref{eq:equivalent_SF_form_nonlinear} as $\mathrm{t}_{\mathrm{IRPI}}$, and the objective value as $\norm{\epsilon_{\mathrm{IRPI}}}_1$.
%

Regarding the ERPI approach of~\cite{Mulagaleti2020}, we recall that the RPI and disturbance sets are parameterized with fixed normal vectors as $
\mathcal{X}_{\mathrm{RPI}}(\mathcal{W}):=\{x: E_i x \leq \epsilon^x_i, \ i \in \mathbb{I}_1^{m_X}\}$ and $\mathcal{W}:=\{w: F_i x \leq \epsilon^w_i, \ i \in \mathbb{I}_1^{m_W}\}$ respectively.
%
The dimensions $(m_X,m_W)$ are shown in Table~\ref{table:comparison_examples}. We solve the optimization problem resulting from the ERPI approach using the 
specialized smoothening-based interior-point algorithm presented in~\cite{Mulagaleti2022_SmootheningPDIP}. We label the runtime of this algorithm as $\mathrm{t}_{\mathrm{ERPI}}$, and the objective value as $\norm{\epsilon_{\mathrm{ERPI}}}_1$.

Figure~\ref{figure:comparison_ERPI}-(Bottom) shows the ratios of objective values $\norm{\epsilon}_1$ and solution times between our IRPI approach and the ERPI approach. We observe that our IRPI approach yields much smaller $\norm{\epsilon}_1$ values and consumes significatly less time. The smaller $\norm{\epsilon}_1$ values stem from our approach not enforcing a specific RPI set representation a priori, while the reduced computation time is due to the cheap support function evaluations because of the parameterization in~\eqref{eq:convex_hull_of_boxes}. The minimum value of $\norm{\epsilon_{\mathrm{ERPI}}}_1/\norm{\epsilon_{\mathrm{IRPI}}}_1$ is $1.0042$ for Example 8.
%

At the output of the ERPI approach, we compute the approximation error $\mu_{\mathrm{ERPI}}$ of $\mathcal{X}_{\mathrm{RPI}}(\mathcal{W})$ with respect to the mRPI set $\mathcal{X}_{\mathrm{m}}(\mathcal{W})$. We plot these values in Figure~\ref{figure:comparison_ERPI}-(Top) in blue, in which we observe that the ERPI set approximation error is larger than $\mu=10^{-2}$. We perform the same operation to compute $\mu_{\mathrm{IRPI}}$ using the solution of the IRPI approach, and as expected we obtain  $\mu_{\mathrm{IRPI}} \leq 10^{-2}$ (Shown in green). While $\mu_{\mathrm{ERPI}}$ can be reduced with larger $m_X$, the computational complexity increases quadratically. Hence, the IRPI approach is a more attractive alternative to tackle Problem~\eqref{eq:orig_problem_to_solve}.

Finally, we use our IRPI approach to tackle Problem~\eqref{eq:orig_problem_to_solve} for higher dimensional systems. Because of the large values of $n_x$, the large number of hyperplanes $m_X$ required to represent an RPI set in the ERPI approach results in the system running out of memory, thus effectively failing to tackle Problem~\eqref{eq:orig_problem_to_solve}. On the other hand, the IRPI approach succeeds in tackling Problem~\eqref{eq:orig_problem_to_solve}
The system details and the results of the IRPI approach are shown in Table~\ref{table:higher_dimension_systems}, in which we select the parameters $(\mu,\gamma,\mathcal{Y},H,N,l)$ in the same way as for the examples in Table~\ref{table:comparison_examples}. We observe that despite being successful, the computation time of the IRPI approach scales poorly with dimension of the output constraint set $\mathcal{Y}$. This is because the number of vertices $v_{\mathcal{Y}}=2^{n_y}$ leads to an exponential increase in the number of variables and constraints defining Problem~\eqref{eq:equivalent_SF_form_nonlinear} with $n_y$.
This issue can be tackled if we encode $\mathcal{Y} \subseteq \mathcal{S}(l,\mathcal{W}) \oplus \mathbb{B}(\epsilon)$ directly using the hyperplane representation of $\mathcal{Y}$, a subject deferred to future research.
\begin{table}
\centering
\resizebox{0.92\columnwidth}{!}{
\begin{tabular}{|c|c|c|c|}
\hline
$\#$  & $(n_x,n_w,n_y,s,m_X,m_W)$  & $\#$  & $(n_x,n_w,n_y,s,m_X,m_W)$ \\ \hline
$1$ & $(2,2,2,8,12,6)$ &
$2$ & $(2,2,3,15,12,6)$ \\
$3$ & $(2,3,2,12,18,42)$ &
$4$ & $(2,3,3,2,18,42)$ \\
$5$ & $(2,4,2,9,20,80)$ &
$6$ & $(2,4,3,17,8,80)$ \\
$7$ & $(3,2,2,13,34,6)$ &
$8$ & $(3,2,3,14,34,6)$ \\
$9$ & $(3,3,2,16,36,42)$ &
$10$ & $(3,3,3,16,35,42)$ \\
$11$ & $(3,4,2,7,34,80)$ &
$12$ & $(3,4,3,11,34,80)$ \\
$13$ & $(4,2,2,18,68,6)$ &
$14$ & $(4,2,3,4,8,6)$ \\
$15$ & $(4,3,2,18,74,42)$ &
$16$ & $(4,3,3,10,75,42)$ \\
$17$ & $(4,4,2,7,80,80)$ &
$18$ & $(4,4,3,13,53,80)$ \\
$19$ & $(5,2,2,7,67,6)$ &
$20$ & $(5,2,3,6,20,6)$ \\
$21$ & $(5,3,2,5,72,42)$ &
$22$ & $(5,3,3,6,102,42)$ \\
$23$ & $(5,4,2,5,118,80)$ &
$24$ & $(5,4,3,7,108,80)$  \\ \hline
\end{tabular}
}
\caption{Randomly generated systems}
\label{table:comparison_examples}
\end{table}

\begin{table}
\centering
\resizebox{0.92\columnwidth}{!}{
\begin{tabular}{|c|c|c|c|}
\hline
$\#$  & $(n_x,n_w,n_y,\rho(A),s,\norm{\epsilon_{\mathrm{IRPI}}}_1,\mathrm{t}_{\mathrm{IRPI}}\text{[s]},\mu_{\mathrm{IRPI}})$ \\ \hline
$1$ & $(10,5,2,0.6975,40,0.5166,45.4723,0.0073
)$  \\ 
$2$ & $(20,10,3,0.69,38,1.0529,258.9705,0.0062
)$  \\
$3$ & $(50,20,4,0.699,42,1.3626,1.1432\times 10^4,0.0064
)$  \\ 
$4$ & $(100,20,2,0.8,76,0.3093,197.892,0.0061
)$  \\\hline
\end{tabular}
}
\caption{Higher-dimensional systems}
\label{table:higher_dimension_systems}
\end{table}

\begin{figure}[t]
\centering
\hspace*{-0.4cm}
{
\resizebox{0.43\textwidth}{!}
{
\begin{tikzpicture}
\begin{scope}[xshift=0cm]
\node[draw=none,fill=none](tanks_fig) {\includegraphics[trim=5 105 0 5,clip,scale=3.5]{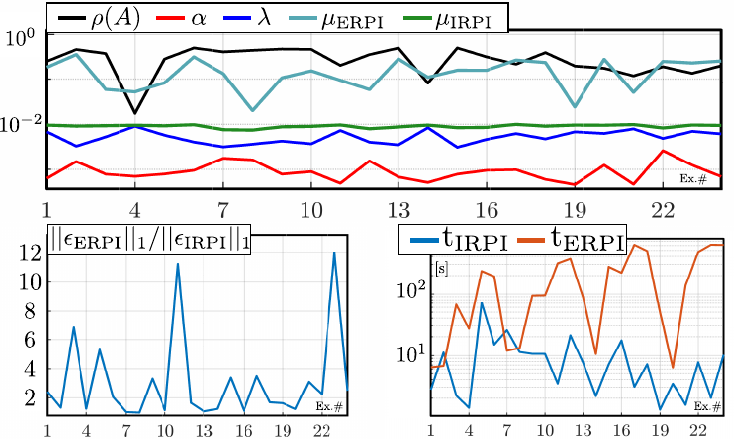}};
\end{scope}
\end{tikzpicture}}
}
\captionsetup{width=1\linewidth}
\caption{Comparison against the ERPI approach. The implicit RPI approach tackles Problem~\eqref{eq:orig_problem_to_solve} with reduced conservativeness and improved computational efficiency.}
\label{figure:comparison_ERPI}
\end{figure}

{\color{black} 
\subsection{Comparison with the approach of~\cite{DeSantis1994}}
\label{sec:deSantis}
\begin{figure}[t]
\centering
\hspace*{-0.4cm}
{
\resizebox{0.52\textwidth}{!}
{
\begin{tikzpicture}
\begin{scope}[xshift=0cm]
\node[draw=none,fill=none](tanks_fig) {\includegraphics[trim=20 0 0 0,clip,scale=3.5]{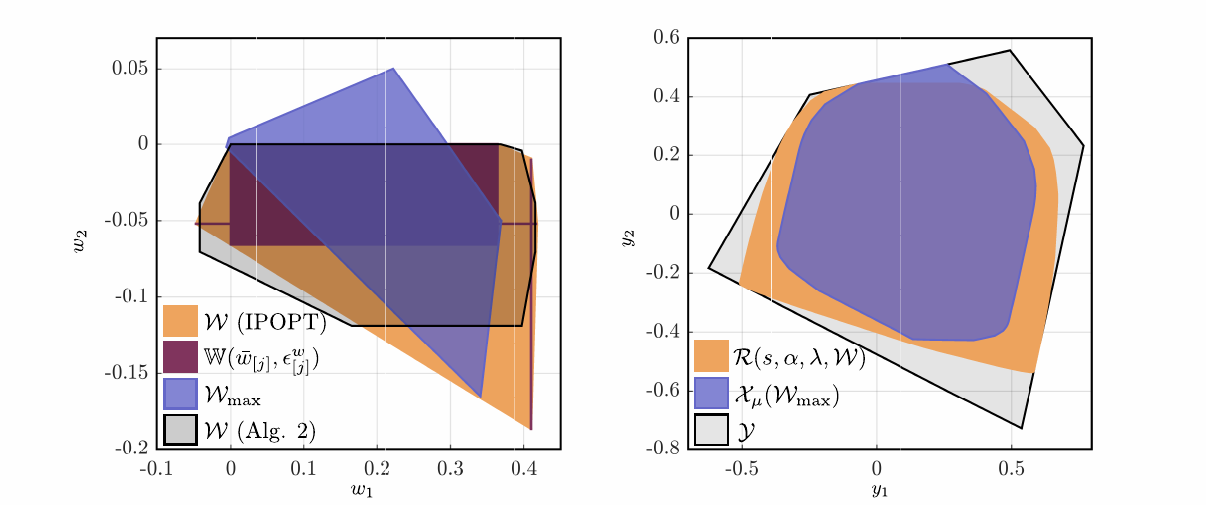}};
\end{scope}
\end{tikzpicture}}
}
\captionsetup{width=1\linewidth}
\caption{Comparison to \cite{DeSantis1994}: $\mathcal{W}$ yields $\norm{\epsilon}_1=0.545$ (vs. $0.8254$ as obtained with $\mathcal{W}_{\mathrm{max}}$ obtained using \cite{DeSantis1994}).}
\label{figure:compare_deSantis}
\end{figure}
We compare our approach with that of~\cite{DeSantis1994} for computing the maximal disturbance set. For simplicity, we consider LTI systems with $B,C=\I$ and $D=\0$ such that $\mathcal{Y}$ is the state constraint set. Their approach assumes to know a PI set $\mathcal{X}_{\mathrm{PI}}$ inside $\mathcal{Y}$ for the system $x(t+1)=Ax(t)$, and computes the maximal disturbance set as $\mathcal{W}_{\mathrm{max}}:=\mathcal{X}_{\mathrm{PI}} \ominus A\mathcal{X}_{\mathrm{PI}}$. Hence, they render $\mathcal{X}_{\mathrm{PI}}$ robust against the disturbance set $\mathcal{W}_{\mathrm{max}}$, and the inclusion $C\mathcal{X}_{\mathrm{m}}(\mathcal{W}_{\mathrm{max}}) \subseteq \mathcal{Y}$ holds since the mRPI set $\mathcal{X}_{\mathrm{m}}(\mathcal{W}_{\mathrm{max}})$ is included in the RPI set. Thus, $\mathcal{W}_{\mathrm{max}}$ is a feasible solution to Problem~\eqref{eq:orig_problem_to_solve}. As in~\cite{Kalabic2012}, we compute $\mathcal{X}_{\mathrm{PI}}$ using the algorithm in~\cite{Kolmanovsky1998} by choosing it to be the largest $\lambda$-contractive set for the system $x(t+1)=Ax(t)$ inside $\mathcal{Y}$.

For comparison, we consider an LTI system with matrix
$
A=\begin{bmatrix} -0.4918  &  0.3255 \\
0.3500   & 0.7231 \end{bmatrix},
$
and $\mathcal{Y}$ the same as in Section~\ref{example:illustrative_example}.
We begin by determining the largest $0.99$-contractive set within $\mathcal{Y}$ and use it to compute $\mathcal{W}_{\mathrm{max}}$. Next, we solve Problem~\eqref{eq:equivalent_SF_form_nonlinear} to obtain the set $\mathcal{W}$ using the approach outlined in Section~\ref{sec:approx_soln_methods}. We simulate Algorithm~\ref{alg:SAL_solving_procedure} with $\mu=10^{-6}$ and $\gamma=2$ to identify the parameters $(s,\alpha,\lambda)$ required to parameterize the RPI set $\mathcal{R}(s,\alpha,\lambda,\mathcal{W})$. We use $N=3$ boxes to parameterize $\mathcal{W}$ along with $l=s=113$ and $H=[\I \ -\I]^{\top}$ to define the cost function. At termination of Algorithm~\ref{alg:alg2} and IPOPT, we obtain $\norm{\epsilon}_1$ values of $0.746$ and $0.545$, respectively. We further compute the $\mu$-RPI set $\mathcal{X}_{\mu}(\mathcal{W}_{\mathrm{max}})$ using Lemma~\ref{lemma:original_Rakovic} with $\mu=10^{-6}$ for comparison. The resulting $\mathrm{d}_{\mathcal{Y}}(C\mathcal{X}_{\mu}(\mathcal{W}_{\mathrm{max}}))=0.8254$ confirms that our approach can solve Problem~\eqref{eq:orig_problem_to_solve} with reduced conservativeness relative to $\mathcal{W}_{\mathrm{max}}$. This results from the synthesis of both an RPI set and a corresponding disturbance set with the aim of maximizing reachability in our case, while these phases are decoupled in the synthesis of $\mathcal{W}_{\mathrm{max}}$. These sets are plotted in Figure~\ref{figure:compare_deSantis}.
%
%
%
}

{\color{black}
\subsection{Reduced-order controller synthesis}
\label{sec:Reduced_order_control}
We will now present an application of the methods to synthesize constraint sets for reduced-order control schemes.
We consider a system with state $\mathrm{x}=[x^{\top}_{[1]} \ \  x^{\top}_{[2]}]^{\top}$, input $u \in \mathcal{U}$ and constrained output $y \in \mathcal{Y}$, given by
\begin{align*}
\mathrm{x}(t+1) &= \begin{bmatrix}  \mathrm{f}(x_{[1]}(t),x_{[2]}(t),u(t)) \\ A_{[21]}x_{[1]}(t)+A_{[22]}x_{[2]}(t) \end{bmatrix}, \\ 
y(t)&=C_{[1]}x_{[1]}(t)+C_{[2]}x_{[2]}(t),
\end{align*}
i.e., with coupled dynamics and constraints between the substates $x_{[1]}$ and $x_{[2]}$. For such a system, we aim to compute a constraint set $\mathbb{X}_{[1]}$ over the substates $x_{[1]}$ that satisfies
\begin{align}
\label{eq:reduced_order_condition}
x_{[1]}(t) \in \mathbb{X}_{[1]} \Rightarrow y(t) \in \mathcal{Y}, && \forall \ t \geq 0.
\end{align}
This problem is encountered, for example, in reduced-order control~\cite{Kalabic2012},\cite{Sopasakis2013}, decentralized control~\cite{Mulagaleti2021},\cite{Riverso2013}, etc., in which a subset of states $x_{[2]}$ are inaccessible to the controller that selects the control input $u \in \mathcal{U}$, but safe operation must be ensured, i.e., $y \in \mathcal{Y}$. Then, the output constraints must be translated to constraints on the accessible states $x_{[1]}$. 
Under the assumption that $\rho(A_{[22]})<1$, a key observation towards computing such a set $\mathbb{X}_{[1]}$ is that if $x_{[1]}(t) \in \mathbb{X}_{[1]}$ for all $t\geq 0$, then substate $x_{[2]}$ is constrained to the corresponding mRPI set for the system $x_{[2]}(t+1)=A_{[22]}x_{[2]}(t)+A_{[21]}x_{[1]}(t)$, i.e., 
\begin{align*}
x_{[1]}(t) \in \mathbb{X}_{[1]} \Rightarrow x_{[2]}(t) \in \mathcal{X}_{\mathrm{m}}(\mathbb{X}_{[1]}):=\bigoplus_{t=0}^{\infty} A^t_{[22]}A_{[21]}\mathbb{X}_{[1]}
\end{align*}
for all $t\geq 0$. This implies that the desired set $\mathbb{X}_{[1]}$ satisfying~\eqref{eq:reduced_order_condition} can be computed by solving the optimization problem
\vspace{-15pt}
\begin{subequations}
\label{eq:reduced_order_problem}
\begin{align}
\min_{\0 \in \mathbb{X}_{[1]}} & \quad \mathrm{d}_{\mathcal{Y}}(C_{[2]}\mathcal{X}_{\mathrm{m}}(\mathbb{X}_{[1]})\oplus C_{[1]}\mathbb{X}_{[1]}) 
\\
\text{s.t.} &  \quad C_{[2]}\mathcal{X}_{\mathrm{m}}(\mathbb{X}_{[1]})\oplus C_{[1]}\mathbb{X}_{[1]} \subseteq \mathcal{Y}.
\end{align}
\end{subequations}
The problem at hand is similar to Problem~\eqref{eq:orig_problem_to_solve} with the disturbance set $\mathcal{W}$ represented by $\mathbb{X}_{[1]}$. Hence, it can be tackled using the methods presented in this paper, following which feedback controllers can be designed for the subsystem $x_{[1]}(t+1)=\mathrm{f}(x_{[1]}(t),\tilde{w}(t),u(t))$
with constraints $x_{[1]} \in \mathbb{X}_{[1]}$, $u \in \mathcal{U}$, and disturbances $\tilde{w} \in \mathcal{X}_{\mathrm{m}}(\mathbb{X}_{[1]})$. These controllers will ensure that the system constraints $y \in \mathcal{Y}$ are always satisfied.
\begin{remark}
In order to successfully design such robust controllers, the set $\mathbb{X}_{[1]}$ must at least contain a robust control invariant (RCI) set for the controlled subsystem under the action of disturbances $\tilde{w} \in \mathcal{X}_{\mathrm{m}}(\mathbb{X}_{[1]})$, i.e., the condition
\begin{align}
\label{eq:RCI_condition}
\exists \ \ \begin{matrix} x_{[1]} \in \mathbb{X}_{[1]}, \ u \in \mathcal{U} \end{matrix} \ \  : \ \  \mathrm{f}(x_{[1]},\mathcal{X}_{\mathrm{m}}(\mathbb{X}_{[1]}),u) \subseteq \mathbb{X}_{[1]}
\end{align}
%
must be satisfied. Enforcing condition~\eqref{eq:RCI_condition} as a constraint in Problem~\eqref{eq:reduced_order_problem} is a subject of future research. It is worth remarking that in the decentralized MPC scheme proposed in~\cite{Mulagaleti2020}, such constraints were successfully incorporated in the procedure to compute $\mathbb{X}_{[1]}$. This was facilitated by the use of explicit RPI sets to approximate mRPI set $\mathcal{X}_{\mathrm{m}}(\mathbb{X}_{[1]})$.  $\hfill\square$
\end{remark}

As a numerical example, we consider an LTI system, i.e., with $\mathrm{f}(x_{[1]},x_{[2]},u)=A_{[11]}x_{[1]}+A_{[12]}x_{[2]}+B_{[1]}u$, where
\begin{align*}
&\scalemath{0.85}{
\begin{bmatrix}
\overset{A_{[11]}}{\overbrace{
\begin{matrix*}[r]
1.0000  &  1.0000 \\ 
0  &  1.0000 \end{matrix*}}}
& \vline &
\overset{A_{[12]}}{\overbrace{
\begin{matrix*}[r] -0.0524 &  -0.3299 &   0.3061  &  0.2773 \\ -0.0048  & -0.1020  &  0.1244  & -0.1044\end{matrix*} }} 
\vspace{2pt} \\ \hline \vspace{-4pt} \\
\underset{A_{[21]}}{\underbrace{
\begin{matrix*}[r]
0   & 0.0204 \\
0   & 0.0344 \\
0   &-0.0339 \\
0   & 0.0134	
\end{matrix*}}} & \vline &
\underset{A_{[22]}}{\underbrace{
\begin{matrix*}[r] 
-0.0790  &  0.2854 &  -0.0377  &  0.6949 \\
0.2854  & -0.2284   & 0.2752   & 0.3536 \\
-0.0377  &  0.2752 &   0.6021  & -0.2824 \\
0.6949   & 0.3536 &  -0.2824  & -0.0129
\end{matrix*} 
}}	
\end{bmatrix},}
\vspace{5pt}\\
&\scalemath{0.85}{
\begin{bmatrix}
\underset{C_{[1]}}{\underbrace{
\begin{matrix*}[r]
0.9407 &  -0.3282 \\
-0.6624&   -0.7257
\end{matrix*}
}}
& \vline &
\underset{C_{[2]}}{\underbrace{
\begin{matrix*}[r] 
0.8716  &  0.3587   & 0.2407 &   0.5116 \\
-0.1863 &   0.1624 &   0.7122 &   1.7494
\end{matrix*}
}}
\end{bmatrix},}
\end{align*}
$B_{[1]}=[0.5 \quad 1]^{\top}$. For this system, we synthesize a set $\mathbb{X}_{[1]}$ by solving Problem~\eqref{eq:reduced_order_problem}. To this end, we use the procedure presented in this paper with parameters $s=150$, $\lambda=2.931 \times 10^{-5}$, $\alpha=5.195 \times 10^{-4}$ that are computed by Algorithm~\ref{alg:SAL_solving_procedure} 
for $\mu=10^{-3}$ and $\gamma=0.1$ in $0.44$s. We parametrize the set $\mathbb{B}(\epsilon)$ used to define $\mathrm{d}_{\mathcal{Y}}(\cdot)$ in Equation~\eqref{eq:objective_exact} with $n_B=8$ vectors $\{H_i^{\top}, i \in \mathbb{I}_1^{n_B}\}$ sampled uniformly from the surface of $\mathcal{B}_{\infty}^2$. Finally, we parametrize the disturbance set in \eqref{eq:convex_hull_of_boxes} with $N=4$ boxes in $\R^2$, and select $l=s=150$ to formulate constraint \eqref{eq:final_formulation_to_solve:con4}. In order to solve the resulting Problem~\eqref{eq:equivalent_SF_form_nonlinear}, we first implement Algorithm~\ref{alg:alg2}. The algorithm was initialized with each $\beta^1_{[itj]},\beta^2_{[ij]}=1/N$, and it terminates in $\iota=16$ iterations with $\norm{\epsilon}_1=0.9726$, consuming an average of $0.7655$s per iteration. We then use the output of the algorithm to initialize IPOPT to solve Problem~\eqref{eq:equivalent_SF_form_nonlinear}. At termination, we obtain $\norm{\epsilon}_1=0.8828$. In Figure~\ref{figure:RMPC_plot1}, we show the sets obtained at the termination of Algorithm~\ref{alg:alg2} and IPOPT. 
\begin{figure}[t]
\centering
\hspace*{-0.85cm}
{
\resizebox{0.58\textwidth}{!}
{
\begin{tikzpicture}
\begin{scope}[xshift=0cm]
\node[draw=none,fill=none](tanks_fig) {\includegraphics[trim=20 0 0 0,clip,scale=1]{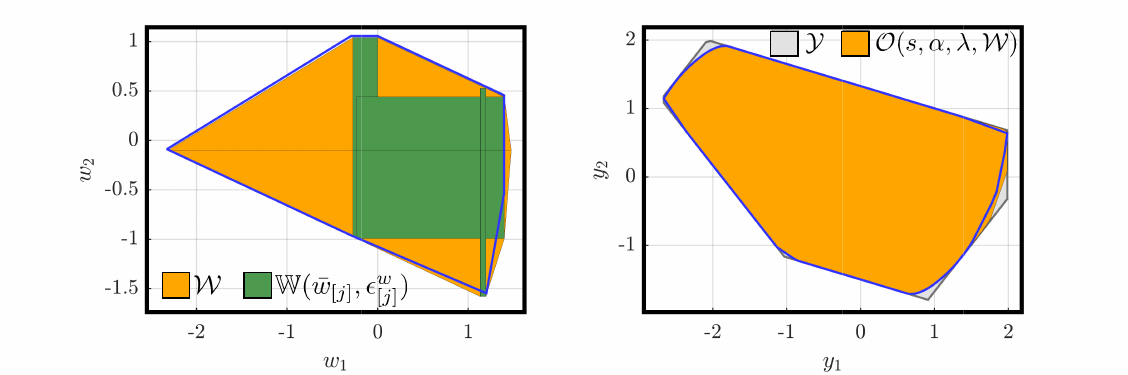}};
\end{scope}
\end{tikzpicture}}
}
\captionsetup{width=1\linewidth}
\caption{Results obtained by approximating Problem~\eqref{eq:reduced_order_problem} as Problem~\eqref{eq:equivalent_SF_form_nonlinear} to compute $\mathbb{X}_{[1]}=\mathcal{W}$. The blue lines indicate the boundaries of the corresponding sets obtained at termination of Algorithm~\ref{alg:alg2}. The set $\mathcal{Y}$ is a randomly generated polytope.}
\label{figure:RMPC_plot1}
\end{figure}

Using the computed set $\mathbb{X}_{[1]}$, we synthesize a tube-based robust model predictive controller~\cite{Mayne2005} for the subsystem $x_{[1]}(t+1)=A_{[11]}x_{[1]}(t)+B_{[1]}u(t)+A_{[12]}\tilde{w}(t)$ with state $x_{[1]} \in \mathbb{X}_{[1]}$, input $u \in \{u : \norm{u}_{\infty} \leq 0.7\}$, and disturbances $\tilde{w} \in \mathcal{R}(s,\alpha,\lambda,\mathbb{X}_{[1]}) \supseteq \mathcal{X}_{\mathrm{m}}(\mathbb{X}_{[1]})$. The goal of the controller is to track references over the output $[1 \ 0]x_{[1]}$. For details regarding this implementation, we refer the reader to~\cite[Chapter 5]{MulagaletiThesis}. In Figure~\ref{figure:RMPC_plot2}, we show the closed-loop performance of this controller. As expected, we guarantee constraint satisfaction for the overall system by enforcing $x_{[1]} \in \mathbb{X}_{[1]}$. 	
\begin{figure}[t]
\centering
\hspace*{-0.75cm}
{
\resizebox{0.55\textwidth}{!}
{
\begin{tikzpicture}
\begin{scope}[xshift=0cm]
\node[draw=none,fill=none](tanks_fig) {\includegraphics[trim=20 0 0 0,clip,scale=1]{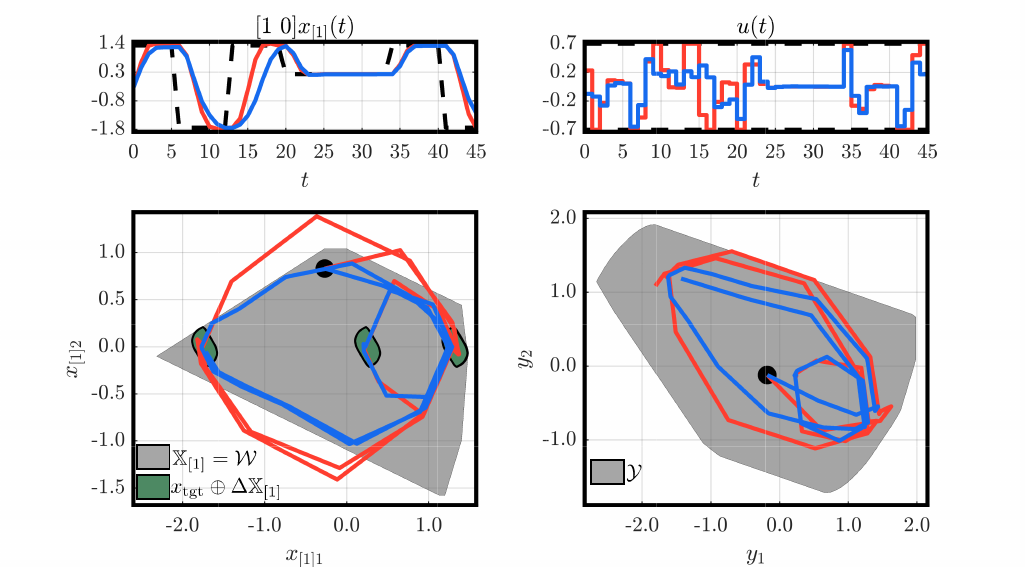}};
\end{scope}
\end{tikzpicture}}
}
\captionsetup{width=1\linewidth}
\caption{The blue lines in the bottom plots depict closed-loop performance of the reduced-order robust MPC controller. The set $\Delta \mathbb{X}_{[1]}$ is the RPI set for the controlled system in tube MPC~\cite{Mayne2005}. By enforcing $x_{[1]} \in \mathbb{X}_{[1]}$, constraints $y \in \mathcal{Y}$ are satisfied without measurements of the state $x_{[2]}$. The red lines indicate the closed-loop performance of a full-order MPC controller, which has measurements of $x_{[2]}$. While the reduced-order scheme is more conservative, it ensures safe operation without requiring $x_{[2]}$. The black dots indicate $x_{[1]}(0)$
and $y(0)$ in bottom-left and bottom-right plots respectively.}
\label{figure:RMPC_plot2}
\end{figure}	
}
\section{Conclusions}
We presented a method to compute a disturbance set for a discrete-time LTI system such that the output-reachable set is contained inside an assigned output constraint set, while maximizing its coverage. To this end, we formulated an optimization problem using implicit RPI sets that provide a priori approximation error guarantees, and used a novel disturbance set parameterization that permitted the encoding of feasible disturbance sets as a polyhedron. We then present a solution method for the resulting optimization problem. 
We demonstrated that the proposed approach outperforms state-of-the-art, both with respect to conservativeness and computational efficiency.
Future research will focus on: ($a$) Extending the technique to accomodate linear difference inclusions \cite{Kouramas2005}, ($b$) Encoding inclusion~\eqref{eq:final_formulation_to_solve:con4}, i.e., $\mathcal{Y} \subseteq \mathcal{S}(l,\mathcal{W}) \oplus \mathbb{B}(\epsilon)$ directly using a hyperplane representation of $\mathcal{Y}$.

\bibliography{references}

\newcommand{\noop}[1]{}
\begin{thebibliography}{10}

\bibitem{Blanchini2015}
F.~Blanchini and S.~Miani, {\em Set-Theoretic Methods in Control}.
\newblock 01 2007.

\bibitem{Rawlings2009}
J.~Rawlings and D.~Mayne, {\em Model Predictive Control: Theory and Design}.
\newblock 01 2009.

\bibitem{Kouvaritakis2015}
B.~Kouvaritakis and M.~Cannon, ``Model predictive control: Classical, robust
  and stochastic,'' 2015.

\bibitem{Garone2017}
E.~Garone, S.~D. Cairano, and I.~Kolmanovsky, ``Reference and command governors
  for systems with constraints: A survey on theory and applications,'' {\em
  Automatica}, vol.~75, pp.~306 -- 328, 2017.

\bibitem{Olaru2008}
S.~Olaru, J.~D. Don\'{a}, and M.~Seron, ``Positive invariant sets for fault
  tolerant multisensor control schemes,'' {\em IFAC Proceedings Volumes},
  vol.~41, no.~2, pp.~1224 -- 1229, 2008.
\newblock 17th IFAC World Congress.

\bibitem{Olaru2010}
S.~Olaru, J.~D. Doná, M.~Seron, and F.~Stoican, ``Positive invariant sets for
  fault tolerant multisensor control schemes,'' {\em International Journal of
  Control}, vol.~83, no.~12, pp.~2622--2640, 2010.

\bibitem{Gilbert1991}
E.~G. {Gilbert} and K.~T. {Tan}, ``Linear systems with state and control
  constraints: the theory and application of maximal output admissible sets,''
  {\em IEEE Transactions on Automatic Control}, vol.~36, pp.~1008--1020, Sep.
  1991.

\bibitem{Kolmanovsky1998}
I.~Kolmanovsky and E.~G. Gilbert, ``Theory and computation of disturbance
  invariant sets for discrete-time linear systems,'' {\em Mathematical Problems
  in Engineering}, vol.~4, no.~4, pp.~317--367, 1998.

\bibitem{Pluymers2005}
B.~Pluymers, J.~Rossiter, J.~Suykens, and B.~De~Moor, ``The efficient
  computation of polyhedral invariant sets for linear systems with polytopic
  uncertainty,'' in {\em Proceedings of the 2005, American Control Conference,
  2005.}, pp.~804--809 vol. 2, 2005.

\bibitem{Rakovic2005}
S.~V. {Rakovi\'{c}}, E.~C. {Kerrigan}, K.~I. {Kouramas}, and D.~Q. {Mayne},
  ``Invariant approximations of the minimal robust positively invariant set,''
  {\em IEEE Transactions on Automatic Control}, vol.~50, pp.~406--410, March
  2005.

\bibitem{Rakovic2006}
S.~V. {Rakovi\'c} and K.~I. {Kouramas}, ``The minimal robust positively
  invariant set for linear discrete time systems: Approximation methods and
  control applications,'' in {\em Proceedings of the 45th IEEE Conference on
  Decision and Control}, pp.~4562--4567, Dec 2006.

\bibitem{Rakovic2007}
S.~V. {Rakovi\'c}, ``Minkowski algebra and banach contraction principle in set
  invariance for linear discrete time systems,'' in {\em 2007 46th IEEE
  Conference on Decision and Control}, pp.~2169--2174, Dec 2007.

\bibitem{Trodden2016}
P.~{Trodden}, ``A one-step approach to computing a polytopic robust positively
  invariant set,'' {\em IEEE Transactions on Automatic Control}, vol.~61,
  pp.~4100--4105, Dec 2016.

\bibitem{Riverso2013}
S.~{Riverso}, M.~{Farina}, and G.~{Ferrari-Trecate}, ``Plug-and-play
  decentralized model predictive control for linear systems,'' {\em IEEE
  Transactions on Automatic Control}, vol.~58, pp.~2608--2614, Oct 2013.

\bibitem{Mulagaleti2021}
S.~K. {Mulagaleti}, A.~{Bemporad}, and M.~{Zanon}, ``Computation of
  least-conservative state-constraint sets for decentralized {MPC} with dynamic
  and constraint coupling,'' {\em IEEE Control Systems Letters}, vol.~5, no.~1,
  pp.~235--240, 2021.

\bibitem{Flores2008}
J.~Flores, D.~Eckhard, and J.~G. {da Silva}, ``On the tracking problem for
  linear systems subject to control saturation,'' {\em IFAC Proceedings
  Volumes}, vol.~41, no.~2, pp.~14168--14173, 2008.
\newblock 17th IFAC World Congress.

\bibitem{Seron2019}
M.~M. Seron, S.~Olaru, F.~Stoican, J.~A.~D. Don\'{a}, and E.~J. Kofman, ``On
  finitely determined minimal robust positively invariant sets,'' in {\em 2019
  Australian \& New Zealand Control Conference (ANZCC)}, pp.~157--162, 2019.

\bibitem{Rakovic2013}
S.~V. Rakovi\'{c}, B.~Kouvaritakis, and M.~Cannon, ``Equi-normalization and
  exact scaling dynamics in homothetic tube model predictive control,'' {\em
  Systems \& Control Letters}, vol.~62, no.~2, pp.~209 -- 217, 2013.

\bibitem{Martinez2015}
J.~J. Mart\'{i}nez, ``Minimal {RPI} sets computation for polytopic systems
  using the bounded-real lemma and a new shrinking procedure,'' {\em
  IFAC-PapersOnLine}, vol.~48, no.~26, pp.~182--187, 2015.
\newblock 1st IFAC Workshop on Linear Parameter Varying Systems LPVS 2015.

\bibitem{Tan2019}
J.~Tan, S.~Olaru, K.~Ampountolas, J.~J.~M. Molina, and F.~Xu, ``A novel rpi set
  computation method for discrete-time lpv systems with bounded
  uncertainties,'' in {\em 2019 IEEE 15th International Conference on Control
  and Automation (ICCA)}, pp.~946--951, 2019.

\bibitem{Liu2019}
C.~Liu, F.~Tahir, and I.~M. Jaimoukha, ``Full-complexity polytopic robust
  control invariant sets for uncertain linear discrete-time systems,'' {\em
  International Journal of Robust and Nonlinear Control}, vol.~29, no.~11,
  pp.~3587--3605, 2019.

\bibitem{Tahir2015}
F.~Tahir and I.~M. Jaimoukha, ``Low-complexity polytopic invariant sets for
  linear systems subject to norm-bounded uncertainty,'' {\em IEEE Transactions
  on Automatic Control}, vol.~60, no.~5, pp.~1416--1421, 2015.

\bibitem{DeLaPena2008}
{D. Muñoz de la Peña}, T.~Alamo, M.~Lazar, and W.~Heemels, ``Design of
  {ISS}-lyapunov functions for discrete-time linear uncertain systems,'' {\em
  IFAC Proceedings Volumes}, vol.~41, no.~2, pp.~1135--1140, 2008.
\newblock 17th IFAC World Congress.

\bibitem{Mulagaleti2022}
S.~K. Mulagaleti, A.~Bemporad, and M.~Zanon, ``Data-driven synthesis of robust
  invariant sets and controllers,'' {\em IEEE Control Systems Letters}, vol.~6,
  pp.~1676--1681, 2022.

\bibitem{Mulagaleti2020}
S.~K. Mulagaleti, A.~Bemporad, and M.~Zanon, ``Computation of input disturbance
  sets for constrained output reachability,'' {\em IEEE Transactions on
  Automatic Control}, pp.~1--8, 2022.

\bibitem{Mulagaleti2022_SmootheningPDIP}
S.~K. Mulagaleti, A.~Bemporad, and M.~Zanon, ``A smoothening-based algorithm
  for the computation of input disturbance sets guaranteeing constrained output
  reachability.'' submitted, 2020.

\bibitem{DeSantis1994}
E.~De~Santis, ``On positively invariant sets for discrete-time linear systems
  with disturbance: an application of maximal disturbance sets,'' {\em IEEE
  Transactions on Automatic Control}, vol.~39, no.~1, pp.~245--249, 1994.

\bibitem{Kalabic2012}
U.~{Kalabi\'{c}}, I.~{Kolmanovsky}, J.~{Buckland}, and E.~{Gilbert}, ``Reduced
  order reference governor,'' in {\em 2012 IEEE 51st IEEE Conference on
  Decision and Control (CDC)}, pp.~3245--3251, 2012.

\bibitem{Darup2016}
M.~S. {Darup}, R.~M. {Schaich}, and M.~{Cannon}, ``Parametric robust positively
  invariant sets for linear systems with scaled disturbances,'' in {\em 2016
  IEEE 55th Conference on Decision and Control (CDC)}, pp.~1496--1501, Dec
  2016.

\bibitem{Yang2020}
L.~Yang, D.~Rizzo, M.~Castanier, and N.~Ozay, ``Parameter sensitivity analysis
  of controlled invariant sets via value iteration,'' in {\em 2020 American
  Control Conference (ACC)}, pp.~4737--4744, 2020.

\bibitem{Schneider2013}
R.~Schneider, {\em Convex Bodies: The Brunn–Minkowski Theory}.
\newblock Encyclopedia of Mathematics and its Applications, Cambridge
  University Press, 2~ed., 2013.

\bibitem{boyd2004convex}
S.~Boyd and L.~Vandenberghe, {\em Convex optimization}.
\newblock Cambridge university press, 2004.

\bibitem{Hogb06}
L.~Hogben, ed., {\em Handbook of Linear Algebra}.
\newblock Boca Raton, FL, USA: CRC Press, 2006.

\bibitem{Sadraddini2019}
S.~Sadraddini and R.~Tedrake, ``Linear encodings for polytope containment
  problems,'' in {\em 2019 IEEE 58th Conference on Decision and Control (CDC)},
  pp.~4367--4372, 2019.

\bibitem{Althoff2011}
M.~Althoff and B.~H. Krogh, ``Zonotope bundles for the efficient computation of
  reachable sets,'' in {\em 2011 50th IEEE Conference on Decision and Control
  and European Control Conference}, pp.~6814--6821, 2011.

\bibitem{Scott2016}
J.~K. Scott, D.~M. Raimondo, G.~R. Marseglia, and R.~D. Braatz, ``Constrained
  zonotopes: A new tool for set-based estimation and fault detection,'' {\em
  Automatica}, vol.~69, pp.~126--136, 2016.

\bibitem{NoceWrig06}
J.~Nocedal and S.~J. Wright, {\em Numerical Optimization}.
\newblock New York, NY, USA: Springer, second~ed., 2006.

\bibitem{MulagaletiThesis}
S.~K. {Mulagaleti}, {\em Invariant Set-based Methods for the Computation of
  Input and Disturbance Sets}.
\newblock PhD thesis, 01 2023.

\bibitem{Guibas2003}
L.~J. Guibas, A.~Nguyen, and L.~Zhang, ``Zonotopes as bounding volumes,'' in
  {\em Proceedings of the Fourteenth Annual ACM-SIAM Symposium on Discrete
  Algorithms}, SODA '03, (USA), p.~803–812, Society for Industrial and
  Applied Mathematics, 2003.

\bibitem{mosek}
{MOSEK ApS}, {\em The MOSEK optimization toolbox for MATLAB manual. Version
  9.0.}, 2019.

\bibitem{gurobi}
L.~Gurobi~Optimization, ``Gurobi optimizer reference manual,'' 2019.

\bibitem{Wchter2005}
A.~W\"{a}chter and L.~T. Biegler, ``On the implementation of an interior-point
  filter line-search algorithm for large-scale nonlinear programming,'' {\em
  Mathematical Programming}, vol.~106, pp.~25--57, Apr. 2005.

\bibitem{MPT3}
M.~Herceg, M.~Kvasnica, C.~Jones, and M.~Morari, ``{Multi-Parametric Toolbox
  3.0},'' in {\em Proc.~of the European Control Conference}, (Z\"urich,
  Switzerland), pp.~502--510, July 17--19 2013.

\bibitem{Sopasakis2013}
P.~Sopasakis, D.~Bernardini, and A.~Bemporad, ``Constrained model predictive
  control based on reduced-order models,'' in {\em 52nd IEEE Conference on
  Decision and Control}, pp.~7071--7076, 2013.

\bibitem{Mayne2005}
D.~Mayne, M.~Seron, and S.~Rakovi\'c, ``Robust model predictive control of
  constrained linear systems with bounded disturbances,'' {\em Automatica},
  vol.~41, no.~2, pp.~219 -- 224, 2005.

\bibitem{Kouramas2005}
K.~Kouramas, S.~Rakovic, E.~Kerrigan, J.~Allwright, and D.~Mayne, ``On the
  minimal robust positively invariant set for linear difference inclusions,''
  in {\em Proceedings of the 44th IEEE Conference on Decision and Control},
  pp.~2296--2301, 2005.

\end{thebibliography}
\end{document}